\documentclass{ws-ijqi}

\usepackage{blindtext}





\usepackage[english]{babel}



\usepackage{bbm}
\usepackage{physics}
\usepackage{amssymb}
\usepackage{amsmath}
\usepackage{graphicx}
\usepackage{hyperref}
\hypersetup{
    colorlinks=true,
    allcolors=red,
    linkcolor=red,
    filecolor=black,      
    urlcolor=black, 
    }
\usepackage[super,sort,compress]{cite}
\usepackage{setspace}
\usepackage{revsymb}
\usepackage{mathrsfs}
\newtheorem{mydefinition}{Definition}
\newtheorem{mylemma}[mydefinition]{Lemma}
\newtheorem{mycorollary}[mydefinition]{Corollary}
\newtheorem{mytheorem}[mydefinition]{Theorem}
\newtheorem{myremark}[mydefinition]{Remark}
\newtheorem{myproposition}[mydefinition]{Proposition}

\newtheorem{myconjecture}[mydefinition]{Conjecture}


\global\long\def\RR{\mathbb{R}}
\global\long\def\CC{\mathbb{C}}

\global\long\def\EE{\mathbb{E}}


\def \calA {\mathcal{A}}
\def \calB {\mathcal{B}}

\def \calD {\mathcal{D}}
\def \calE {\mathcal{E}}

\def \calH {\mathcal{H}}

\def \calL {\mathcal{L}}
\def \calM {\mathcal{M}}
\def \calN {\mathcal{N}}

\def \calP {\mathcal{P}}

\def \calS {\mathcal{S}}

\def \RR {\mathbb{R}}



\newcommand{\sq}[1]{\left[{#1}\right]}
\newcommand{\round}[1]{\left({#1}\right)}

\def \bV {W}
\def \bA {{A_R}}
\def \bB {{B_R}}

\def \bAn {A_R^n}
\def \bBn {B_R^n}

\def\tensor{\otimes}

\def \Wchannel {\calN}


\usepackage{stackengine}
\def\deq{\mathrel{\ensurestackMath{\stackon[1pt]{=}{\scriptstyle\Delta}}}}



\def\hmin{H_{\min}}
\def\hmax{H_{\max}}

\def\squareforqed{\hbox{\rlap{$\sqcap$}$\sqcup$}}
\def\qed{\ifmmode\squareforqed\else{\unskip\nobreak\hfil
\penalty50\hskip1em\null\nobreak\hfil\squareforqed
\parfillskip=0pt\finalhyphendemerits=0\endgraf}\fi}
\def\endenv{\ifmmode\;\else{\unskip\nobreak\hfil
\penalty50\hskip1em\null\nobreak\hfil\;
\parfillskip=0pt\finalhyphendemerits=0\endgraf}\fi}

\mathchardef\ordinarycolon\mathcode`\:
\mathcode`\:=\string"8000
\def\vcentcolon{\mathrel{\mathop\ordinarycolon}}
\begingroup \catcode`\:=\active
  \lowercase{\endgroup
  \let :\vcentcolon
  }

\newcommand{\nc}{\newcommand}
\nc{\proj}[1]{| #1\rangle\!\langle #1 |}
\nc{\avg}[1]{\langle#1\rangle}
\nc{\smfrac}[2]{\mbox{$\frac{#1}{#2}$}}
\nc\ox{\otimes}
\nc{\dg}{\dagger}
\nc{\dn}{\downarrow}
\nc{\cA}{{\cal A}}
\nc{\cB}{{\cal B}}
\nc{\cC}{{\cal C}}
\nc{\cD}{{\cal D}}
\nc{\cE}{{\cal E}}
\nc{\cF}{{\cal F}}
\nc{\cG}{{\cal G}}
\nc{\cH}{{\cal H}}
\nc{\cI}{{\cal I}}
\nc{\cJ}{{\cal J}}
\nc{\cK}{{\cal K}}
\nc{\cL}{{\cal L}}
\nc{\cM}{{\cal M}}
\nc{\cN}{{\cal N}}
\nc{\cO}{{\cal O}}
\nc{\cP}{{\cal P}}
\nc{\cQ}{{\cal Q}}
\nc{\cR}{{\cal R}}
\nc{\cS}{{\cal S}}
\nc{\cT}{{\cal T}}
\nc{\cU}{{\cal U}}
\nc{\cW}{{\cal W}}
\nc{\cX}{{\cal X}}
\nc{\cY}{{\cal Y}}
\nc{\cZ}{{\cal Z}}
\nc{\csupp}{{\operatorname{csupp}}}
\nc{\qsupp}{{\operatorname{qsupp}}}
\nc{\Var}{{\operatorname{Var}}}
\nc{\rar}{\rightarrow}
\nc{\lrar}{\longrightarrow}
\nc{\polylog}{{\operatorname{polylog}}}
\nc{\1}{{\openone}}




\nc\id{{\operatorname{id}}}









\newcounter{myctr}
\def\myitem{\refstepcounter{myctr}\bibfont\noindent\ifnum\themyctr>9\else\phantom{0}\fi\hangindent17pt\themyctr.\enskip}

\begin{document}

\title{QUANTUM SOFT-COVERING LEMMA WITH APPLICATIONS TO RATE-DISTORTION CODING, RESOLVABILITY AND IDENTIFICATION VIA QUANTUM CHANNELS
}


\author{TOUHEED ANWAR ATIF {\&} S. SANDEEP PRADHAN}

\address{Department of Electrical Engineering and Computer Science,\\
University of Michigan, Ann Arbor, MI 48109, USA.\\
\texttt{\{touheed,pradhanv\}@umich.edu}}

\author{ANDREAS WINTER}

\address{ICREA {\&} Grup d'Informaci\'o Qu\`antica, Departament de F\'isica,\\ Universitat Aut\`onoma de Barcelona, 08193 Bellaterra (Barcelona), Spain.\\[2mm]
Institute for Advanced Study, Technische Universit\"at M\"unchen,\\ Lichtenbergstra{\ss}e 2a, 85748 Garching, Germany.\\[2mm]
QUIRCK--Quantum Information Independent Research Centre Kessenich,\\ Gerhard-Samuel-Stra{\ss}e 14, 53129 Bonn, Germany.\\
\texttt{andreas.winter@uab.cat}}

\maketitle

\begin{history}
\received{24 July 2023}
\revised{20 April 2024}
\end{history}

\begin{abstract}
We propose a quantum soft-covering problem for a given general quantum channel and one of its output states, which consists in finding the minimum rank of an input state needed to approximate the given channel output. We then prove a one-shot quantum covering lemma in terms of smooth min-entropies by leveraging decoupling techniques from quantum Shannon theory. 
This covering result is shown to be equivalent to a coding theorem for rate distortion under a posterior (reverse) channel distortion criterion by two of the present authors. 
Both one-shot results directly yield corollaries about the i.i.d.~asymptotics, in terms of the coherent information of the channel.  

The power of our quantum covering lemma is demonstrated by two additional applications: first, we formulate a quantum channel resolvability problem, and provide one-shot as well as asymptotic upper and lower bounds. 
Secondly, we provide new upper bounds on the unrestricted and simultaneous identification capacities of quantum channels, in particular separating for the first time the simultaneous identification capacity from the unrestricted one, proving a long-standing conjecture of the last author.
\end{abstract}

\markboth{Touheed Anwar Atif, S. Sandeep Pradhan and Andreas Winter}{Quantum Soft Covering Lemma with Applications\ldots}



\section{Introduction}
\label{QSC:sec:intro}
Covering and packing are two fundamental principles in information theory that are essential to the design and analysis of coding systems achieving the fundamental limits. Naturally, these concepts are also important in the framework of quantum information theory and have been developed in analogy with the classical case.
Indeed, packing problems have been extensively studied in the quantum information literature, mostly related to the design of classical, quantum or hybrid error-correcting codes \cite{holevo1998capacity,schumacher1997sending,gottesman2002introduction,lidar2013quantum}. Problems of a covering nature have shown up naturally in other contexts, most prominently in wiretap and similar cryptographic settings \cite{CaiWinterYeung:q-wiretap,devetak2005private,DevetakWinter:key-PRL}, but also in channel and correlation synthesis and, crucially, converse theorems to identification coding \cite{ahlswede2002strong,Sen-oneshot-covering,winter,bennett2014quantum,ChenWinter:EoP,shen2022optimal}.

The origin of this line of research can be traced back to the work of Wyner in not one but two articles \cite{Wyner:wiretap,Wyner:CR}, the first introducing the idea of randomizing over a channel code to confound a wiretapper on a communication channel, the second defining and solving the task of distributed agents generating classical i.i.d. samples of local random variables $X$ and $Y$ distributed according to a joint distribution $P_{XY}$ from shared randomness; the optimal rate of initial shared randomness is now called Wyner's common information. 
In the classical setup, this has been further developed toward the problem of distributed channel synthesis in the classical reverse Shannon theorem \cite{crst}, and the subsequent unifying work of Cuff \emph{et al.} \cite{cuff2010coordination,cuff2013distributed}.

In the quantum setting, Ahlswede and Winter formulated a classical-quantum soft-covering lemma, and showed that classical randomness at a rate of the Holevo information (with respect to a classical-quantum ensemble) can generate the output of a classical quantum channel \cite{ahlswede2002strong}.
Evolving from earlier works \cite{massar2000amount,winter2001compression}, Winter \cite{winter} successfully performed an information theoretic study of quantum measurements and formulated
the fundamental task of compressing or ``faithfully simulating'' quantum measurements. 
An instrumental tool in obtaining the result was a the classical-quantum soft-covering lemma. 
In both classical and  quantum settings, the problem of soft-covering has also found  applications in the quantum reverse Shannon theorem \cite{bennett2014quantum}. Recent contributions to the study of classical-quantum soft-covering also include Refs.~\citen{cheng2022error,shen2022optimal,atif2021distributed,anshu2019convex}. In particular Refs.~\citen{atif2021distributed,anshu2019convex} addressed soft-covering lemmas for pairwise independent ensembles. 

In the present article, we formulate and address the problem of quantum soft-covering in its natural generality. The task 
can be described in simple terms as follows: given a quantum channel $\calN$ and one of its output states $\calN(\rho)$, what is the smallest rank of a quantum state $\sigma$ that reproduces or approximates the given output state, $\calN(\sigma) \approx \calN(\rho)$? The approximation here is quantified using the trace distance.

To this end, we first consider the one-shot version of the problem, where we characterize the minimum rank in terms of smoothed quantum entropic quantities \cite{tomamichel2015quantum}. Based on the one-shot result, we then study the quantum soft-covering problem in an asymptotic setting, where the objective is to approximate the output produced by $n$ independent uses of the given channel, acting on a memoryless source. We obtain a single-letter characterization of the asymptotically optimal rate of the quantum soft-covering problem in terms of the coherent information, and then also study the finite block length behavior to second order. More precisely, we obtain an achievability bound for the specific portion of the performance limit that scales proportionally as $1/\sqrt{n}$. 
Such a study was first introduced by Strassen in 1962 for the problem of classical communication \cite{strassen1962asymptotische}, and has been the topic of interest for many subsequent works \cite{hayashi2009information,polyanskiy2010channel,yassaee2013technique,tomamichel2013hierarchy,li2014second,beigi2014quantum,beigi2016decoding}.

In order to showcase the power of our quantum soft-covering lemma, we turn our attention to three fundamental problems in quantum information theory.
The first is the lossy quantum source coding problem under a posterior (reverse) channel distortion criterion formulated in Ref.~\citen{atif2023lossy}, where a single-letter characterization of the performance limit in terms of the coherent information was given.   
Using the quantum covering results, we provide a one-shot characterization of this problem.
Moreover, using the one-shot results, we study the 
asymptotic regime of the lossy source coding problem and recover the  asymptotic performance limit characterized in Ref.~\citen{atif2023lossy}. 


As the next application, we consider the quantum analogue of the channel resolvability problem. 
We formulate a quantum channel resolvability problem and 
show that the quantum channel resolvability rate for an arbitrary channel is upper bounded by its strong converse quantum transmission capacity. Using the one-shot converse to the soft-covering problem, we provide a lower bound to the quantum resolvability rate in terms of smooth minimum entropy. This bound also forms a new lower bound  in the classical setting, where a complete characterization of the problem is still an open question \cite{hayashi2006general}. We conjecture that, in the asymptotic setting, this lower bound is equal to the quantum transmission capacity of the channel.
As the final application, we consider the task of classical identification via quantum channels, and we are able to answer some of the open questions relating to this task, utilizing the quantum soft-covering tool. 
In detail, we obtain an upper bound on the so-called \emph{simultaneous} identification capacity of a general quantum channel $\calN$, which also forms a strong converse upper bound. This bound resolves an open question concerning the separation of the simultaneous and the unrestricted identification capacities. 
Finally, for a general quantum channel, we provide another upper bound on the unrestricted identification capacity in terms of the strong converse quantum transmission capacity of the channel. 

\medskip
The rest of the paper is organized as follows. We provide some necessary definitions and useful lemmas in Section \ref{QSC:sec:prelim}.
In Section \ref{QSC:sec:mainResults_covering}, we formulate the quantum soft-covering problem and provide the main results pertaining to the one-shot (Theorem \ref{thm:CoveringOneShotSmooth}), asymptotic (Theorem \ref{thm:asympCovering}) and second order characterizations (Theorem \ref{thm:second_order}). In Sections \ref{QSC:sec:mainResults_lossy}, \ref{QSC:sec:mainResults_resol}, and \ref{QSC:sec:mainResults_identification}, we move on to the three information theoretic applications of the covering result: the lossy source coding problem (Theorems \ref{thm:CoveringOneShotLossy} and \ref{thm:mainResult_lossy}), the channel resolvability problem (Theorem \ref{thm:resolvability}) and identification via quantum channels (Theorems \ref{thm:C-ID-sim-of-qubit} and \ref{thm:C-id-upperbound}), respectively. The proofs for the source coding theorems are provided in the Section \ref{sec:lossy_proofs}.
In Section \ref{sec:conclusion} we conclude.


\section{Preliminaries and notations}
\label{QSC:sec:prelim}
We supplement the notations from Ref.~\citen{wilde_arxivBook} with the following. Let $\1_A$ denote the identity operator acting on a Hilbert space $A$. The set of density operators on $A$ is denoted $\calD(A)$, linear operators by $\calL(A)$, and positive semi-definite operators by $\calP(A)$. The set of sub-normalized states is denoted by  $\calD_{\leq}(A) \deq \{\rho \in \calP(A): \Tr \rho \leq 1\}.$ 
We denote by $\bA$ the Hilbert space associated with the reference space of $A$, isomorphic to $A$: $\bA \simeq A$, in particular $\dim \bA = \dim A$. 
The dimension of a Hilbert space $A$ is denoted $|A|$ (recalling the cardinality of a set, and coinciding with the cardinality of any basis of $A$). 
For a linear operator $\omega$, let $|\omega| = |\operatorname{supp}\omega|$ denote its rank, namely the dimension of its support (usually reserved for Hermitian and often even semidefinite operators).
The fidelity between two states, $\rho,\sigma \in \calD(A)$, is defined as \[
F(\rho,\sigma) \deq \|\sqrt{\rho}\sqrt{\sigma}\|_1^2 = \max_{\phi, \psi} |\langle \phi | \psi \rangle|^2,
\]
where $\phi$ and $ \psi$ range over purifications of $\rho$ and $\sigma$, respectively. 
The fidelity between two sub-normalized states $\rho,\sigma \in \calD_\leq(A)$ is defined as 
\[F_{\star}(\rho,\sigma) \deq \left[ \Tr{|\sqrt{\rho}\sqrt{\sigma}|} + \sqrt{(1-\Tr\rho)(1-\Tr\sigma)}\right]^2.\]
We define 
the purified distance as $P(\rho,\sigma) \deq \sqrt{1-F_\star(\rho,\sigma)}$.  
Following the notation of Ref.~\citen{tomamichel2015quantum}, for a given sub-normalized state $\rho \in \calD_{\leq}(A)$, we use $\mathscr{B}^\epsilon(A;\rho)$ to denote the $\epsilon$-ball of sub-normalized states around $\rho$ defined as 
\[
\mathscr{B}^\epsilon(A;\rho) \deq \{\sigma \in \calD_\leq(A): P(\sigma,\rho) \leq \epsilon\},
\]
and when clear from the context, will omit the system and use $\mathscr{B}(\rho).$
For a density operator $\rho \in \calD(A)$, the von Neumann entropy is defined as $S(\rho) \deq -\Tr \rho\log\rho$. For a bipartite state $\rho^{AB} \in \calD(A \tensor B)$, the conditional entropy is defined as $S(A|B)_\rho \deq S(\rho^{AB}) - S(\rho^B)$, where $\rho^B \deq \Tr_A{\rho^{AB}}$, and the quantum mutual information $I(A:B)_\rho \deq S(\rho^A)+S(\rho^B)-S(\rho^{AB})$. 
For a given CPTP map $\calN:A \rightarrow B$, and a state $\rho \in \calD(B)$, we define the set $ \calS(\rho,\calN) \deq \{\sigma \in \calD(A): \calN(\sigma) = \rho\}$ and its smoothed variant  $\calS_\epsilon(\rho,\calN) \deq \{\sigma \in \calD(A): \|\rho - \calN(\sigma)\|_1 \leq \epsilon\}$.
For any real $x$, we define $x^+ \deq \max(x,0)$, which for a more complex argument 
 is written as $[f(x)]^+$. 

\subsection{Useful definitions}

We recall the following definitions of the well-known information quantities.



\begin{mydefinition}[Coherent information]
For a CPTP map $\calN: A \rightarrow B$, and an input density operator $\rho \in \calD(A)$, the \emph{coherent information}
of $\calN$ with respect to $\rho$ is defined as   
\[
I_c(\rho,\calN) \deq I_c(A_R \rangle B)_\omega \deq 
 - S(A_R|B)_{\omega},
\]  
where $\omega^{B A_R} \deq (\calN \otimes \id) \phi^{A A_R}_\rho$, and $\phi_{\rho}^{AA_R}$ is an arbitrary purification of $\rho_A$. 
\end{mydefinition}

\begin{mydefinition}[Holevo information]
    Given an ensemble $\{P(x),W^B_x\}_{x \in \mathcal{X}}$, where $\mathcal{X}$ is  a finite set and $W^B_x \in \mathcal{D}(B)$ are states, its \emph{Holevo information} is defined as
    \[
      I(X:B) \deq S\left(\sum_{x\in \mathcal{X}}P(x) W^B_x\right) - \sum_{x\in \mathcal{X}}P(x)S(W^B_x).
    \]
    It is the quantum mutual information of the classical-quantum state $\rho^{XB}\deq\sum_x P(x)\ketbra{x}^X \otimes W_x^B$.
\end{mydefinition}

\begin{mydefinition}[Quantum information variance]
For any two density operators $\rho,\sigma \in \calD(A)$, such that $\mbox{supp} \rho \subset \operatorname{supp} \sigma$, the \emph{quantum information variance} is defined as 
\[
V( \rho \| \sigma) \deq \Tr[\rho(\log \rho- \log \sigma)^2]- D^2(\rho \|\sigma).
\]
\end{mydefinition}

\begin{mydefinition}[Min- and max-entropy \cite{tomamichel2015quantum}]
    For $\rho^{AB} \in \calD_{\leq}(A\tensor B)$, the \emph{min-} and \emph{max-entropy} of $A$ conditioned on $B$ of the (sub-normalized) state $\rho^{AB}$ is defined as 
    \begin{align*}
        \hmin(A|B)_\rho &\deq \sup_{\sigma^B \in \calD_{\leq}(B)} \sup\{\lambda \in \RR: \rho^{AB} \leq 2^{-\lambda} \1_A \tensor \sigma^B \},  \\
        \hmax(A|B)_\rho &\deq \max_{\sigma^B \in \calD_{\leq}(B)} \log F(\rho^{AB},\1_A \tensor \sigma^B),
    \end{align*}
    respectively.
\end{mydefinition}

\begin{mydefinition}[Smoothed entropies \cite{tomamichel2015quantum}]
    For $\rho^{AB} \in \calD_{\leq}(A\tensor B)$, and $\epsilon \in (0,1)$,  the \emph{$\epsilon$-smooth min-} and \emph{max-entropy} of $A$ conditioned on $B$ of the (sub-normalized) state $\rho^{AB}$ is defined as 
    \begin{align*}
        \hmin^\epsilon(A|B)_\rho \deq \max_{\sigma\in \mathscr{B}^\epsilon(\rho)}\hmin(A|B)_{\sigma} \quad \text{and} \quad
        \hmax^\epsilon(A|B)_\rho \deq \min_{\sigma \in \mathscr{B}^\epsilon(\rho)}\hmax(A|B)_{\sigma},
    \end{align*}
    respectively.
\end{mydefinition}

\begin{mydefinition}[Smoothed max-relative entropy \cite{tomamichel2013hierarchy}]
    For a given $\rho\in \calD(A)$ and $\omega \in \calP(A)$, and $\epsilon \in (0,1)$, we define the \emph{smooth max-relative entropy} as
    \begin{align*}
        D_{\max}^\epsilon(\rho \| \omega) \deq \min_{\sigma \in \mathscr{B}^\epsilon(\rho)} \inf \{\lambda: \sigma \leq 2^\lambda\omega  \}.
    \end{align*}
\end{mydefinition}

\subsection{Useful lemmas}
\label{sec:usefulLemmas}
We need the following results that relate fidelity and trace norm.
\begin{mylemma}[{Fuchs/van~de~Graaf\cite{fuchs1999cryptographic}, see also~\cite[Thm.~9.3.1]{wilde_arxivBook}}]
\label{lem:relationshipTraceFidelity}
    For any two states $\rho,\sigma \in \calD(A)$, we have
    \begin{equation*}
      1-\sqrt{F(\rho,\sigma)} \leq \frac{1}{2} \norm{\rho-\sigma}_1 \leq \sqrt{1-F(\rho,\sigma)}.
    \end{equation*}
\end{mylemma}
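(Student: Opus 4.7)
The plan is to reduce the mixed-state inequality to two independent one-sided arguments, one via purification (for the upper bound on trace distance) and one via measurement (for the lower bound). I first recall the two sharp pure-state identities that underlie everything: for unit vectors $\phi,\psi$, one has $F(\proj{\phi},\proj{\psi})=|\langle\phi|\psi\rangle|^2$, and a direct spectral computation on the rank-$2$ operator $\proj{\phi}-\proj{\psi}$ yields $\tfrac{1}{2}\|\proj{\phi}-\proj{\psi}\|_1=\sqrt{1-|\langle\phi|\psi\rangle|^2}$. Both sides of the lemma are thus equalities in the pure-state case, which is the backbone of the argument.

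For the right-hand (upper) inequality, I would invoke Uhlmann's theorem to pick purifications $|\phi\rangle,|\psi\rangle$ of $\rho,\sigma$ on an enlarged Hilbert space $A\otimes A_R$ such that $|\langle\phi|\psi\rangle|^2=F(\rho,\sigma)$. Monotonicity of the trace norm under the partial trace $\Tr_{A_R}$ gives
\[
\tfrac{1}{2}\|\rho-\sigma\|_1 \leq \tfrac{1}{2}\|\proj{\phi}-\proj{\psi}\|_1 = \sqrt{1-|\langle\phi|\psi\rangle|^2} = \sqrt{1-F(\rho,\sigma)},
\]
which is exactly the claimed bound.

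For the left-hand (lower) inequality, I would use the measurement characterization of fidelity: there exists a POVM $\{M_i\}$ such that, writing $p_i=\Tr(M_i\rho)$ and $q_i=\Tr(M_i\sigma)$, one has $\sqrt{F(\rho,\sigma)}=\sum_i\sqrt{p_iq_i}$ (the Fuchs--Caves optimal measurement). For any two probability distributions $p,q$ on the same index set, the elementary estimate $(\sqrt{p_i}-\sqrt{q_i})^2\leq |\sqrt{p_i}-\sqrt{q_i}|(\sqrt{p_i}+\sqrt{q_i})=|p_i-q_i|$, summed over $i$, yields the classical Fuchs--van~de~Graaf inequality
\[
1-\sum_i\sqrt{p_iq_i} \leq \tfrac{1}{2}\sum_i|p_i-q_i|.
\]
Combining this with the operational bound $\tfrac{1}{2}\sum_i|p_i-q_i|\leq \tfrac{1}{2}\|\rho-\sigma\|_1$ (monotonicity of total variation under measurement) gives $1-\sqrt{F(\rho,\sigma)}\leq \tfrac{1}{2}\|\rho-\sigma\|_1$, completing the proof.

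The only nontrivial ingredients are Uhlmann's theorem on one side and the measurement-optimal form of fidelity on the other; the hardest part is really picking the correct POVM so that the chain of inequalities runs in the direction we want. Once these are in place, the rest is the rank-$2$ calculation for pure states and the one-line classical $\sqrt{pq}$ estimate.
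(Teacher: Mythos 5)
Your proposal is correct and is exactly the standard Fuchs--van~de~Graaf argument that the paper's cited references (Fuchs~1999 and Wilde's Thm.~9.3.1) employ; the paper itself states the lemma by citation without reproving it. The two one-sided reductions --- Uhlmann-optimal purifications plus monotonicity of the trace norm under partial trace for the upper bound, and the Fuchs--Caves optimal POVM plus the classical $\bigl(\sqrt{p_i}-\sqrt{q_i}\bigr)^2\leq|p_i-q_i|$ estimate plus monotonicity under measurement for the lower bound --- are precisely the intended ingredients, and your chain of inequalities is airtight.
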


\begin{mylemma}[{Canonical~purification~\cite[Lemma~2]{atif2023lossy}, \cite[Lemma~14]{winter}}] 
\label{lem:closenessofPurification}
For $\rho, \sigma \in \calD(A)$, the following inequality holds:
\begin{equation*}
  F(\psi_{\rho}, \psi_{\sigma}) \geq \left(1-\frac{1}{2}\norm{{\rho}-{\sigma}}_1\right)^2,
\label{eq:fidelityPurification1}
\end{equation*}
where $\ket{\psi_{\rho}}$ and $\ket{\psi_{\sigma}}$ are the canonical purifications of $\rho$ and $\sigma$, respectively:
\[
  \ket{\psi_\rho} = \left(\sqrt{\rho}\otimes\1\right)\ket{\Gamma},
\]
with the unnormalized maximally entangled vector
$\ket{\Gamma} = \sum_{i} \ket{i}\ket{i}$.
\end{mylemma}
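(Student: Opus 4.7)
The plan is to compute the inner product $\langle\psi_\rho|\psi_\sigma\rangle$ directly from the definition of the canonical purification, convert the resulting trace expression into one involving the Hilbert–Schmidt distance between $\sqrt{\rho}$ and $\sqrt{\sigma}$, and then invoke the Powers–St\o rmer inequality to replace that by the trace distance between $\rho$ and $\sigma$.

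First I would use $\ket{\psi_\rho} = \sum_i (\sqrt{\rho}\ket{i})\otimes\ket{i}$ and the standard identity $\langle\Gamma|(M\otimes\1)|\Gamma\rangle = \Tr M$, which follows from expanding $\ket{\Gamma} = \sum_i \ket{i}\ket{i}$. This gives
\[
  \langle\psi_\rho|\psi_\sigma\rangle \;=\; \langle\Gamma|(\sqrt{\rho}\sqrt{\sigma}\otimes\1)|\Gamma\rangle \;=\; \Tr(\sqrt{\rho}\sqrt{\sigma}),
\]
a non-negative real number since $\sqrt{\rho}$ and $\sqrt{\sigma}$ are positive. Since the states $\ket{\psi_\rho},\ket{\psi_\sigma}$ are pure, the fidelity becomes $F(\psi_\rho,\psi_\sigma) = \bigl(\Tr\sqrt{\rho}\sqrt{\sigma}\bigr)^{2}$.

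Next I would rewrite the trace using the simple algebraic identity $\Tr(\sqrt{\rho}\sqrt{\sigma}) = \tfrac12(\Tr\rho + \Tr\sigma) - \tfrac12\|\sqrt{\rho}-\sqrt{\sigma}\|_2^2 = 1 - \tfrac12\|\sqrt{\rho}-\sqrt{\sigma}\|_2^2$, which holds because $\rho,\sigma$ are normalized. The Powers–St\o rmer inequality then gives $\|\sqrt{\rho}-\sqrt{\sigma}\|_2^2 \le \|\rho-\sigma\|_1$, so
\[
  \Tr(\sqrt{\rho}\sqrt{\sigma}) \;\ge\; 1 - \tfrac12\|\rho-\sigma\|_1.
\]
The right-hand side is non-negative because $\|\rho-\sigma\|_1\le 2$ for density operators, so squaring preserves the inequality and yields the desired bound $F(\psi_\rho,\psi_\sigma) \ge (1-\tfrac12\|\rho-\sigma\|_1)^2$.

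The only non-trivial ingredient is the Powers–St\o rmer inequality, which is standard and whose short proof uses the operator inequality $(A-B)^2 \le A^2 + B^2 \le (A+B)^2$ for positive $A,B$ together with $|A^2-B^2| \le A^2+B^2$; applied to $A=\sqrt{\rho}$, $B=\sqrt{\sigma}$ and combined with $\|M\|_2^2 = \Tr M^2 \le \Tr|N|$ for $-N\le M^2\le N$, it gives $\|\sqrt{\rho}-\sqrt{\sigma}\|_2^2 \le \||\rho-\sigma|\|_1 = \|\rho-\sigma\|_1$. This is the single analytic step in the proof; everything else is bookkeeping with the maximally entangled vector.
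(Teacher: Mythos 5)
Your main argument is correct and is the standard proof of this lemma; the paper itself only cites it (to Ref.~\citen{atif2023lossy}, Lemma 2 and Ref.~\citen{winter}, Lemma 14) without reproducing a proof, and those references use precisely this route: compute $\langle\psi_\rho|\psi_\sigma\rangle = \Tr\sqrt{\rho}\sqrt{\sigma} \ge 0$ from the ricochet identity $\langle\Gamma|(M\otimes\1)|\Gamma\rangle = \Tr M$, rewrite it as $1-\tfrac12\|\sqrt{\rho}-\sqrt{\sigma}\|_2^2$, and invoke the Powers--Stormer inequality $\|\sqrt{\rho}-\sqrt{\sigma}\|_2^2 \le \|\rho-\sigma\|_1$.

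One caveat about your closing aside: the sketch you give of the Powers--Stormer inequality does not work. The operator inequalities $(A-B)^2 \le A^2+B^2 \le (A+B)^2$ are false for general $A,B\ge 0$, since $(A+B)^2-(A^2+B^2) = AB+BA = (A^2+B^2)-(A-B)^2$ need not be positive semidefinite when $A$ and $B$ do not commute (e.g.\ take $A$ a rank-one projector and $B$ a non-commuting rank-one projector). Likewise the operator absolute value is not monotone, so $-N\le M\le N$ does not give $|M|\le N$. The correct proof of Powers--Stormer instead splits $\sqrt{\rho}-\sqrt{\sigma}$ into its positive and negative parts $C_\pm$ and uses $\Tr\left[(\sqrt{\rho}-\sqrt{\sigma})(\sqrt{\rho}+\sqrt{\sigma})\right] = \Tr(\rho-\sigma)$ together with $\sqrt{\rho}+\sqrt{\sigma}\ge |\sqrt{\rho}-\sqrt{\sigma}|$ and cyclicity to control $\Tr C_\pm^2$ by $\pm\Tr(\rho-\sigma)P_\pm$. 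Since you only use Powers--Stormer as a black box, this does not damage your proof of the lemma, but the sketch should not be relied on as written.
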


Next we state the asymptotic equipartition property (AEP) for the max-relative entropy. 

\begin{mylemma}[{AEP \cite[Eq.~(35)]{tomamichel2013hierarchy}}] \label{lem:hypoRelativeSecondOrder}
    For any $n \geq 1$, and $\epsilon \in (0,1)$, we have  
    \begin{align*}
      \label{eq:Dh_nD}
    D_{\max}^{\epsilon}(\rho^{\tensor n} \| \sigma^{\tensor n}) =n D(\rho \| \sigma)-\sqrt{n V(\rho \| \sigma)} \Phi^{-1}(\epsilon^2) +O(\log n),    
    \end{align*}
    where $\Phi(\delta)$ denotes the cumulative distribution function of the standard normal distribution.
\end{mylemma}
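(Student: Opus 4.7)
The plan is to follow the quantum central-limit-theorem strategy of Tomamichel--Hayashi, reducing the non-commutative problem to a classical one via the pinching method. First I would pinch $\rho^{\tensor n}$ with respect to the spectral decomposition of $\sigma^{\tensor n}$; since the number of distinct eigenvalues of $\sigma^{\tensor n}$ grows only polynomially in $n$ (namely $\binom{n+|A|-1}{|A|-1}$ by the method of types), the pinching inequality $\rho^{\tensor n} \leq v_n\, \mathcal{E}_\sigma(\rho^{\tensor n})$ costs only an additive $O(\log n)$ on the scale of $D_{\max}^\epsilon$. After pinching, the two operators commute, and the problem reduces to the analogous statement about two classical probability distributions obtained by jointly diagonalising $\mathcal{E}_\sigma(\rho^{\tensor n})$ and $\sigma^{\tensor n}$.

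In the commuting setting, I would introduce the log-likelihood-ratio random variable $Z_n = \log \mathcal{E}_\sigma(\rho^{\tensor n}) - \log \sigma^{\tensor n}$, whose distribution is given by the spectrum of the pinched state. Because $\mathcal{E}_\sigma(\rho^{\tensor n})$ is a pinching of an i.i.d.\ state, $Z_n$ is, up to an $O(\log n)$ correction, a sum of $n$ i.i.d.\ random variables with mean $D(\rho\|\sigma)$ and variance $V(\rho\|\sigma)$. The key structural fact is that for commuting $\rho_n, \sigma_n$, the smooth max-relative entropy is essentially a quantile of $Z_n$:
\[
D_{\max}^\epsilon(\rho_n \| \sigma_n) = \inf\bigl\{\lambda : \mathbb{P}[Z_n > \lambda] \leq \epsilon^2\bigr\} + O(1),
\]
where the $\epsilon^2$ reflects the relation $P(\rho,\sigma) \leq \epsilon \Leftrightarrow F_\star(\rho,\sigma) \geq 1-\epsilon^2$ and the fact that optimal smoothing is realised by truncating in the eigenbasis of $Z_n$.

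Finally, I would invoke the classical Berry--Esseen theorem applied to $Z_n$, giving $\mathbb{P}[Z_n \leq n D(\rho\|\sigma) + \sqrt{n V(\rho\|\sigma)}\,t] = \Phi(t) + O(1/\sqrt{n})$. Setting the tail probability equal to $1-\epsilon^2$ and inverting yields $t = -\Phi^{-1}(\epsilon^2) + O(1/\sqrt{n})$, so that substituting back into the quantile characterisation and absorbing the pinching and smoothing errors into the $O(\log n)$ remainder produces the claimed asymptotic expansion.

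The main obstacle is the precise translation between the purified-distance smoothing of $D_{\max}^\epsilon$ and a hard threshold on the tail of $Z_n$: one must construct an explicit sub-normalised $\tilde\sigma \in \mathscr{B}^\epsilon(\rho_n)$ obtained by truncation in the eigenbasis of $Z_n$ and show, via a gentle-measurement estimate combined with Lemma~\ref{lem:relationshipTraceFidelity}, that the cut-off level is exactly the $(1-\epsilon^2)$-quantile up to terms that do not pollute the $\sqrt{n}$ coefficient. Controlling this, together with verifying that the pinching multiplier $v_n = O(\operatorname{poly}(n))$ and the second-moment estimate leave the Gaussian term intact, constitutes the technical heart of the argument.
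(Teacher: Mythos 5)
The paper does not prove this lemma: it is cited directly from Tomamichel and Hayashi (Eq.~(35) of the reference), so there is no in-paper proof to compare against. Your sketch follows the broad strategy of that reference — pinch to reduce to a commuting problem, recast the commuting problem as a tail/quantile statement for the i.i.d.\ log-likelihood ratio $Z_n$, and invoke Berry--Esseen — so the overall route is sound and essentially the intended one.

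That said, the step you flag as ``the technical heart'' is also where your outline is somewhat optimistic. You assert that in the commuting case $D_{\max}^\epsilon(\rho_n\|\sigma_n)$ equals a $(1-\epsilon^2)$-quantile of $Z_n$ up to $O(1)$, but Tomamichel and Hayashi do not establish such a clean identity directly for $D_{\max}^\epsilon$. Their actual route goes through the information-spectrum divergence
\[
  D_s^\epsilon(\rho\|\sigma) \;=\; \sup\bigl\{\gamma : \Tr\bigl[\rho\,\{\rho \leq 2^\gamma\sigma\}\bigr] \leq \epsilon\bigr\},
\]
whose second-order expansion is the genuine CLT quantile statement, and then relates $D_{\max}^\epsilon$ to $D_s$ by two-sided inequalities whose slack shifts the smoothing parameter by small amounts (and costs additive $O(\log(1/\cdot))$ terms). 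The fact that this slack does not pollute the $\sqrt{n}$ coefficient hinges on the continuity of $\Phi^{-1}$ and must be argued; your write-up gestures at this via a ``gentle-measurement estimate combined with Lemma~\ref{lem:relationshipTraceFidelity}'' but does not exhibit the construction. To make the proof airtight you should either (i) reintroduce $D_s^\epsilon$ as the intermediary and quote the two-sided $D_{\max}$--$D_s$ bounds, or (ii) explicitly build the truncated operator $\widetilde\rho_n$ by cutting in the eigenbasis of $Z_n$, verify $P(\rho_n,\widetilde\rho_n)\leq\epsilon$ via the Fuchs--van de Graaf inequality, and check that the induced $D_{\max}$ value sits at the advertised quantile up to an $O(\log n)$ error after accounting for the pinching multiplier $v_n=\mathrm{poly}(n)$. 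As stated, your proposal has the right shape but leaves precisely that translation unproved.
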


We need a couple of tools from the 
min-entropy calculus \cite{tomamichel2015quantum}.
\begin{mylemma}[{Min-~vs.~max-entropy~\cite[Lemma 5]{vitanov2013chain}, \cite[Prop.~5.5]{tomamichel2012framework}}]
  \label{lemma:H-min-max}
  For any bipartite state $\rho^{AB}\in \calD(A\tensor B)$ and $\epsilon+\delta < 1$,
  \begin{equation*}
    H_{\min}^\epsilon(A|B)_\rho \leq H_{\max}^\delta(A|B)_\rho +\log\frac{1}{1-(\epsilon+\delta)^2}.
  \end{equation*}
\end{mylemma}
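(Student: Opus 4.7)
The plan is to reduce the smoothed inequality to an unsmoothed min-versus-max relation and then control the smoothing via the triangle inequality for the purified distance, in the spirit of Vitanov--Dupuis--Renner. First, I would unpack $\lambda := H_{\min}^\epsilon(A|B)_\rho$: by the maximum in the smoothing, there exist a sub-normalized state $\tilde\rho^{AB} \in \mathscr{B}^\epsilon(\rho^{AB})$ and a sub-normalized $\sigma^B \in \calD_\leq(B)$ with $\tilde\rho \leq 2^{-\lambda}\,\1_A \otimes \sigma^B$, so that $H_{\min}(A|B)_{\tilde\rho} = \lambda$.

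Next, I would invoke the unsmoothed inequality $H_{\min}(A|B)_\omega \leq H_{\max}(A|B)_\omega$, valid for every sub-normalized $\omega^{AB}$ as a standard consequence of the ordering of conditional R\'enyi entropies (cf.\ Tomamichel's monograph). Applied to $\tilde\rho$, it furnishes a sub-normalized $\tau^B$ with $F(\tilde\rho,\,\1_A \otimes \tau^B) \geq 2^\lambda$, i.e.\ $H_{\max}(A|B)_{\tilde\rho} \geq \lambda$.

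To bridge to the $\delta$-ball around $\rho$: for arbitrary $\hat\rho \in \mathscr{B}^\delta(\rho)$, the purified-distance triangle inequality gives $P(\tilde\rho,\hat\rho) \leq P(\tilde\rho,\rho) + P(\rho,\hat\rho) \leq \epsilon+\delta$, whence $F_\star(\tilde\rho,\hat\rho) \geq 1-(\epsilon+\delta)^2$. I would then establish the continuity estimate $H_{\max}(A|B)_{\hat\rho} \geq H_{\max}(A|B)_{\tilde\rho} - \log\tfrac{1}{1-(\epsilon+\delta)^2}$, by picking via Uhlmann's theorem compatible purifications $|\tilde\rho\rangle,|\hat\rho\rangle,|\tau\rangle$ on a common extension so that the pairwise inner products saturate the relevant fidelities, decomposing $|\hat\rho\rangle$ along the span of $|\tilde\rho\rangle$ and its orthogonal complement, and carefully bounding cross terms to obtain $F(\hat\rho,\,\1_A \otimes \tau^B) \geq (1-(\epsilon+\delta)^2)\,F(\tilde\rho,\,\1_A \otimes \tau^B) \geq (1-(\epsilon+\delta)^2)\,2^\lambda$. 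Since this lower bound on $H_{\max}(A|B)_{\hat\rho}$ holds for every $\hat\rho$ in the $\delta$-ball, taking the minimum of the left-hand side over that ball yields $H_{\max}^\delta(A|B)_\rho \geq \lambda - \log\tfrac{1}{1-(\epsilon+\delta)^2}$, which rearranges to the claim.

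The main obstacle I anticipate is the continuity estimate in the third step: the purification $|\tau\rangle$ has squared norm $\||\tau\rangle\|^2 = |A|\Tr\tau^B$, which can be much larger than one, so a naive Cauchy--Schwarz bound on the orthogonal-complement contribution is too lossy. The resolution is to work with the normalized version $\tau^B/\Tr\tau^B$ and rescale at the end, or equivalently to project $|\tau\rangle$ onto the subspace spanned by $|\tilde\rho\rangle$ so that the effective norm is controlled by the optimal fidelity itself. If this direct approach turns out to be delicate, one may alternatively invoke the smooth-entropy duality $H_{\min}^\epsilon(A|B)_\rho = -H_{\max}^\epsilon(A|C)_\rho$ for any purifying $C$, which renders the target inequality self-dual and often admits a cleaner direct proof via the fidelity characterization of $H_{\max}$.
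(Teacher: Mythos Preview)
The paper does not supply a proof of this lemma at all: it is quoted as a known result from Vitanov \emph{et al.}\ and Tomamichel's thesis, with no argument reproduced. So there is no ``paper proof'' to compare against; one can only compare your outline to the standard proof in the cited references.

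Your overall strategy---unpack the $\epsilon$-smoothing of $H_{\min}$, unpack the $\delta$-smoothing of $H_{\max}$, and connect the two optimizers via the triangle inequality for purified distance---is exactly the standard one. The one inefficiency is your Step~2: invoking the unsmoothed bound $H_{\min}(A|B)_{\tilde\rho}\le H_{\max}(A|B)_{\tilde\rho}$ first, and only afterwards transferring the fidelity witness $\tau^B$ to $\hat\rho$. The cleaner route, which is what the cited references do, skips this detour. From $\tilde\rho\le 2^{-\lambda}\,\1_A\otimes\sigma^B$ and operator monotonicity of the fidelity in its second argument one gets directly
\[
  2^{H_{\max}(A|B)_{\hat\rho}}
  \;\ge\; F(\hat\rho,\1_A\otimes\sigma^B)
  \;\ge\; 2^{\lambda}\,F(\hat\rho,\tilde\rho),
\]
so the whole question collapses to lower-bounding $F(\hat\rho,\tilde\rho)$ by $1-(\epsilon+\delta)^2$. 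This avoids the purification/cross-term gymnastics you describe and the ``large norm of $\ket{\tau}$'' issue entirely, since no purification of $\1_A\otimes\tau^B$ is ever needed.

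The genuine subtlety you correctly flag is the gap between $F_\star(\hat\rho,\tilde\rho)\ge 1-(\epsilon+\delta)^2$ (which the triangle inequality gives) and $F(\hat\rho,\tilde\rho)\ge 1-(\epsilon+\delta)^2$ (which the argument above needs), since the smoothing balls contain sub-normalized states. In the cited proofs this is handled by the standard direct-sum embedding that turns sub-normalized states into normalized ones while preserving the operator inequality defining $H_{\min}$ and the fidelity defining $H_{\max}$. Your fallback via the min/max duality $H_{\min}^\epsilon(A|B)_\rho=-H_{\max}^\epsilon(A|C)_\rho$ is also legitimate, but it merely transposes the same difficulty to the complementary system rather than dissolving it.
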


\begin{mylemma}[{Dupuis 2013, cf.~\cite[Lemma 9]{Q-prettystrong}}]
  \label{lemma:H-max-min}
  For any bipartite state $\rho^{AB}\in \calD(A\tensor B)$ and $0<\epsilon<1$,
  \begin{equation*}
    H_{\min}^\epsilon(A|B)_\rho \geq H_{\max}^\delta(A|B)_\rho,
  \end{equation*}
  where $\delta = \sqrt{1-\epsilon^4}$.
\end{mylemma}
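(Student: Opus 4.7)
The plan is to prove the bound by explicit construction. The tempting route via duality---using $\hmin^\epsilon(A|B)_\rho = -\hmax^\epsilon(A|C)_\psi$ on any purification $|\psi\rangle^{ABC}$ of $\rho^{AB}$, and similarly for the other entropy---reformulates the inequality into an analogous one with $\epsilon$ and $\delta$ swapped; but the relation $\delta = \sqrt{1-\epsilon^4}$ is asymmetric in the two parameters, so duality alone does not close the argument. I therefore take a direct, constructive route.

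Let $\alpha = \hmax^\delta(A|B)_\rho$. By definition, there exist $\bar\rho^{AB} \in \mathscr{B}^\delta(A\tensor B;\rho)$ and $\sigma^B \in \calD_\leq(B)$ realizing $\log F(\bar\rho^{AB}, \1_A \tensor \sigma^B) = \alpha$. The goal is to build a sub-normalized $\tilde\rho^{AB} \in \mathscr{B}^\epsilon(A\tensor B;\rho)$ satisfying $\tilde\rho^{AB} \leq 2^{-\alpha}\,\1_A \tensor \tau^B$ for some $\tau^B \in \calD_\leq(B)$, since that operator inequality is exactly the semidefinite condition certifying $\hmin(A|B)_{\tilde\rho} \geq \alpha$, whence $\hmin^\epsilon(A|B)_\rho \geq \alpha$ by taking $\tilde\rho$ as the smoothed state.

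The construction proceeds through Uhlmann's theorem for sub-normalized operators. Because $\log F(\bar\rho,\1_A\tensor\sigma^B) = \alpha$, one can pick purifications $|\phi\rangle$ of $\bar\rho$ and $|\chi\rangle$ of (a sub-normalized extension of) $\1_A \tensor \sigma^B$ whose overlap saturates the fidelity. The point is that $\1_A\tensor\sigma^B$ has $A$-marginal proportional to $\1_A$, so a suitable partial isometry turns $|\chi\rangle$ into a vector whose induced operator on $AB$ is automatically upper-bounded by $2^{-\alpha}\,\1_A \tensor \tau^B$ with an appropriate $\tau^B$; defining $\tilde\rho^{AB}$ to be this induced operator yields the required operator inequality by construction.

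The more delicate step is certifying $\tilde\rho^{AB} \in \mathscr{B}^\epsilon(\rho)$. By the triangle inequality for purified distance one has $P(\tilde\rho,\rho) \leq P(\tilde\rho,\bar\rho) + P(\bar\rho,\rho)$; the second term is at most $\delta$ by hypothesis, and the first is controlled by the Uhlmann freedom chosen above. Translating into $F_\star$ and using $1-\delta^2 = \epsilon^4$, so that $F_\star(\bar\rho,\rho) \geq \epsilon^4 = (\epsilon^2)^2$, a geometric-mean type chain of fidelities (the outer square-root in $\delta = \sqrt{1-\epsilon^4}$ matches the conversion of $F_\star$ into $P$, while the inner $\epsilon^4 = (\epsilon^2)^2$ matches the fidelity-product step in the Uhlmann construction) is meant to give $F_\star(\tilde\rho,\rho) \geq 1-\epsilon^2$ and hence $P(\tilde\rho,\rho) \leq \epsilon$. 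I expect the main obstacle to lie precisely here: the prescribed $\delta(\epsilon)$ is tight enough that any loose step---in particular, handling the sub-normalization of $\sigma^B$ and of $\1_A\tensor\sigma^B$ and tracking the exact prefactors in the operator inequality on $\tilde\rho$---would spoil the relation. The ingredients are standard in smooth-entropy calculus, but the bookkeeping must be done sharply to land on exactly $\delta = \sqrt{1-\epsilon^4}$ rather than a weaker form.
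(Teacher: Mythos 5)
Your construction runs in the wrong direction, and I don't see how to salvage it. You start from the smoothing state $\bar\rho\in\mathscr{B}^\delta(\rho)$ that is optimal for $H_{\max}^\delta$, and then try to build a candidate $\tilde\rho\in\mathscr{B}^\epsilon(\rho)$ for $H_{\min}^\epsilon$. But $\delta=\sqrt{1-\epsilon^4}$ is \emph{large} (close to $1$ for small $\epsilon$), so $\bar\rho$ may be very far from $\rho$, and the triangle inequality $P(\tilde\rho,\rho)\leq P(\tilde\rho,\bar\rho)+P(\bar\rho,\rho)$ that you invoke cannot possibly yield $P(\tilde\rho,\rho)\leq\epsilon$: the second summand alone may already exceed $\epsilon$ by a wide margin. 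The ``geometric-mean chain of fidelities'' you gesture at would have to do all the work, and you do not exhibit it. There is also a second, independent gap: having a purification $\ket{\chi}$ of (a scaled) $\1_A\otimes\sigma^B$ with large overlap with a purification of $\bar\rho$ does \emph{not} produce an operator inequality $\tilde\rho\leq 2^{-\alpha}\1_A\otimes\tau^B$. Fidelity-closeness to $\1_A\otimes\sigma^B$ and operator domination by $\1_A\otimes\tau^B$ are genuinely different constraints --- that is exactly why $H_{\max}$ and $H_{\min}$ are different quantities --- and Uhlmann alone will not bridge them.

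The paper's proof reverses the construction and uses a tool you never bring in: semidefinite programming duality. One writes $2^{-H_{\min}^\epsilon(A|B)_{\rho|\sigma}}$ as an SDP in the smoothing variable $\widetilde\rho$, passes to its dual, and takes $P$ to be the projector onto the support of the dual-optimal operator $E_{AB}$. Sandwiching the dual constraint with $P^\perp\otimes\1_C$ and using optimality gives the weight bound $\Tr\rho^{AB}P\geq\epsilon^2$, hence $F(\rho,P\rho P)\geq(\Tr\rho P)^2\geq\epsilon^4$ and $P(\rho,P\rho P)\leq\sqrt{1-\epsilon^4}=\delta$. Meanwhile, \emph{complementary slackness} $\widetilde\rho E = t(\1_A\otimes\sigma^B)E$ is what turns the optimal $t$ into a bound on the \emph{unsmoothed} $H_{\max}(A|B)_{P\rho P|\sigma}$. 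The essential point is that the certificate for the $\delta$-smoothing of $H_{\max}$ is obtained by pinching $\rho$ with a projector extracted from the $H_{\min}^\epsilon$-dual, not by trying to force an operator inequality out of a fidelity statement. Your sketch has neither the projector, nor the complementary-slackness step, nor a workable closeness argument; the approach as written does not go through.
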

\begin{proof}
We reproduce Fr\'ed\'eric Dupuis' original proof in \ref{sec:proof-Dupuis}.
\end{proof}

\section{Quantum soft covering}
\label{QSC:sec:mainResults_covering}
We begin by formulating the 
quantum covering problem as follows. 

\begin{mydefinition}[{Quantum covering}] \label{def:qc covering setup}
A quantum covering setup is characterized by a pair $(\rho^B,\Wchannel)$, where $\rho^B \in \mathcal{D}(B)$ is a density operator 
and $\Wchannel$ is a CPTP map from $A$ to $B$.
\end{mydefinition}
A code for quantum covering is defined as follows.
\begin{mydefinition}[Quantum covering code]
\label{def:qu covering code}
For a given pair $(\rho^B,\Wchannel)$,
an $(n,\Theta,\epsilon)$-code for quantum covering is a density operator 
$\hat\sigma^{A^n}$ on $A^{\otimes n}$ such that 
$|\hat\sigma^{A^n}|=\Theta$, and 
\begin{align*}
    \frac12 \left\| \rho^{B ^{\otimes n}} - \Wchannel^{\otimes n}(\hat\sigma^{A^n}) \right\|_1 \leq \epsilon.
\end{align*}
\end{mydefinition}


\noindent As our first main result, we characterize the smallest rank $\Theta$ for a one-shot $(1,\Theta,\epsilon)$ quantum covering code in terms of smoothed min-entropy. 

\begin{mytheorem}[One-shot quantum covering]\label{thm:CoveringOneShotSmooth}
Given a pair $(\rho^B,\Wchannel)$, 
and a density operator $\sigma^{A} \in \mathcal{S}(\rho^B,\Wchannel)$, for all $\delta,\eta \in (0,1)$, there exists a $(1,\Theta,\epsilon)$-code for quantum covering such that
\begin{align}
    \label{eq:soft-covering-direct}
        \log\Theta \leq  \sq{-\hmin^{\delta}(A_R|B)_{\omega} - 2\log{\eta}}^+,
\end{align}
and $\epsilon \leq 8(\delta+\eta)$,
where $\omega^{B A_R}=(\Wchannel \otimes \id)\Phi^{A A_R}_\sigma$,  
$\Phi^{A A_R}_\sigma$ is a purification of $\sigma^{A}$.
Moreover, for all $\epsilon \in (0,1)$, every $(1,\Theta, \epsilon)$-code satisfies 
\begin{equation}
    \label{eq:soft-covering-converse}
    \log\Theta \geq \inf_{\sigma^A \in \mathcal{S}_\epsilon(\rho^B,\Wchannel)} \left[ -H_{\min}(A_R|B)_{\omega} \right]^+.
\end{equation}
\end{mytheorem}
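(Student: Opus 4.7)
The plan is to prove the direct (achievability) and converse parts separately. For the direct part, I would use a randomized construction based on a Haar-random projection on the reference system, combined with smoothing, which is the standard decoupling-type strategy. Let $|\phi\rangle^{AA_R}$ be the purification of $\sigma^A$, so that $\omega^{BA_R} = (\calN \otimes \id)(\phi^{AA_R}_\sigma)$ and $\omega^B = \rho^B$. First, by definition of the smoothed min-entropy, pick a state $\tilde\omega \in \mathscr{B}^\delta(\omega)$ and a $\tilde\sigma^B$ for which $\tilde\omega^{BA_R} \leq 2^{-H_{\min}^\delta(A_R|B)_\omega}\, I^{A_R} \otimes \tilde\sigma^B$; using the canonical-purification bound (Lemma~\ref{lem:closenessofPurification}) the smoothing transfers to a perturbation $|\tilde\phi\rangle^{AA_R}$ of the original purification, up to an extra purified distance of order $\delta$. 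Next let $\Pi$ be a Haar-random rank-$\Theta$ projector on $A_R$ and define
\[
  \hat\sigma_\Pi^A \deq \frac{\Tr_{A_R}[(I^A \otimes \Pi)\tilde\phi(I^A \otimes \Pi)]}{p_\Pi}, \qquad p_\Pi \deq \Tr[\Pi\, \tilde\sigma^{A_R}],
\]
which has rank at most $\Theta$. Estimate $\EE_\Pi\|\calN(\hat\sigma_\Pi) - \rho^B\|_1$ via Cauchy-Schwarz (trace vs.\ Hilbert-Schmidt), Haar second-moment formulas for rank-$\Theta$ projectors, and the operator inequality on $\tilde\omega$. This yields an expected trace distance of order $\sqrt{\,2^{-H_{\min}^\delta(A_R|B)_\omega}/\Theta\,}$. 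Choosing $\log\Theta = [-H_{\min}^\delta(A_R|B)_\omega - 2\log\eta]^+$ brings this below $\eta$; derandomize by Markov's inequality, then combine with the smoothing error through triangle inequalities and Lemma~\ref{lem:relationshipTraceFidelity} to reach the claimed $\epsilon \leq 8(\delta + \eta)$.

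For the converse in \eqref{eq:soft-covering-converse}, consider any $(1, \Theta, \epsilon)$-code $\hat\sigma^A$ of rank $\Theta$, purify it to $|\hat\phi\rangle^{AA_R}$, and set $\hat\omega^{BA_R} = (\calN \otimes \id)(\hat\phi)$, so that $\hat\omega^{A_R}$ has rank $\Theta$. The plan is to establish the operator inequality
\[
  \omega^{BC} \leq \operatorname{rank}(\omega^C)\cdot I^C \otimes \omega^B
\]
for any bipartite state $\omega^{BC}$. This can be verified by a direct Cauchy-Schwarz argument: restricting to $\operatorname{supp}(\omega^C)$ (where $\omega^{BC}$ is supported), for any $|u\rangle = \sum_c|c\rangle\otimes|u_c\rangle$ and the spectral decomposition $\omega^{BC} = \sum_k p_k|\psi_k\rangle\langle\psi_k|$ with $|\psi_k\rangle = \sum_c|c\rangle\otimes|\alpha_{k,c}\rangle$, the elementary estimate $|\sum_c\langle\alpha_{k,c}|u_c\rangle|^2 \leq \operatorname{rank}(\omega^C)\sum_c|\langle\alpha_{k,c}|u_c\rangle|^2$ gives $\langle u|\omega^{BC}|u\rangle \leq \operatorname{rank}(\omega^C)\,\langle u|I^C\otimes\omega^B|u\rangle$. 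Applied to $\hat\omega$, this yields $-H_{\min}(A_R|B)_{\hat\omega} \leq \log\Theta$. Finally, the code condition $\tfrac12\|\rho^B - \calN(\hat\sigma^A)\|_1 \leq \epsilon$ places $\hat\sigma^A$ in $\calS_\epsilon(\rho^B, \calN)$ (up to the convention on the factor of two), so the infimum on the right-hand side of \eqref{eq:soft-covering-converse} is at most $[-H_{\min}(A_R|B)_{\hat\omega}]^+ \leq \log\Theta$.

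The main technical obstacle is the Haar second-moment computation in the direct part. Careful handling of the random normalization $p_\Pi$ (which concentrates around $\Theta/|A_R|$ but must be controlled simultaneously with the numerator), and translating the operator inequality $\tilde\omega \leq 2^{-H_{\min}^\delta}\, I \otimes \tilde\sigma^B$ into a clean bound on $\EE_\Pi\|\Tr_{A_R}[(I\otimes\Pi)\tilde\omega(I\otimes\Pi)] - p_\Pi\, \omega^B\|_2^2$, is where most of the calculational work lies. The explicit constant $8$ in $\epsilon \leq 8(\delta + \eta)$ will emerge only after careful bookkeeping across the smoothing, purification perturbation, random projection, and fidelity-to-trace-distance conversion steps.
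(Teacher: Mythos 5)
Your overall strategy is sound and tracks the paper's quite closely in spirit: the direct part is a decoupling argument via a random projector on the reference system, and the converse bounds $-H_{\min}(A_R|B)$ by the log-rank. Two comments, one on a genuine gap and one on a genuinely different (and nicer) route.

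\textbf{Gap in the direct part (smoothing step).} You propose to pick $\tilde\omega\in\mathscr{B}^\delta(\omega)$ achieving the smooth min-entropy and then ``transfer the smoothing to a perturbation $|\tilde\phi\rangle^{AA_R}$ of the original purification''. This step does not go through as stated: $\tilde\omega^{BA_R}$ need not lie in the image of $\calN\otimes\id$ acting on pure states on $A\otimes A_R$, so in general there is \emph{no} state $\tilde\phi^{AA_R}$ with $(\calN\otimes\id)\tilde\phi = \tilde\omega$, close to $\phi$ or otherwise. Lemma~\ref{lem:closenessofPurification} relates canonical purifications of two nearby states on the \emph{same} system; it does not let you pull a perturbation of $\omega^{BA_R}$ back through $\calN$ to a perturbation of $\phi^{AA_R}$. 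The fix is standard and is what the decoupling theorem does internally: run the Haar second-moment computation on $\tilde\omega$ (where the operator inequality $\tilde\omega\le 2^{-H_{\min}^\delta}\1\otimes\tilde\sigma^B$ holds), then transfer back to $\omega$ by the triangle inequality using $\|\tilde\omega-\omega\|_1\le 2\delta$, noting that postselecting on the same $\Pi$ is a CP map applied to $BA_R$ so the trace distance cannot increase. The actual covering state $\hat\sigma_\Pi^A$ is still defined from the true purification $\phi$, so no pull-back of the smoothing is needed. With this repair, your Haar second-moment computation is exactly the content of Dupuis' decoupling theorem \cite[Thm.~3.1]{dupuis2014one}, which the paper simply invokes as a black box with the fixed map $\calM$ being a pinching onto $\ell$ rank-$r$ blocks; your version measures a single Haar-random rank-$\Theta$ subspace and thus has to control the random normalization $p_\Pi$ by hand, which the paper's formulation sidesteps by keeping the approximation statement on the unnormalized block-diagonal state and only normalizing at the very end via a averaging argument.

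\textbf{Converse: different and more elementary.} Your converse is correct and takes a genuinely different route. You prove directly, via Cauchy--Schwarz on the Schmidt blocks, the operator inequality $\omega^{BC}\le \operatorname{rank}(\omega^C)\,\1^C\otimes\omega^B$, which immediately gives $-H_{\min}(A_R|B)_{\hat\omega}\le\log\operatorname{rank}(\hat\omega^{A_R})=\log\Theta$ by taking $\sigma^B=\omega^B$ in the definition of $H_{\min}$. The paper instead chains $\log\operatorname{rank}\ge H_{\max}(A_R)_\Phi$ (monotonicity of R\'enyi entropies in the order), $H_{\max}(A_R)_\Phi=-H_{\min}(A_R|A)_\Phi$ (min/max duality for pure states), and $-H_{\min}(A_R|A)\ge -H_{\min}(A_R|B)$ (data processing under $\calN$). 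Your argument is self-contained and avoids both duality and data processing; it is a nice observation that $H_{\min}(C|B)\ge -\log\operatorname{rank}(\omega^C)$ holds for arbitrary bipartite states with a two-line elementary proof. Your remark about the factor-of-two mismatch between the definition of $\calS_\epsilon$ (which uses $\|\cdot\|_1\le\epsilon$) and the code definition (which uses $\tfrac12\|\cdot\|_1\le\epsilon$) is also a fair point: as written, the code condition places $\hat\sigma^A$ in $\calS_{2\epsilon}$ rather than $\calS_\epsilon$, a small inconsistency in the paper's statement.
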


\begin{proof}
We begin with the showing the direct part, Eq.~\eqref{eq:soft-covering-direct}.
We apply the general decoupling theorem \cite[Thm.~3.1]{dupuis2014one} 
to $\omega^{B A_R}$, a family of random unitaries $U$ on $A$ forming at 
least a $2$-design, and the fixed map $\calM:A_R\rightarrow X$ defined by 
\[
  \calM(\alpha) = \sum_{x=1}^\ell (\Tr\alpha P_x)\proj{x},
\]
where the $\ket{x}$ form an orthonormal basis of $X$ and 
 the $P_x$ are mutually orthogonal subspace projectors of rank $r$ 
each forming a POVM, $\sum_x P_x = \1_{A_R}$. For this to exist, we need 
$r\ell=|A_R|$, which we can, without loss of generality, assume by isometrically enlarging the 
Hilbert space. The Choi matrix $\mu^{A_RX}$ of this channel has 
$H_2(A_R|X)_\mu=\log r$ \cite[Corollary 5.10, see Eq.~(5.67)]{tomamichel2015quantum}, and so \cite[Thm.~3.1]{dupuis2014one} (see 
also Ref.~\citen{berta2009single}) implies 
\[
  \EE_U \frac{1}{2}\left\| (\cM\circ\cU\ox\id_B)\omega^{A_RB} - \frac{\1_X}{|X|}\ox\omega^B \right\|_1 
        \leq 2\epsilon + 2^{-\frac12\log r -\frac12 H_{\min}^\delta(A_R|B)_\omega}, 
\]
where $\cU(\alpha) = U\alpha U^\dagger$ is the channel applying the unitary $U$, and only the second term is smoothed, resulting in a lower factor. 
If we fix $\log r = -H_{\min}^\delta(A_R|B)_\omega - 2\log\eta$, the right hand side 
will be bounded from above by $\eta$, and allowing for rounding to a smaller integer, 
we obtain the bound $2\eta$. 

In particular, in that case there exists a unitary $U$ such that 
\[
  \frac{1}{2}\left\| (\cM\circ\cU\ox\id_B)\omega^{A_RB} - \frac{\1_X}{|X|}\ox\omega^B \right\|_1 
        \leq 2\delta + 2\eta,
\]
and to progress we expand the two matrices inside the trace norm. Namely, 
inserting $\omega^{A_RB}=(\id_{A_R}\ox\calN)\Phi_\sigma^{A_RA}$ and in particular 
$\omega^B=\calN(\sigma^A)$, we get
\[\begin{split}
  \frac{\1_X}{|X|}\ox\omega^B
     &= \sum_{x=1}^\ell \frac{1}{|X|} \proj{x}^X \ox \calN(\sigma^A) \text{ and } \\
  (\cM\circ\cU\ox\id_B)\omega^{A_RB}
     &= \sum_{x=1}^\ell \proj{x}^X \ox \Tr_{A_R}\left[\omega^{A_RB}\left(U^\dagger P_x U\ox\1_B\right)\right] \\
     &= \sum_{x=1}^\ell \proj{x}^X \ox 
                        \calN\left(\Tr_{A_R}\left[\Phi_\sigma^{A_RA}\left(U^\dagger P_x U\ox\1_A\right)\right]\right) \\
     &= \sum_{x=1}^\ell p(x) \proj{x}^X \ox \calN\left(\sigma_x^A\right), 
\end{split}\]
where $p(x)\sigma_x^A = \Tr_{A_R}\left[\Phi_\sigma^{A_R A}\left(U^\dagger P_x U\ox\1_A\right)\right]$, 
$p(x)$ is the normalising trace (a probability) and $\sigma_x^A$ is a state 
of rank at most $r$ on $A$.

By tracing over $B$ we find that the total variational distance between 
$p$ and the uniform distribution is less than $ 2\epsilon+2\eta$, and hence by 
the triangle inequality we find 
\[
  \sum_{x=1}^\ell p(x) \frac{1}{2}\left\| \calN(\sigma_x^A) - \calN(\sigma^A) \right\|_1 \leq 4\delta + 4\eta.
\]
To conclude, there exists at least one $x$ such that 
$ \frac{1}{2}\left\| \calN(\sigma_x^A) - \calN(\sigma^A) \right\|_1 \leq 4\delta + 4\eta$,
making the statement of the theorem true letting $\hat\sigma^A = \sigma_x^A$.



As for the converse, Eq.~\eqref{eq:soft-covering-converse}, consider a purification $\Phi^{AA_R}_\sigma$ of $\sigma^A$. Then we have
\[\begin{split}
  \log\rank\rho^A &\geq H_{\max}(A_R)_\Phi \\
                  &=    -H_{\min}(A_R|A)_\Phi \\
                  &\geq -H_{\min}(A_R|B)_{\omega}, 
\end{split}\]
where the first line follows from the fact that $H_{\max}(\sigma^A)$
is the R\'enyi-$\frac12$ entropy, which is upper bounded by the 
R\'enyi-$0$ entropy; the second line is by the duality of min- and 
max-entropies \cite[Prop.~5.7]{tomamichel2015quantum}, and the third comes from data processing. 
Since the rank of a density matrix is a natural number, we also 
have $\log\rank\sigma^A \geq 0$.  

\end{proof}

Considering the asymptotic setting, we define achievable rate for a quantum covering code as follows.
\begin{mydefinition}[Achievability]
\label{def:achievability_covering}
For a given pair $(\rho^B,\Wchannel)$, a rate $R$ is said to be achievable for quantum covering, if for all $ \epsilon > 0 $ and all sufficiently large $n$, there exists an $ (n, \Theta,\epsilon)$ quantum code such that 
$ \frac{1}{n}\log{\Theta} \leq R + \epsilon$.
\end{mydefinition}

\noindent Building on Theorem \ref{thm:CoveringOneShotSmooth}, we now characterize the set of all achievable rates using single-letter coherent information.


\begin{mytheorem}[Asymptotic quantum covering]\label{thm:asympCovering}
     For a given pair $(\rho^B,\Wchannel)$, a rate $R$ is achievable for quantum covering if and only if $S(\rho^B,\Wchannel)$ is non-empty, and
    \[R \geq \min_{\sigma^{A} \in \calS(\rho^B, \Wchannel)} \left[I_c(\sigma^{A},\Wchannel)\right]^+.
    \]
\end{mytheorem}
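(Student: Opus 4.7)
The plan is to lift the one-shot bounds of Theorem~\ref{thm:CoveringOneShotSmooth} to the i.i.d.\ regime using the fully quantum asymptotic equipartition property (AEP) for the smooth conditional min-entropy. The necessary non-emptiness of $\mathcal{S}(\rho^B,\calN)$ is an obvious prerequisite: if no single-letter preimage exists, then no tensor-product preimage exists and no code of any size can satisfy the covering requirement.

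For achievability, I would fix an arbitrary $\sigma^A \in \mathcal{S}(\rho^B,\calN)$ and consider the i.i.d.\ input $\sigma^{\otimes n}$, which satisfies $\calN^{\otimes n}(\sigma^{\otimes n}) = (\rho^B)^{\otimes n}$. Applying Theorem~\ref{thm:CoveringOneShotSmooth} to the product pair $((\rho^B)^{\otimes n}, \calN^{\otimes n})$ with, say, $\delta = \eta = \epsilon/16$, I obtain an $(n,\Theta_n,\epsilon)$ code with
\[
  \log\Theta_n \;\leq\; \left[-H_{\min}^{\delta}(A_R^n|B^n)_{\omega^{\otimes n}} - 2\log\eta\right]^+ .
\]
The quantum AEP (Tomamichel--Colbeck--Renner) gives $H_{\min}^{\delta}(A_R^n|B^n)_{\omega^{\otimes n}} = n S(A_R|B)_\omega + O(\sqrt{n})$, so dividing by $n$ and letting $n\to\infty$ yields an achievable rate at most $-S(A_R|B)_\omega = I_c(\sigma,\calN)$, and since $\log\Theta_n \geq 0$ this upper bound is really $[I_c(\sigma,\calN)]^+$. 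Minimizing over $\sigma \in \mathcal{S}(\rho^B,\calN)$ completes the direct part.

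For the converse, I would invoke the converse part of Theorem~\ref{thm:CoveringOneShotSmooth}: any $(n,\Theta_n,\epsilon_n)$ code with $\epsilon_n\to 0$ satisfies
\[
  \log\Theta_n \;\geq\; \inf_{\tau^{A^n}\in\mathcal{S}_{\epsilon_n}((\rho^B)^{\otimes n},\calN^{\otimes n})} \left[-H_{\min}(A_R^n|B^n)_{\omega_\tau}\right]^+ ,
\]
and use the standard inequality $H_{\min}(A_R^n|B^n)_{\omega_\tau} \leq S(A_R^n|B^n)_{\omega_\tau}$ to rewrite the lower bound as $\inf_\tau \big[I_c(\tau^{A^n},\calN^{\otimes n})\big]^+/n$.

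The main obstacle is the single-letterization of $\inf_\tau I_c(\tau^{A^n},\calN^{\otimes n})/n$, since coherent information is famously non-additive on quantum channels, so the usual ``take the marginal and use concavity'' step from classical covering does not survive verbatim. I would overcome this in one of two ways. The cleaner route is to invoke the equivalence already stressed in the introduction with the rate-distortion problem under posterior channel distortion of Ref.~\citen{atif2023lossy}: the asymptotic converse there is exactly this single-letterization, giving $\frac{1}{n}\inf_\tau I_c(\tau^{A^n},\calN^{\otimes n}) \geq \min_{\sigma\in\mathcal{S}(\rho^B,\calN)} I_c(\sigma,\calN) - o(1)$, with the approximation slack absorbed via Alicki--Fannes--Winter continuity of the conditional entropy using the constraint $\|\calN^{\otimes n}(\tau^{A^n}) - (\rho^B)^{\otimes n}\|_1 \leq \epsilon_n$. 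The alternative, direct route is to symmetrize $\tau^{A^n}$ over the symmetric group (which preserves both the trace-distance constraint and does not increase $-H_{\min}$ by convexity), apply a quantum de Finetti reduction to approximate $\tau^{A^n}$ by a mixture of i.i.d.\ states $\sigma^{\otimes n}$ with each $\sigma$ in a small neighborhood of $\mathcal{S}(\rho^B,\calN)$, and then invoke AEP plus continuity of $I_c$ on each term. Either way, taking the positive part gives the matching lower bound and closes the theorem.
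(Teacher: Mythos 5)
Your plan matches the paper's proof in its overall architecture: achievability via Theorem~\ref{thm:CoveringOneShotSmooth} plus the min-entropy AEP on the i.i.d.\ state $\omega^{\otimes n}$, and converse via the one-shot lower bound, the inequality $H_{\min}(A_R^n|B^n)\le S(A_R^n|B^n)$, Fannes/Audenaert continuity to pin down the output entropy, and finally the single-letterization borrowed from Ref.~\citen{atif2023lossy} (your ``cleaner route''). That last citation is exactly what the paper invokes at its step~(b), so the proposal is essentially the same.

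One clarification worth flagging: your worry that ``coherent information is famously non-additive, so taking marginals does not survive'' is misdirected here, and the de Finetti alternative is unnecessary. Non-additivity of $Q(\cN)$ concerns the \emph{maximization} of $I_c$ over inputs; the converse here is a \emph{minimization} under an output constraint, which goes the favorable way. Concretely, writing $-S(A_R^n|B^n)_\upsilon = S(B^n)_\upsilon - S(A_R^nB^n)_\upsilon = S(B^n)_\upsilon - S(E^n)$, where $E^n$ is the environment of the i.i.d.\ isometric dilation, one uses $S(B^n)_\upsilon \ge n\,S(\rho^B)-n\tilde\epsilon_1$ (Fannes/Audenaert via the covering constraint) together with subadditivity $S(E^n)\le\sum_i S(E_i)=\sum_i S\bigl(\hat{\cN}(\sigma_i^A)\bigr)$, where $\sigma_i^A$ is the $i$-th marginal of $\sigma^{A^n}$; since each $\cN(\sigma_i^A)$ is close to $\rho^B$, this sums to $n\min_{\sigma\in\calS_\epsilon}I_c(\sigma,\cN)$ up to continuity corrections, and one then pulls $\epsilon\to 0$. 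This is the content of \cite[Eqs.~(68)--(72)]{atif2023lossy}, and it is the ``take the marginal'' argument you thought might fail. No de Finetti reduction or concavity is needed.
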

\begin{proof}


Invoking the asymptotic equipartition property for the min-entropy \cite{tomamichel2015quantum},
we immediately get the following consequence which provides the achievability.
  Given a pair $(\rho^B,\calN)$ and an input state $\sigma^A \in \calS(\rho^B,\calN)$, 
  we consider the i.i.d.~extension $\calN^{\ox n}:A^n \rightarrow B^n$.
  Then, using Theorem \ref{thm:CoveringOneShotSmooth}, we see that there exists a sequence of states 
  $\hat\sigma^{A^n}$ on $A^n$ of rank $r_n$ such that 
  \begin{equation*}
    \frac12 \left\| \calN^{\ox n}(\hat\sigma^{A^n}) - \round{\rho^B}^{\tensor n} \right\|_1 \rightarrow 0
    \quad\text{and}\quad
    \frac1n \log r_n \rightarrow \left[-S(A_R|B)_\omega\right]^+,
  \end{equation*}
  as $n\rightarrow\infty$, where $\omega^{A_R B} \deq (\id\tensor \calN)\Phi^{A_R A}_\sigma$, and $\Phi_\sigma^{A_R A}$ is a purification of $\sigma^A$.
  \qed
For the converse, we proceed as follows. Let $R$ be achievable. Using the result from Theorem \ref{thm:CoveringOneShotSmooth}, 
for any $\epsilon>0$, we get
\begin{align*}
    nR &\geq \inf_{\sigma^{A^n} \in \calS_\epsilon((\rho^{B})^{\tensor n},\calN^{\tensor n})}-\hmin(A^n_R|B^n)_{\upsilon^n(\sigma^{A^n})},
\end{align*}
where $\upsilon^n(\sigma^{A^n}) \deq ( \id \tensor\calN^{\tensor n} )\Phi_{\sigma}^{A^n_R A^n}$ with $\Phi_{\sigma}^{A^n_R A^n}$ denoting a purification of $\sigma^{A^n}$.
Using the fact that $\hmin(A^n_R|B^n)_{\upsilon^n(\sigma^{A^n})} \leq H(A^n_R|B^n)_{\upsilon^n(\sigma^{A^n})}$ \cite[Lemma 2]{tomamichel2009fully},  we get 
\begin{align}
    nR &\geq \inf_{\sigma^{A^n} \in \calD(A^{\tensor n}):\|(\rho^{B})^{\tensor n} - \calN^{\tensor n}(\sigma^{A^n})\|_1 \leq \epsilon}-H(A^n_R|B^n)_{\upsilon^n(\sigma^{A^n})} 
    \label{eq:covering_converse} \\
    &=   \inf_{\sigma^{A^n} \in \calD(A^{\tensor n}):\|(\rho^{B})^{\tensor n} - \calN^{\tensor n}(\sigma^{A^n})\|_1 \leq \epsilon} H(B^n)_{\upsilon^n(\sigma^{A^n})} -H(A^n_R,B^n)_{\upsilon^n(\sigma^{A^n})} \nonumber \\
    &\stackrel{a}{\geq}   \inf_{\sigma^{A^n} \in \calD(A^{\tensor n}):\|(\rho^{B})^{\tensor n} - \calN^{\tensor n}(\sigma^{A^n})\|_1 \leq \epsilon} H(B^n)_{(\rho^{B})^{\tensor n}} -H(A^n_R,B^n)_{\upsilon^n(\sigma^{A^n})} - n\tilde{\epsilon}_1 \nonumber \\
    &\stackrel{b}{\geq} n \min_{\sigma^{A} \in \calD(A): \| \rho^{B}-\mathcal{N}(\sigma^{A}) \|_1 \leq \epsilon} \nonumber
I_c(\sigma^{{A}},\calN)  -2n\tilde{\epsilon}_1, 
\end{align}
where 
$(a)$ follows from the Fannes-Audenart inequality, the constraint on $\sigma^{A^n}$ and defining $\tilde{\epsilon}_1\deq \epsilon\log|B| + h_b(\epsilon)$, and   $(b)$ follows from employing the same reasoning as that used in \cite[Eqs.~(68)-(72)]{atif2023lossy}. The converse proof is completed by using the continuity arguments provided in Ref.~\citen{atif2023lossy}.
\end{proof}

\begin{myremark}
The matrix tail bounds \cite{ahlswede2002strong} and their subsequent developments \cite{tropp:book,Sen-oneshot-covering} can be thought of as soft-covering results for classical-quantum (cq) channels, modelled as maps $W:\cX\rightarrow \cD(B)$ acting as $\cX \ni x \mapsto W_x \in \cD(B)$. Given the channel and a probability distribution $P$ on the (discrete) input alphabet $\cX$, we are looking for the minimum $N$ such that $W(P) = \sum_x P(x) W_x$ is approximated by an empirical average $\frac1N \sum_{i=1}^N W_{x_i}$. In the i.i.d.~limit of channels $W^{\ox n}$, the optimal rate $\frac1n\log N$ approaches $I(X:B)$, the Holevo information of the ensemble $\{P(x),W_x\}$ \cite{ahlswede2002strong}; in the one-shot setting, the role of the mutual information is taken by a suitable one-shot information \cite{Sen-oneshot-covering}. 

The reason that there we see a mutual information and not a coherent information appearing is that the problem statement restricts us to a sample from the given discrete alphabet, as the input of the cq-channel is literally a classical system. To compare to the above definitions and results, we would need to interpret $W$ as the CPTP map $\cN(\rho) = \sum_x \bra{x}\rho\ket{x} W_x$ acting on the input quantum system $X=\CC^{|\cX|}$, and identify $P$ with the input state $\rho=\sum_x P(x)\proj{x}$. However, our Definitions \ref{def:qc covering setup} and \ref{def:qu covering code}, as well as Theorems \ref{thm:CoveringOneShotSmooth} and \ref{thm:asympCovering} do not restrict the approximating input $\sigma$ to density matrices diagonal in the classical basis $\{\ket{x}\}$, and that accounts for the different rate behaviour, now governed by the coherent information. 
\end{myremark}

As our next main result, we obtain a second order achievability bound. 
The asymptotic result in Theorem \ref{thm:asympCovering} states that the rank of the smallest quantum covering code scales exponentially with the exponent $nI_c(\cdot)$. The objective of the second order analysis is to obtain a more precise estimate of the exponent
by deriving the coefficient of $\sqrt{n}$ present in the exponent. The exploration of this field was initially introduced by Strassen in 1962, specifically addressing the classical communication problem.
This was later refined by other authors  \cite{hayashi2009information,polyanskiy2010channel}.
In the quantum domain, Tomamichel and Hayashi \cite{tomamichel2013hierarchy}, and Li \cite{li2014second}  
were the first to obtain tight results concerning second order asymptotics.
See also Refs.~\citen{beigi2014quantum,beigi2016decoding}, where new proofs toward studying the second order behaviour, inspired from Ref.~\citen{yassaee2013technique}, 
are given utilizing the collision relative entropy. 

\begin{mytheorem}[Second order achievability bound] 
\label{thm:second_order}
For a given $(\rho^B,\Wchannel)$ quantum covering setup, 
and a density operator $\sigma^{A} \in \mathcal{S}(\rho^B,\Wchannel)$, 
there exists an $(n,\Theta,\epsilon)$ covering code such that the rate scales as
\begin{align*}
    \frac{1}{n}\log{\Theta} = \sq{I_c(\sigma^{A},\calN) - \frac{1}{\sqrt{n}}\sqrt{V(\sigma^{A},\calN)}\Phi^{-1}\bigg(\frac{\epsilon^2}{100}\bigg) + O\left( \frac{\log n}{n} \right)}^+,   
\end{align*}
where 
\[
    V(\sigma^{A},\calN) \deq V(\rho^{B A_R} \| \rho^B \otimes \1_{A_R}),
\]
$\rho^{BA_R} \deq (\calN\tensor \id )\Phi_\sigma^{A \bA}$, and $\Phi_\sigma^{A\bA}$ is a purification of $\sigma^A$. 
\end{mytheorem}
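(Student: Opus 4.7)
The plan is to combine the one-shot direct bound of Theorem~\ref{thm:CoveringOneShotSmooth}, applied to the $n$-fold i.i.d.\ setup $(({\rho^B})^{\otimes n},\calN^{\otimes n})$ with input $\sigma^{\otimes n}$, together with the asymptotic expansion of the smooth max-relative entropy in Lemma~\ref{lem:hypoRelativeSecondOrder}. The key translation step is to rewrite the smooth min-entropy $H_{\min}^{\delta}(A_R^n\mid B^n)_{\omega^{\otimes n}}$ as a smooth max-relative entropy against a product reference operator, to which the AEP can be directly applied.

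First I would record the identification of the first- and second-order single-letter quantities. A short calculation shows that
\[
  I_c(\sigma^A,\calN) = D\bigl(\omega^{A_RB}\,\bigl\|\,\1_{A_R}\otimes \omega^B\bigr),
\]
where $\omega^{A_RB} = (\id\otimes \calN)\Phi_\sigma^{A_RA}$, and that the variance appearing in the theorem statement,
$V(\sigma^A,\calN) = V(\omega^{A_RB}\|\omega^B\otimes\1_{A_R})$, coincides with the information variance of $\omega$ relative to the same product operator. This lets us express both leading terms in the format required by Lemma~\ref{lem:hypoRelativeSecondOrder}.

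Next I would relate the smoothed min-entropy to the smoothed max-relative entropy. Since for every $\tau^B$ the non-smooth bound $-H_{\min}(A_R\mid B)_{\tilde\omega} \leq D_{\max}(\tilde\omega\,\|\,\1_{A_R}\otimes\tau^B)$ holds on each element of the purified-distance ball, and both $H_{\min}^{\delta}$ and $D_{\max}^{\delta}$ smooth over the same ball $\mathscr{B}^{\delta}(\omega)$, one obtains
\[
  -H_{\min}^{\delta}(A_R^n\mid B^n)_{\omega^{\otimes n}}
  \;\leq\; D_{\max}^{\delta}\bigl(\omega^{\otimes n}\,\bigl\|\,(\1_{A_R}\otimes\omega^B)^{\otimes n}\bigr).
\]
Applying Lemma~\ref{lem:hypoRelativeSecondOrder} with $\rho=\omega^{A_RB}$ and $\sigma=\1_{A_R}\otimes\omega^B$, the right-hand side expands as
\[
  n\,I_c(\sigma^A,\calN) \;-\; \sqrt{n\,V(\sigma^A,\calN)}\;\Phi^{-1}(\delta^2) \;+\; O(\log n).
\]

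Finally I would calibrate the smoothing parameters in Theorem~\ref{thm:CoveringOneShotSmooth}. Choosing $\delta = \epsilon/10$ and $\eta = \epsilon/40$ makes the error $8(\delta+\eta) = \epsilon$, and substituting into the one-shot achievability yields
\[
  \log\Theta \;\leq\; -H_{\min}^{\delta}(A_R^n\mid B^n)_{\omega^{\otimes n}} \,-\, 2\log\eta
            \;\leq\; n\,I_c(\sigma^A,\calN) \,-\, \sqrt{n\,V(\sigma^A,\calN)}\,\Phi^{-1}\!\bigl(\epsilon^2/100\bigr) + O(\log n),
\]
since $-2\log\eta = 2\log(40/\epsilon)$ is an $n$-independent constant absorbed into $O(\log n)$. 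Dividing by $n$ and inserting the $[\,\cdot\,]^+$ (which merely reflects that $\log\Theta\geq 0$) produces the stated bound. The main delicacy is the smoothing-compatibility step relating $H_{\min}^{\delta}$ and $D_{\max}^{\delta}$ uniformly over the ball, together with the bookkeeping needed to absorb the $\log(1/\eta)$ and other constants into the $O(\log n)$ remainder without disturbing the leading $\sqrt{n}$ coefficient.
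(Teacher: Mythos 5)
Your proof is correct and matches the paper's own argument: both start from the one-shot bound of Theorem~\ref{thm:CoveringOneShotSmooth} on the i.i.d.\ extension, rewrite $-H_{\min}^{\delta}$ as a smooth max-relative entropy against a product reference $\1_{A_R^n}\otimes(\rho^B)^{\otimes n}$ (the paper uses the definitional equality $-H_{\min}^{\delta}=\inf_{\tau}D_{\max}^{\delta}(\cdot\,\|\,\1\otimes\tau)$ and then bounds the infimum, while you establish the same inequality directly by commuting the two optimizations, which is an equivalent step), invoke Lemma~\ref{lem:hypoRelativeSecondOrder}, and pick $\delta=\epsilon/10$, $\eta=\epsilon/40$. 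The parameter choice and the resulting $\Phi^{-1}(\epsilon^2/100)$ agree exactly with the paper.
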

\begin{proof}
Given the pair $(\rho^B,\calN)$ and $\sigma^{A} \in \calS(\rho^B,\calN)$, consider the $n$-letter pair $((\rho^B)^{\tensor n},\calN^{\tensor n})$, and let $\sigma^{A^n}\deq (\sigma^A)^{\tensor n}$. From Theorem \ref{thm:CoveringOneShotSmooth}, it follows  that for all $\delta,\eta \in (0,1)$ there exists a $(n,\Theta,8(\delta+\eta))$  quantum covering code such that $\log \Theta  \leq  \sq{-\hmin^{\delta}(\bAn|B^n)_{\rho^{B^n\bAn}} - 2\log\eta}^+$, where $\rho^{B^n\bAn} \deq (\calN^{\tensor n} \tensor \id)\Phi_{\sigma}^{A^n \bAn}$, and $\Phi_{\sigma}^{A^n \bAn}$ is a purification of $\sigma^{A^n}$.
We now bound $\log \Theta$ as
\begin{align*} 
  \log \Theta & \leq \sq{-\hmin^{\delta}(\bAn|B^n)_{(\rho^{B\bA})^{\tensor n}} - 2\log\eta}^+, \nonumber\\
 & \overset{a}{=} \sq{\inf_{\tau^{B^n} \in \calD(\calH_B^{\tensor n})}D_{\max}^{\delta}({(\rho^{B\bA})^{\tensor n}}\| \1_{\bAn}\tensor \tau^{B^n}) - 2\log\eta}^+ \nonumber \\
& \leq \sq{D_{\max}^{\delta}({(\rho^{B\bA})^{\tensor n}}\| \1_{\bAn}\tensor (\rho^B)^{\tensor n}) - 2\log\eta}^+
\end{align*}
where $(a)$ follows from the definition of $\hmin^{\delta}$ \cite[Def.~3]{tomamichel2013hierarchy}. 
Choosing $\eta= \delta/4$ and 
using Lemma \ref{lem:hypoRelativeSecondOrder} gives the desired result.
\end{proof}



We now move on to considering the three applications of the quantum covering problem. 

\section{Lossy quantum source coding}
\label{QSC:sec:mainResults_lossy}

The history of this problem dates back to the work of Barnum \cite{barnum2000quantum}, who introduced the idea of a local distortion criterion using entanglement fidelity and formulated a quantum rate distortion problem. 
Barnum conjectured the minimal coherent information of the quantization (forward) channel to characterize the asymptotic performance limit of this problem. However, in Ref.~\citen{datta2012quantum} a regularised expression using the concept of entanglement of purification was obtained. Further notable investigations for related scenarios involving additional resources have been performed in studies subsequently \cite{wilde2013auxiliary,datta2013quantum,datta2013one,khanian2021rate,khanian2022general,baghali2022rate,koashi2001compressibility,devetak2002quantum,winter2002compression,hsieh2016channel,salek2018quantum,anshu2019convex}. Recent progress was made in Ref.~\citen{atif2023lossy} where the use of a posterior (reverse) channel distortion criterion to quantify distortion was proposed, together with a global error criterion instead of the conventional average distortion measure. This formulation, which was inspired by certain approximation results in classical source and channel coding problems \cite{shamai1997empirical, pradhan2004approximation}, provided a more optimistic viewpoint on the rate-distortion problem, as the formulation yielded the single-letter minimal coherent information of the posterior reference channel as the asymptotic performance limit. 
In this work, we aim to characterize the performance limit of a one-shot formulation for the aforementioned rate distortion problem. To begin, we establish the following notation.

For a given Hilbert space $A$, its reference $A_R$, and two orthonormal basis $\{\ket{i}\}^A$ and $\{\ket{i}\}^{A_R}$ corresponding to the Hilbert spaces $A$ and $A_R$, respectively,  we define the unnormalized maximally entangled state $\Gamma_{A_R A}$ as $\sum_i \ket{i}^{A_R} \tensor \ket{i}^{A}$. 
The transpose of a state $\tau \in \calL(A)$  
is defined as $\tau^T \deq \sum_{i,i'} \ket{i}\bra{i'}^{A_R} \bra{i'} \tau \ket{i}^A$.
Here, we focus exclusively on references obtained from canonical purifications of quantum states \cite[Lemma 14]{winter}, and for a given density operator $\sigma^A \in \calD(A)$, define the canonical purification $\ket{\psi_\sigma}^{\bA A}$ of $\sigma^A$ as $\ket{\psi_\sigma}^{\bA A} \deq (\1_{\bA}\tensor \sqrt{\sigma^A})\Gamma_{\bA A}$. We use $\Psi_\sigma^{\bA A}$ to denote the density operator corresponding to $\ket{\psi_\sigma}^{\bA A}$. Note that $\sigma^{A_R} \deq \Tr_{A}\round{{\Psi_\sigma}^{\bA A}} = \round{\sigma^{A}}^T$.
As {is the convention}, for two states acting on the same Hilbert space, we use the same $\Gamma$ when defining their canonical purifications. 
Recalling the notion of a posterior reference map, we have the following.

\begin{figure}
    \centering
    \includegraphics[trim={0 0 0 0},clip,width=\textwidth]{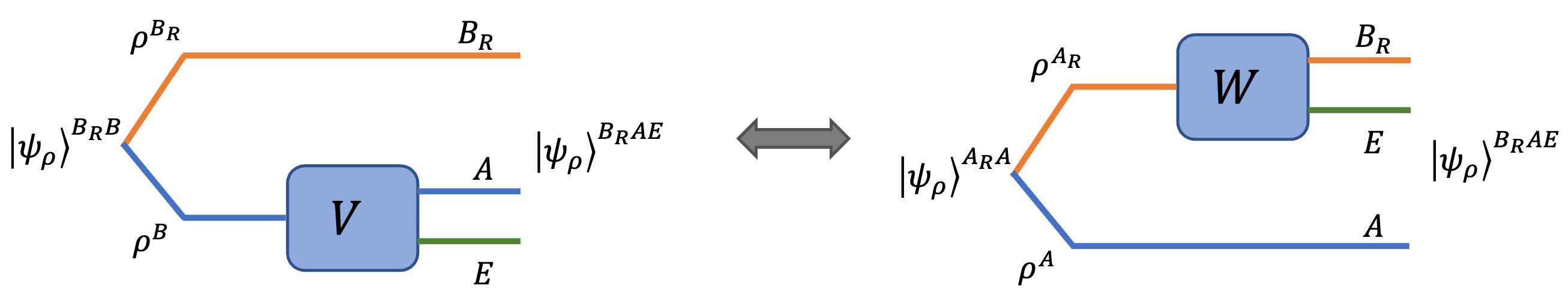}
    \caption{Figure demonstrating the construction of the posterior reference map $W$ from the isometry $V$ (the Stinespring's dilation of $\calN_V$) and the source state $\rho^B$.}
    \label{fig:reverseChannel}
\end{figure}

\begin{mylemma}[Posterior reference map \cite{atif2023lossy}]\label{def:VBar}
    Given a source $\rho^B \in \calD(B)$ and a channel $\calN_V: {B} \rightarrow A$, let $\rho^A \deq \calN_V(\rho^B)$.  Let $V: {B} \rightarrow {A}\otimes E$ be a Stinespring's isometry corresponding to the CPTP map $\calN_V$ with $\dim(E) \geq \dim(A)$, such that $\calN_V(\cdot) = \Tr_{E}\{V(\cdot)V^\dagger\}$. 
    Then there exists a unique CPTP map $\calN_{W}:{\bA} \rightarrow {\bB}$, with an associated isometry $W:{\bA} \rightarrow {\bB} \otimes E$, which we refer to as the \emph{posterior reference map} of $V$ with respect to  $\rho^A$, satisfying 
    \begin{align}
    \label{eq:WV_postrefmap}
        (W \otimes \1_{A})\ket{\psi_\rho}^{\bA A} = (\1_{\bB}\tensor V)\ket{\psi_\rho}^{\bB B},
    \end{align} 
    where $\ket{\psi_\rho}^{\bA A}$ and $\ket{\psi_\rho}^{\bB B}$ are the canonical purifications of $\rho^A$ and $\rho^B$, respectively, as shown in Figure \ref{fig:reverseChannel}.  
\end{mylemma}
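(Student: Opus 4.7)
The plan is to recognize this as a direct application of Uhlmann's theorem on purifications of a common density operator. First I would verify that both sides of the claimed identity live (up to reordering of tensor factors) on the common Hilbert space $\bB \otimes E \otimes A$, with $A$ playing the role of the ``system'' factor on each side. Tracing out everything except $A$, the left-hand side gives $\Tr_{\bA}\Psi_\rho^{\bA A} = \rho^A$, since $\ket{\psi_\rho}^{\bA A}$ is by construction a purification of $\rho^A$. The right-hand side gives $\Tr_{\bB E}\bigl[(\1_{\bB} \otimes V)\Psi_\rho^{\bB B}(\1_{\bB} \otimes V^\dagger)\bigr] = \Tr_E\bigl[V \rho^B V^\dagger\bigr] = \calN_V(\rho^B) = \rho^A$. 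Hence both sides are purifications of the same state $\rho^A$ on the same total space.

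Next I would invoke Uhlmann's theorem: any two purifications of the same state on a fixed system are related by an isometry acting on their purifying factors. Here the purifying factor of the left-hand side is $\bA$ while that of the right-hand side is $\bB \otimes E$, so the theorem supplies an isometry $W : \bA \to \bB \otimes E$ realizing the required identity. The hypothesis $\dim(E) \geq \dim(A)$, together with $\dim(\bA) = \dim(A)$, ensures $\dim(\bB \otimes E) \geq \dim(\bA)$, so $W$ can be chosen as a genuine (and not merely partial) isometry. I would then define the posterior reference map by $\calN_W(\tau) \deq \Tr_E\bigl[W \tau W^\dagger\bigr]$ on $\bA$, which is manifestly CPTP since it is in Stinespring form, with $W$ itself being the desired dilation.

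For uniqueness I would observe that the identity pins $W$ down on $\operatorname{supp}\bigl((\rho^A)^T\bigr) \subseteq \bA$, i.e.\ the support of the marginal of $\Psi_\rho^{\bA A}$ on the reference system. If $\rho^A$ has full support this determines $W$ on all of $\bA$, and hence $\calN_W$ is unique; otherwise, any completion of $W$ on the orthogonal complement yields an admissible extension, and all such extensions agree with the uniquely specified action on the support, so $\calN_W$ is unique on the relevant support (which is the only regime in which the defining identity exerts any control). The main (quite minor) obstacle is the dimension bookkeeping needed to promote the Uhlmann partial isometry to a genuine isometry $W : \bA \to \bB \otimes E$; this is exactly what the hypothesis $\dim(E) \geq \dim(A)$ in the statement is there to ensure. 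Conceptually, the entire argument collapses to a one-line application of Uhlmann's theorem once both sides of \eqref{eq:WV_postrefmap} are identified as purifications of $\rho^A$.
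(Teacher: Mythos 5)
Your proof is correct, and the Uhlmann argument is the natural conceptual route to this lemma. Note first that this paper itself does not reprove Lemma~\ref{def:VBar}, deferring to Ref.~\citen{atif2023lossy}; but the explicit formula appearing in Section~\ref{app:petz_reference},
\[
  W\ket{a}^{\bA} = \sum_{b} \sqrt{\tfrac{\mu_b}{\lambda_a}}\, \ket{b}^{\bB} \otimes \langle a|V|b\rangle,
\]
shows what the constructive version of the argument looks like: one expands both sides of Eq.~\eqref{eq:WV_postrefmap} in the eigenbases $\{\ket{a}\}$ of $\rho^A$ and $\{\ket{b}\}$ of $\rho^B$, and reads off $W$ as the map sending the Schmidt vectors of $\ket{\psi_\rho}^{\bA A}$ on $\bA$ to the corresponding Schmidt vectors of $(\1_{\bB}\otimes V)\ket{\psi_\rho}^{\bB B}$ on $\bB\otimes E$. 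Your Uhlmann invocation is exactly this observation packaged abstractly: since Uhlmann's theorem is itself proved by comparing Schmidt decompositions of two purifications with a common marginal, the two presentations are the same argument at different levels of detail. Your dimension bookkeeping ($|\bB\otimes E|\geq |E|\geq |A|=|\bA|$, so the partial isometry extends to a genuine isometry $W:\bA\to\bB\otimes E$) is correct and is precisely what the hypothesis $\dim E\geq\dim A$ is for. You are also right to flag the uniqueness caveat: Eq.~\eqref{eq:WV_postrefmap} pins $W$ down only on $\operatorname{supp}\bigl((\rho^A)^T\bigr)$, so $\calN_W$ is literally unique only when $\rho^A$ has full rank; otherwise the statement should be read as uniqueness of the restriction to that support, which is the only part that ever enters the subsequent source-coding arguments. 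The explicit route has the modest extra payoff of giving $W$ in closed form, which is what feeds directly into the Petz-map identification of Proposition~\ref{prop:petz_reference}, whereas the Uhlmann route gets existence with less computation but would still require the explicit expansion to derive that corollary.
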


  The posterior reference map exhibits a interesting connection with the well-known Petz recovery map \cite{petz1986sufficient}. In particular, the posterior reference map with respect to a state is equal to a transposed Petz recovery channel acting on the reference Hilbert space of the given state as stated below. 
  
\begin{myproposition}[Posterior reference map as a Petz recovery map] \label{prop:petz_reference}
    Given an input state $\rho^B$ and a channel $\calN_V$, for an arbitrary state $\sigma^{A_R}$, we have
    \[
    \calN_W(\sigma^{A_R}) = (\rho^{\bB})^{1/2} \round{\calN_V^\dagger\round{(\rho^A)^{-1/2} \sigma^A(\rho^A)^{-1/2}}}^T(\rho^{\bB})^{1/2}, 
    \]
    where $\rho^A = \calN_V(\rho^B)$, $\sigma^A = \Tr_{\bA}{\Psi_{\sigma}^{A \bA}}=\round{\sigma^{A_R}}^T$, $\rho^{B_R} = \Tr_{B}{\Psi_{\rho}^{B \bB}}$.
\end{myproposition}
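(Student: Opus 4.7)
The plan is to verify the identity by deriving an explicit Kraus decomposition of $\calN_W$ from the defining relation \eqref{eq:WV_postrefmap} via the transpose trick on maximally entangled vectors, and then matching the result with the Kraus expansion of the claimed right-hand side.

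To set up, fix an orthonormal basis $\{\ket{e}\}_e$ of $E$ and write $V = \sum_e V_e \otimes \ket{e}^E$ with $V_e : B \to A$, and $W = \sum_e W_e \otimes \ket{e}^E$ with $W_e : \bA \to \bB$. Projecting \eqref{eq:WV_postrefmap} onto each $\ket{e}$ yields, for every $e$,
\[
(W_e \otimes \1_A)\ket{\psi_\rho}^{\bA A} = (\1_{\bB} \otimes V_e)\ket{\psi_\rho}^{\bB B}.
\]
Substituting the canonical-purification identities $\ket{\psi_\rho}^{\bA A} = (\1 \otimes \sqrt{\rho^A})\Gamma_{\bA A}$ and $\ket{\psi_\rho}^{\bB B} = (\1 \otimes \sqrt{\rho^B})\Gamma_{\bB B}$, and then applying the transpose trick in its square form $(\1 \otimes M)\Gamma_{XX} = (M^T \otimes \1)\Gamma_{XX}$ together with its cross-space variant $(\1_{\bB} \otimes N)\Gamma_{\bB B} = (N^T \otimes \1_A)\Gamma_{\bA A}$ for $N : B \to A$, both sides can be rewritten in the form $(M \otimes \1_A)\Gamma_{\bA A}$ with $M : \bA \to \bB$: the LHS becomes $(W_e \sqrt{\rho^{\bA}} \otimes \1_A)\Gamma_{\bA A}$ and the RHS becomes $(\sqrt{\rho^{\bB}}\, V_e^T \otimes \1_A)\Gamma_{\bA A}$, where $\rho^{\bA}=(\rho^A)^T$ and $\rho^{\bB}=(\rho^B)^T$. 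Injectivity of the vectorization map $M \mapsto (M \otimes \1)\Gamma$ then gives $W_e \sqrt{\rho^{\bA}} = \sqrt{\rho^{\bB}}\, V_e^T$, and so, under the standing full-support assumption on $\rho^A$,
\[
W_e = \sqrt{\rho^{\bB}}\, V_e^T\, (\sqrt{\rho^{\bA}})^{-1};
\]
the rank-deficient case is handled by the natural Moore--Penrose extension on the support.

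With this decomposition in hand, $\calN_W(\sigma^{A_R}) = \sum_e W_e \sigma^{A_R} W_e^\dagger$ collapses to $\sum_e \sqrt{\rho^{\bB}}\, V_e^T (\sqrt{\rho^{\bA}})^{-1} \sigma^{A_R} (\sqrt{\rho^{\bA}})^{-1} \overline{V_e}\, \sqrt{\rho^{\bB}}$. Expanding the proposition's right-hand side independently via $\calN_V^\dagger(\tau) = \sum_e V_e^\dagger \tau V_e$ and distributing the outer transpose through $(XYZ)^T = Z^T Y^T X^T$, using $(\rho^A)^T = \rho^{\bA}$ and $(\sigma^A)^T = \sigma^{A_R}$, produces exactly the same sum term by term, completing the verification. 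The only real obstacle is bookkeeping: one must keep track of the four interrelated spaces $A, \bA, B, \bB$, the canonical basis identifications implicit in the transpose operation, and the pseudoinverse convention when $\rho^A$ is not of full rank; once these conventions are fixed in the framework of Lemma \ref{def:VBar}, the argument reduces to a short linear-algebraic identity.
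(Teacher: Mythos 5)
Your proof is correct, and it takes a genuinely different route from the paper's. The paper fixes the eigenbases of $\rho^A$ and $\rho^B$ and computes $\calN_W(\sigma^{\bA})$ through its matrix elements $\theta_{aa'}$, manipulating sums over $a,a',b,b'$ until the Petz form emerges. You instead work at the level of Kraus operators: decomposing $V=\sum_e V_e\otimes\ket{e}^E$ and $W=\sum_e W_e\otimes\ket{e}^E$, projecting Eq.~\eqref{eq:WV_postrefmap} onto each $\ket{e}$, and applying the vectorization identity $(\1\otimes M)\Gamma=(M^T\otimes\1)\Gamma$ (and its cross-space variant) to extract the clean operator identity $W_e=\sqrt{\rho^{\bB}}\,V_e^T\,(\sqrt{\rho^{\bA}})^{-1}$. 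The rest is indeed just $\sum_e W_e\,\cdot\,W_e^\dagger$ versus the Kraus expansion of $\calN_V^\dagger$ together with the distributivity of the transpose. Both arguments are comparably short, but yours is arguably more conceptual: it avoids the four-fold index sum and, as a by-product, records the Kraus-level relationship between $W$ and $V$, which is a useful fact in its own right (it is essentially the Kraus form of the Petz recovery map, transposed). You also handle the full-support caveat explicitly via pseudoinverses, which the paper glosses over but implicitly assumes as well. One small point worth making explicit in a write-up: the step $(\sqrt{\rho^A})^T=\sqrt{(\rho^A)^T}=\sqrt{\rho^{\bA}}$ and $(V_e^\dagger)^T=\overline{V_e}$ are what make the final term-by-term matching work; you invoke them correctly but they deserve a line each so the bookkeeping is fully visible.
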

\begin{proof}
    A proof of this statement is provided in Section \ref{app:petz_reference}.
\end{proof}


 Continuing with the source coding setup, we proceed by recalling the formulation of the problem \cite{atif2023lossy}.

\begin{mydefinition}[{Quantum source coding setup}] \label{def:qc source coding setup}
A quantum source coding setup is characterized by a pair $(\rho^B,\calN_W)$, where $\rho^B \in \mathcal{D}(B)$ is a density operator,  $A$ is a reconstruction Hilbert space, and $\calN_W$ is a CPTP map from ${\bA}$ to ${\bB}$, where ${\bA}$ and ${\bB}$ are reference spaces corresponding to $A$ and $B$, respectively.
\end{mydefinition}

\begin{mydefinition}[Lossy quantum compression protocol] \label{def:protocolcompression}
For a given input and reconstruction Hilbert spaces $(B,A)$, 
an $(n,\Theta,\epsilon)$ lossy quantum compression protocol (as shown in Figure \ref{fig:q_protocol}) consists of an encoding CPTP map $\calE^{(n)}:{B^{\tensor n}} \rightarrow {M}$  and a decoding CPTP map $\calD^{(n)}:M \rightarrow {A^{\tensor n}}$, such that $\dim M = \Theta$, and
\begin{align} 
\label{def:protocolError}
\frac12 \left\|{\omega^{\bBn A^n} - (  \calN_W^{\otimes n} \tensor \id_{A^n} ) \Psi_{\omega}^{\bAn A^n }}\right\|_1 \leq \epsilon,   
\end{align}
where $\omega^{\bBn A^n} \deq (\id \otimes \calD^{(n)})(\id\otimes \calE^{(n)}) (\Psi_{\rho}^{\bBn B^n })$, and $\Psi_{\rho}^{\bBn B^n}$ and $\Psi_{\omega}^{\bAn A^n } $ are the canonical purifications of ${\rho^B}^{\otimes n}$ and $\omega^{A^n}$, respectively.
\end{mydefinition}

\begin{figure}
    \centering
    \includegraphics[scale=0.7]{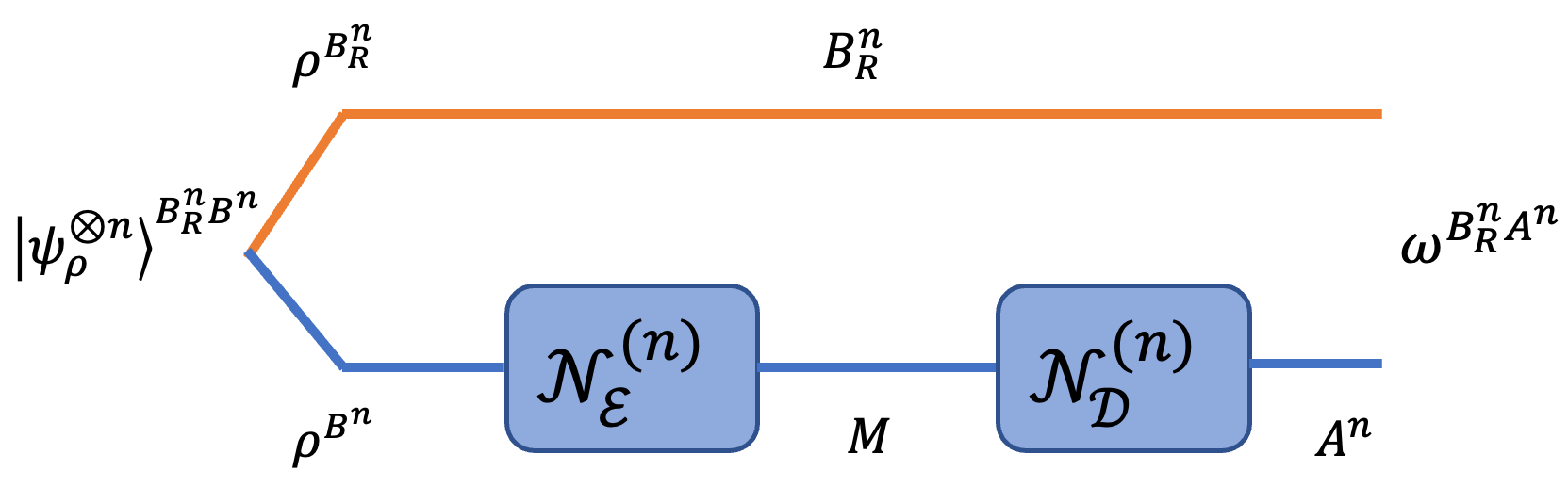}
    \caption{Illustration of Lossy Quantum Compression protocol}
    \label{fig:q_protocol}
\end{figure}

\begin{mydefinition}[Achievability]
\label{def:achievability}
For a quantum source coding setup $(\rho^B,\calN_W)$, a rate $R$ is said to be achievable for lossy quantum source coding, if for all $ \epsilon > 0 $ and all sufficiently large $n$, there exists an $ (n, \Theta,\epsilon) $ lossy quantum compression protocol satisfying
$ \frac{1}{n}\log{\Theta} \leq R + \epsilon$.
\end{mydefinition}

As a first result of this section, we aim to provide  a characterization to the one-shot lossy source coding problem. 
To achieve this, we rely on the properties of posterior reference maps \cite{atif2023lossy}, which enable us to establish the following equivalence between the quantum covering problem and the lossy quantum source coding problem.



\begin{mylemma}[Equivalence of quantum lossy source coding and quantum covering] \label{lem:coveringAndcompression}
Consider an arbitrary state  $\rho^B \in \calD(B)$ and a CPTP map $\calN_W:A_R \rightarrow B_R$. 
The existence of an $(n,\Theta,\epsilon)$ lossy quantum compression protocol 
for the  
$(\rho^B,\calN_W)$ setup,
with the restriction that the decoding map ${\calD}^{(n)}$ is an isometry, 
implies that
there exists  an $(n,\Theta,\epsilon)$ quantum covering code for the $(\rho^{\bB}, \calN_W)$ setup, where $\rho^{\bB}=\round{\rho^B}^T$.
Conversely,  the existence of 
an $(n,\Theta,\epsilon)$  quantum covering  code for $(\rho^{B_R},\calN_W)$ setup implies that 
there exists  an $(n,\Theta,4\sqrt{\epsilon})$ lossy quantum compression protocol for the $(\rho^B, \calN_W)$ setup.
\end{mylemma}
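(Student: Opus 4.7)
The plan rests on the posterior reference map identity of Lemma \ref{def:VBar}, which links the reverse (covering) and forward (source coding) tasks via canonical purifications. The two directions of the equivalence translate between a protocol and the low-rank reference state attached to its reconstruction marginal.

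For the forward direction (compression $\Rightarrow$ covering), I observe that when the decoder $\calD^{(n)}:M\to A^n$ is an isometry with $\dim M=\Theta$, the reconstruction $\omega^{A^n}$ has rank at most $\Theta$, and so does its transpose $\hat\sigma^{\bAn}\deq(\omega^{A^n})^T$. I propose this $\hat\sigma^{\bAn}$ as the covering code. The $\bBn$-marginal of $\omega^{\bBn A^n}\deq(\id\otimes\calD^{(n)}\calE^{(n)})\Psi_\rho^{\bBn B^n}$ is exactly $\rho^{\bBn}$ (since $\calD^{(n)}\calE^{(n)}$ is trace-preserving), whereas the $\bBn$-marginal of $(\calN_W^{\otimes n}\otimes\id)\Psi_\omega^{\bAn A^n}$ is $\calN_W^{\otimes n}(\hat\sigma^{\bAn})$, because the $\bAn$-marginal of the canonical purification $\Psi_\omega^{\bAn A^n}$ equals $(\omega^{A^n})^T=\hat\sigma^{\bAn}$. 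Monotonicity of trace distance under partial trace over $A^n$, applied to \eqref{def:protocolError}, then yields $\frac12\|\rho^{\bBn}-\calN_W^{\otimes n}(\hat\sigma^{\bAn})\|_1\leq\epsilon$.

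For the converse (covering $\Rightarrow$ compression), given a rank-$\Theta$ state $\hat\sigma^{\bAn}$ with $\frac12\|\rho^{\bBn}-\calN_W^{\otimes n}(\hat\sigma^{\bAn})\|_1\leq\epsilon$, set $\hat\omega^{A^n}\deq(\hat\sigma^{\bAn})^T$ and $\xi^{\bBn A^n}\deq(\calN_W^{\otimes n}\otimes\id)\Psi_{\hat\omega}^{\bAn A^n}$, so that $\xi^{\bBn}=\calN_W^{\otimes n}(\hat\sigma^{\bAn})$ is $\epsilon$-close to $\rho^{\bBn}$ and $\xi^{A^n}=\hat\omega^{A^n}$. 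A Stinespring isometry $W^{\otimes n}:\bAn\to\bBn\otimes E^n$ for $\calN_W^{\otimes n}$ purifies $\xi$ to $|\Xi\rangle=(W^{\otimes n}\otimes\1)|\psi_{\hat\omega}\rangle^{\bAn A^n}$. Fuchs--van de Graaf (Lemma \ref{lem:relationshipTraceFidelity}) gives $F(\rho^{\bBn},\xi^{\bBn})\geq(1-\epsilon)^2$, and Uhlmann's theorem yields an isometry $U:B^n\to A^n\otimes E^n$ (after a harmless dimensional padding if needed) such that the channel $\calC(X)\deq\Tr_{E^n}(UXU^\dagger)$ satisfies $\frac12\|(\id\otimes\calC)\Psi_\rho^{\bBn B^n}-\xi\|_1\leq\sqrt{2\epsilon}$ by monotonicity. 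To cap the output rank at $\Theta$, I post-compose $\calC$ with the plug-in channel $\calF(Y)\deq\Pi Y\Pi+\Tr[(\1-\Pi)Y]\proj{\phi_0}$ for the rank-$\Theta$ projector $\Pi$ onto $\operatorname{supp}\hat\omega^{A^n}$ and an arbitrary $\ket{\phi_0}\in\operatorname{supp}\Pi$; since $\xi$ is supported on $\bBn\otimes\operatorname{supp}\Pi$ on the $A^n$ factor, $\calF$ acts as the identity on $\xi$ and the bound is preserved. The composition $\calF\circ\calC$ then factors as $V\circ\calE^{(n)}$ through a system $M\cong\operatorname{supp}\Pi$ of dimension $\Theta$ (where $V:M\to A^n$ is the isometric embedding and $\calE^{(n)}\deq V^\dagger\calF\calC(\cdot)V$), producing a valid compression protocol.

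The last step is to pass from closeness with $\xi=(\calN_W^{\otimes n}\otimes\id)\Psi_{\hat\omega}^{\bAn A^n}$ to the actual protocol error of Definition \ref{def:protocolcompression}, which instead involves $(\calN_W^{\otimes n}\otimes\id)\Psi_\omega^{\bAn A^n}$ for the true reconstruction marginal $\omega^{A^n}\neq\hat\omega^{A^n}$. Tracing the previous bound over $\bBn$ gives $\frac12\|\omega^{A^n}-\hat\omega^{A^n}\|_1\leq\sqrt{2\epsilon}$; Lemma \ref{lem:closenessofPurification} then controls the fidelity between $\Psi_\omega^{\bAn A^n}$ and $\Psi_{\hat\omega}^{\bAn A^n}$, which propagates through $\calN_W^{\otimes n}\otimes\id$ by data processing, and a triangle inequality completes the argument. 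The main technical hurdle is the constant-tracking at this last step: a naive trace-distance bookkeeping degrades to an $\epsilon^{1/4}$ scaling, so reaching the claimed $4\sqrt{\epsilon}$ requires routing the argument through the purified distance, which is Lipschitz under the canonical purification map and exact in Uhlmann's theorem, and converting back to trace distance only at the very end.
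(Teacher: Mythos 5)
Your forward direction (compression $\Rightarrow$ covering) is correct and is essentially the paper's argument: partial-trace monotonicity applied to Eq.~\eqref{def:protocolError} over $A^n$, plus the observation that an isometric decoder preserves the rank bound $\Theta$ on $\omega^{A^n}$.

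The converse direction, however, has a genuine gap in the error tracking, and you correctly sense it yourself at the end. In your construction you pay a square root twice. First, Fuchs--van de Graaf plus Uhlmann gives $\tfrac12\|(\id\otimes\calF\circ\calC)\Psi_\rho-\xi\|_1\leq\sqrt{2\epsilon}$, and hence $\tfrac12\|\omega^{A^n}-\hat\omega^{A^n}\|_1\leq\sqrt{2\epsilon}$. Second, to compare the protocol output against $(\calN_W^{\otimes n}\otimes\id)\Psi_\omega^{\bAn A^n}$ (which uses the canonical purification of the \emph{actual} marginal $\omega^{A^n}$, not of $\hat\omega^{A^n}$), Lemma~\ref{lem:closenessofPurification} costs another square root, producing an $O(\epsilon^{1/4})$ bound. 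Routing through the purified distance does not save you: the protocol error criterion is written with the \emph{canonical} purification $\Psi_\omega^{\bAn A^n}$, a fixed map that cannot be replaced by the Uhlmann-optimal purification, and Lemma~\ref{lem:closenessofPurification} itself is already essentially optimal for that map (a linear fidelity bound, hence a square-root trace-distance bound). The $\epsilon^{1/4}$ is inherent to your construction, and while it still suffices for the asymptotic Theorem~\ref{thm:mainResult_lossy}, it does not yield the claimed $(n,\Theta,4\sqrt\epsilon)$.

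The paper avoids the first square root entirely by not invoking Uhlmann at all. Instead of choosing an arbitrary isometry relating the two purifications, take $\tilde V:B^n\to\calH_\sigma\otimes E^n$ to be the \emph{posterior reference isometry} of $W^{\otimes n}$ with respect to $\omega^{\bBn}=\calN_W^{\otimes n}(\sigma^{\bAn})$, with $\calH_\sigma$ the $\Theta$-dimensional support of $\sigma^{A^n}$. By the defining identity \eqref{eq:WV_postrefmap},
\[
(\1_{B_R^n}\otimes\tilde V)\,\Psi_\omega^{B_R^n B^n}\,(\1_{B_R^n}\otimes\tilde V)^\dagger
=(W^{\otimes n}\otimes\1_{A^n})\,\Psi_\sigma^{A_R^n A^n}\,(W^{\otimes n}\otimes\1_{A^n})^\dagger
\]
holds \emph{exactly}, with no approximation; $\tilde V$ is an exact Uhlmann solution for the canonical purifications in question. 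Using $\calN_{\tilde V}=\Tr_{E^n}[\tilde V\cdot\tilde V^\dagger]$ as the combined encoder--decoder (the output already lives in $\calH_\sigma$, so the rate bound is automatic), the intermediate state $\omega^{B_R^n A^n}$ admits the double representation $(\1\otimes\calN_{\tilde V})\Psi_\omega^{B_R^n B^n}=(\calN_W^{\otimes n}\otimes\id)\Psi_\sigma^{A_R^n A^n}$, and each arm of the triangle inequality
\[
\|\tau-\lambda\|_1\leq\|\omega-\tau\|_1+\|\omega-\lambda\|_1
\]
then requires exactly one application of Lemma~\ref{lem:closenessofPurification} (first comparing $\Psi_\omega^{B_R^n B^n}$ with $(\Psi_\rho^{B_R B})^{\otimes n}$, second comparing the canonical purifications of $\sigma^{A^n}$ and $\tau^{A^n}$), each followed by CPTP monotonicity. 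Both arms are thus $O(\sqrt\epsilon)$, giving the claimed $4\sqrt\epsilon$. To repair your proof, replace the generic Uhlmann isometry by the posterior reference isometry of $W^{\otimes n}$ relative to the covering output, and you will then not need the separate rank-capping channel $\calF$ either, since the posterior map already lands in $\calH_\sigma$.
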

\begin{proof}
A detailed proof is provided in Section \ref{proof:coveringAndcompression}. The proof involves utilizing a quantum soft-covering code to construct an encoder for the source coding problem. This encoder enables efficient compression of a quantum source while maintaining the quantified level of loss.
As for the other direction, we find that an encoder for the source coding problem possesses the ability to generate a state that satisfies the soft-covering constraint. 
\end{proof}

In Ref.~\citen{atif2023lossy}, the duality between quantum source coding and channel coding problems was highlighted using the structure of the performance limit: both being characterized using the coherent information. In the following, we take a step further, and shed light  on the duality relation between the encoder of the former and the decoder of latter, and vice versa, which is well understood in the classical settings \cite{csiszar2011information,pradhan2003duality}.
Expanding on this line of inquiry, we note that Ref.~\citen{beigi2016decoding} presents a decoder based on the Petz recovery map for the quantum channel coding problem.
As can be noted from the proof of the above equivalence (Lemma \ref{lem:coveringAndcompression}), an encoder for the current source coding problem can be devised using the posterior reference map associated with the quantum soft covering setup. Further, Proposition \ref{prop:petz_reference} uncovers the connection between the posterior reference map and the Petz recovery map, thus establishing 
 a compelling duality between the encoding and decoding maps of the two problems.


Leveraging the one-shot results pertaining to quantum covering and the above equivalence, we are able to characterize the achievability results for the performance of a one-shot lossy source coding problem formulation in the following. The converse, however, uses a different chain of arguments.

\begin{mytheorem}[One-shot lossy quantum compression]
    \label{thm:CoveringOneShotLossy}
For a $(\rho^B,\calN_W)$ quantum source coding setup, a density operator $\sigma^{\bA} \in \mathcal{S}(\rho^{\bB},\calN_W)$, for all $\delta,\eta \in (0,1)$, there exists $(1,\Theta,\epsilon)$ lossy quantum compression protocol such that
    \[
    \log\Theta \leq  \sq{-\hmin^{\delta}(A|B_R)_{\rho^{B_R A}} - 2\log{\eta}}^+,
    \]
    and $\epsilon \leq 4\delta+4\eta$,
    where 
 $\rho^{B_R A}=(\calN_W \otimes \id)\Phi^{A_R A}_\sigma$,  
$\Phi^{A A_R}_\sigma$ is a purification of $\sigma^{A_R}$.
Moreover, for all $\epsilon \in (0,1)$, every $(1,\Theta, \epsilon)$ code satisfies 
\begin{equation*}
    \log\Theta \geq \inf_{\sigma^\bA \in \mathcal{S}_\epsilon(\rho^\bB,\calN_W)} \left[ -H^{\sqrt{\epsilon}}_{\min}(A|B_R)_{\rho^{B_R A}} \right]^+.
\end{equation*}
\end{mytheorem}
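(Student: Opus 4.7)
My plan is to exploit the equivalence between quantum covering and lossy quantum compression (Lemma \ref{lem:coveringAndcompression}) together with the one-shot quantum covering result (Theorem \ref{thm:CoveringOneShotSmooth}), while proving the converse by extracting a covering-like state from the given protocol and invoking duality of smooth min-/max-entropies. In broad strokes, the theorem is essentially the image of Theorem \ref{thm:CoveringOneShotSmooth} under the posterior-reference-map correspondence.

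For the achievability, I would apply Theorem \ref{thm:CoveringOneShotSmooth} to the covering setup $(\rho^{\bB},\calN_W)$, with $\calN_W:\bA\rightarrow\bB$ treated as the channel, $\sigma^{\bA}$ as the input state, and $A$ identified with the reference of $\bA$. The paper's convention $\bA\simeq A$ then makes the rank bound read exactly $\log\Theta\leq\sq{-\hmin^\delta(A|\bB)_{\rho^{B_R A}}-2\log\eta}^+$, matching the desired expression. The resulting low-rank state $\hat\sigma^{\bA}$, satisfying $\calN_W(\hat\sigma^{\bA})\approx\rho^{\bB}$ in trace norm, is converted into a lossy quantum compression protocol via the second direction of Lemma \ref{lem:coveringAndcompression}. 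Concretely, the construction pushes the canonical purification $\ket{\psi_{\hat\sigma}}^{\bA A}$ through the Stinespring dilation $W:\bA\rightarrow\bB\otimes E$ of $\calN_W$ and uses Uhlmann's theorem, combined with Lemma \ref{lem:closenessofPurification}, to synthesise encoder and decoder whose action on $\ket{\psi_\rho}^{\bB B}$ approximates the target pure state.

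For the converse, I would start from an arbitrary $(1,\Theta,\epsilon)$ compression protocol $(\calE,\calD)$, set $\omega^A\deq(\calD\circ\calE)(\rho^B)$ and $\omega^{\bA}\deq(\omega^A)^T$, and verify by partial trace over $A$ that $\tfrac12\|\calN_W(\omega^{\bA})-\rho^{\bB}\|_1\leq\epsilon$, so $\omega^{\bA}\in\calS_\epsilon(\rho^{\bB},\calN_W)$ is a legitimate candidate for the infimum. To obtain the min-entropy lower bound on $\log\Theta$, I would dilate $\calE$ and $\calD$ to Stinespring isometries $V_\calE:B\rightarrow M\otimes F_\calE$ and $V_\calD:M\rightarrow A\otimes F_\calD$ and consider the pure state $\psi^{\bB A F_\calE F_\calD}\deq(\id\otimes V_\calD V_\calE)\ket{\psi_\rho}^{\bB B}$; because $\psi$ factorises through the $\Theta$-dimensional intermediate $M$, its Schmidt rank across $AF_\calD\,|\,\bB F_\calE$ is bounded by $\Theta$, hence $\operatorname{rank}(\omega^{AF_\calD})\leq\Theta$. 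Through duality of conditional min-/max-entropies and careful use of data-processing, this rank bottleneck yields $\hmin(A|\bB)_{\omega^{\bB A}}\geq-\log\Theta$. Finally, Fuchs--van-de-Graaf (Lemma \ref{lem:relationshipTraceFidelity}) upgrades the trace-norm closeness of $\omega^{\bB A}$ and $\rho^{B_R A}=(\calN_W\otimes\id)\Psi_\omega^{\bA A}$ into purified-distance closeness of order $\sqrt\epsilon$, placing $\omega^{\bB A}\in\mathscr{B}^{\sqrt\epsilon}(\rho^{B_R A})$, from which $-\hmin^{\sqrt\epsilon}(A|\bB)_{\rho^{B_R A}}\leq-\hmin(A|\bB)_{\omega^{\bB A}}\leq\log\Theta$ follows.

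I expect the main obstacle to be the careful bookkeeping of constants and entropy manipulations. In the achievability, Theorem \ref{thm:CoveringOneShotSmooth} yields a covering error of order $\delta+\eta$ and a black-box application of Lemma \ref{lem:coveringAndcompression} introduces a $4\sqrt{\cdot}$ factor, so reaching the cleaner $4\delta+4\eta$ compression error likely requires unpacking the Uhlmann construction directly and absorbing the square root via Lemma \ref{lem:closenessofPurification}. In the converse, establishing $\hmin(A|\bB)_{\omega^{\bB A}}\geq-\log\Theta$ is delicate precisely because the physical marginal $\omega^A$ on the reconstruction space need not itself have rank $\leq\Theta$: one must argue via the dilated Schmidt-rank bottleneck through $M$ and reconcile the entropic dualities without picking up a spurious $\log|F_\calD|$ overhead; similarly, the tight $\sqrt\epsilon$ smoothing (rather than $\sqrt{2\epsilon}$ from a direct Fuchs--van-de-Graaf application) should come from the fact that $\omega^{\bB A}$ and $\rho^{B_R A}$ share the common $\bB$-marginal $\rho^{\bB}$.
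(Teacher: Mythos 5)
Your achievability argument is essentially the paper's: compose the covering achievability (Theorem~\ref{thm:CoveringOneShotSmooth}) with the second direction of Lemma~\ref{lem:coveringAndcompression}. You also correctly flag that a black-box composition yields an error $\sim\sqrt{\delta+\eta}$, which does not match the stated $4\delta+4\eta$; this is indeed a constant-tracking issue in the paper as written, not in your reasoning.

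The converse, however, has a genuine gap precisely where you suspected one. Your plan places the $\Theta$-dimensional bottleneck on the wrong side: you obtain $\operatorname{rank}\omega^{AF_\calD}\leq\Theta$, equivalently $\hmin(AF_\calD\,|\,\bB F_\calE)_\psi\geq-\log\Theta$ and, after dropping $F_\calE$ from the conditioning register, $\hmin(AF_\calD\,|\,\bB)_\psi\geq-\log\Theta$. But passing from $\hmin(AF_\calD|\bB)$ down to $\hmin(A|\bB)$ necessarily costs up to $\log|F_\calD|$, because removing a subsystem from the first argument of a conditional min-entropy has no free monotonicity — and $F_\calD$ is an environment whose dimension is not controlled by $\Theta$. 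So ``careful use of data processing'' cannot close the bound; some structural input is needed, not just entropic bookkeeping.

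The paper avoids the overhead with a different construction. It forms the \emph{posterior reference isometries} of the encoder and decoder (Lemma~\ref{def:VBar}): $W_\calD:\bA\rightarrow M_R\otimes\tilde E_2$ reverses $V_\calD$ with respect to $\omega^A$, and $W_\calE:M_R\rightarrow\bB\otimes\tilde E_1$ reverses $V_\calE$ with respect to $\omega_1^M$. Applying $W_\calD$ to the reference side of $\Psi_\omega^{\bA A}$ produces a state $\tilde\omega_1^{M_R\tilde E_2 A}$ whose \emph{conditioning} register $M_R$ has dimension exactly $\Theta$. That gives $-\hmin(A|M_R)_{\tilde\omega_1}\leq\log\min\{\operatorname{rank}\tilde\omega_1^A,\operatorname{rank}\tilde\omega_1^{M_R}\}\leq\log\Theta$ directly. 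Then $W_\calE$ followed by $\Tr_{\tilde E_1}$ is a CPTP map acting on $M_R$, so the data-processing inequality for min-entropy runs in the favourable direction, $\hmin(A|M_R)_{\tilde\omega_1}\leq\hmin(A|B_R)_\omega$, with no dimension penalty. Finally, the error constraint converts into purified-distance closeness, yielding the $\sqrt\epsilon$-smoothed bound. The key idea you are missing is exactly this posterior-reference reversal, which places the $\Theta$ constraint on the conditioned-on register so that subsequent data processing only ever acts there. (Also, your heuristic that the $\sqrt\epsilon$ smoothing comes from a shared $\bB$-marginal does not hold — $\omega^{\bB}=\rho^{\bB}$ but $\upsilon^{\bB}=\calN_W(\omega^{\bA})$ is only $\epsilon$-close; the bound is just the usual Fuchs--van de Graaf estimate.)
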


\begin{proof}
    The proof of achievability follows from using the equivalence with soft-covering, according to Lemma \ref{lem:coveringAndcompression} and the achievability part of Theorem \ref{thm:CoveringOneShotSmooth}. A proof of the converse is provided in Section \ref{sec:oneshot_converse_lossy}.
\end{proof}

This gives us the following characterization to the set of all achievable rates for the asymptotic lossy quantum source coding problem, recovering the characterization obtained in \cite[Thm.~1]{atif2023lossy}.

\begin{mytheorem}[Asymptotic lossy quantum compression]
    \label{thm:mainResult_lossy}
    For a $(\rho^B,\calN_W)$ quantum source coding setup, a rate $R$ is achievable  if and only if $S(\rho^\bB,\calN_W)$ is non empty, and
    \[R \geq \min_{\sigma^{\bA} \in \calS(\rho^\bB, \calN_W)}I^+_c(\sigma^{\bA},\calN_W).
    \]
\end{mytheorem}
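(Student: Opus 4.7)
The plan is to derive Theorem \ref{thm:mainResult_lossy} by combining the equivalence of Lemma \ref{lem:coveringAndcompression} with Theorem \ref{thm:asympCovering} for the direct part, and by invoking the one-shot converse of Theorem \ref{thm:CoveringOneShotLossy} together with a single-letterisation step for the converse.

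For achievability, I would fix any $\sigma^{\bA} \in \calS(\rho^{\bB}, \calN_W)$ and apply the achievability direction of Theorem \ref{thm:asympCovering} to the covering pair $(\rho^{\bB}, \calN_W)$ with i.i.d.\ input $\sigma^{\tensor n}$. This produces a sequence of $(n,\Theta_n,\epsilon_n)$ quantum covering codes with $\frac{1}{n}\log\Theta_n \to [I_c(\sigma^{\bA},\calN_W)]^+$ and $\epsilon_n \to 0$. Invoking the covering-to-compression implication in Lemma \ref{lem:coveringAndcompression}, each covering code becomes an $(n,\Theta_n,4\sqrt{\epsilon_n})$ lossy quantum compression protocol for $(\rho^B,\calN_W)$, preserving the rate while keeping the distortion vanishing. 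Taking the infimum over $\sigma^{\bA} \in \calS(\rho^{\bB},\calN_W)$, combined with continuity of $I_c$ on the compact feasible set, would yield the direct part.

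For the converse, I would assume $R$ is achievable, so that for each large $n$ there exists an $(n,\Theta_n,\epsilon_n)$ compression protocol with $\frac{1}{n}\log\Theta_n \leq R + o(1)$ and $\epsilon_n \to 0$. Viewing this as a one-shot code for the tensorised setup $((\rho^B)^{\tensor n}, \calN_W^{\tensor n})$ and invoking the converse of Theorem \ref{thm:CoveringOneShotLossy} gives
\begin{equation*}
\log\Theta_n \geq \inf_{\tau \in \calS_{\epsilon_n}((\rho^{\bB})^{\tensor n},\, \calN_W^{\tensor n})} \left[-\hmin^{\sqrt{\epsilon_n}}(A^n|B_R^n)_{\omega_\tau}\right]^+,
\end{equation*}
where $\omega_\tau^{B_R^n A^n} = (\calN_W^{\tensor n} \tensor \id)\Phi_\tau^{A_R^n A^n}$ for a purification $\Phi_\tau$ of $\tau^{A_R^n}$. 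I would then replace the smoothed min-entropy by the conditional von Neumann entropy: combining the unsmoothed bound $\hmin \leq H$ (Lemma~2 of Ref.~\citen{tomamichel2009fully}) with Fannes--Audenart continuity of $H(A^n|B_R^n)$ over the $\sqrt{\epsilon_n}$-ball contributes only $o(n)$, so $\log\Theta_n \geq \inf_\tau I_c(\tau,\calN_W^{\tensor n}) - o(n)$ over the same feasible set.

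The main obstacle will be the single-letterisation of this multi-letter coherent information under the approximate constraint $\|(\rho^{\bB})^{\tensor n} - \calN_W^{\tensor n}(\tau)\|_1 \leq \epsilon_n$. I plan to follow the template of the converse of Theorem \ref{thm:asympCovering} (cf.\ Eqs.~(68)--(72) of Ref.~\citen{atif2023lossy}): write $I_c(\tau,\calN_W^{\tensor n}) = H(B_R^n)_{\calN_W^{\tensor n}(\tau)} - H(A^n,B_R^n)_{\omega_\tau}$, apply Fannes--Audenart to replace the first entropy by $n H(B_R)_{\rho^{\bB}}$ up to sublinear correction, use strong subadditivity on the joint entropy to descend to single-site marginals $\omega_{\tau,i}$, and close with a convexity/averaging argument to reduce to a single-letter input $\sigma^{\bA}$ approximately in $\calS(\rho^{\bB},\calN_W)$. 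Dividing by $n$, taking $n\to\infty$ and $\epsilon_n\to 0$, and invoking continuity of $I_c$ then delivers the desired lower bound $R \geq \min_{\sigma^{\bA} \in \calS(\rho^{\bB},\calN_W)} I_c^+(\sigma^{\bA},\calN_W)$.
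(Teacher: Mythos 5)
Your proposal takes essentially the same route as the paper on both fronts. For achievability, composing the asymptotic covering theorem with the covering-to-compression direction of Lemma \ref{lem:coveringAndcompression} is equivalent to the paper's composition of the one-shot lossy result of Theorem \ref{thm:CoveringOneShotLossy} with the AEP; only the order of the two ingredients differs. For the converse, you correctly identify the chain: one-shot converse of Theorem \ref{thm:CoveringOneShotLossy}, then upgrade the smoothed conditional min-entropy to a conditional von Neumann entropy, then single-letterise as in the converse of Theorem \ref{thm:asympCovering}. One technical point to be careful about when you fill in the second step: $\hmin^{\sqrt{\epsilon}}$ is smoothed over sub-normalised states in the purified-distance ball, so you cannot directly combine $\hmin \leq S$ with a trace-distance continuity bound on the conditional entropy. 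The paper handles this via \cite[Lemma~6.12]{tomamichel2015quantum}, which produces a \emph{normalised} smoothing state after an isometric embedding $A^n\hookrightarrow A'$, and then applies the Alicki--Fannes--Winter continuity inequality for the \emph{conditional} entropy (not the Fannes--Audenart bound, which controls only the unconditional entropy). Your plan would go through once that replacement is made.
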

\begin{proof}
    The proof of achievability follows by using the above one-shot achievability result of Theorem \ref{thm:CoveringOneShotLossy} and the asymptotic equipartition property (similar to the proof of Theorem \ref{thm:asympCovering}). The proof of the converse is provided in Section \ref{sec:converse_asymp_lossy}.
\end{proof}

\section{Quantum channel resolvability}
\label{QSC:sec:mainResults_resol}

As a next application, we consider the channel resolvability problem.
Initially introduced in the classical setting by Han and Verd\'u \cite{HanVerdu}, and generalised to classical-quantum channels with a given output measurement by L\"ober \cite{Loeber:PhD}, and in Ref.~\citen{ahlswede2002strong} for the plain output density matrices, this problem addresses the task of asymptotically approximating a target output distribution through a given channel using an essentially uniform input distribution with a smaller support. The capacity of this problem is defined as the smallest rate of input randomness with which any arbitrary output distribution can be effectively approximated.
Analogously, for the quantum setting, we define its capacity as the rate of the smallest rank of input state needed to approximate any arbitrary output state of a given channel.
The objective now is to characterize this smallest rank. To formally define the problem, we first define channel resolvability for a given input state $\sigma^{A^n}  \in \calD(A^{\otimes n})$. 




\begin{mydefinition}[Channel resolvability for a given input state]
For a given CPTP map $\calN:A \rightarrow B$,
a rate $R$ is said to be $(n,\epsilon)$-achievable for
an input state $\sigma^{A^n} \in \calD(A^{\otimes n})$ 
if there exists a density operator 
$\hat\sigma^{A^n}$ on $A^{\otimes n}$ with
$|\hat\sigma^{A^n}|=2^{nR}$, which satisfies
\begin{align*}
    \frac12 \left\| \cN^{\tensor n}(\hat\sigma^{A^n}) - \cN^{\tensor n}(\sigma^{A^n}) \right\|_1 \leq \epsilon.
\end{align*}
We define
\[
\mathcal{R}^0_{\epsilon}(\cN^{\tensor n},\sigma^{A^n}) \deq \inf \{R: R \mbox{ is } (n,\epsilon)\mbox{-achievable for } \sigma^{A^n}\}.
\]
The asymptotic resolvability rate is defined accordingly as
\[
  \mathcal{R}(\cN) \deq \sup_{\epsilon>0} \limsup_{n\rightarrow\infty} \left[ \sup_{\sigma^{A^n}} \mathcal{R}^0_{\epsilon}\left(\cN^{\ox n},\sigma^{A^n}\right)
\right].
\]
\end{mydefinition}

\begin{mydefinition}[Quantum transmission code]
    For a given CPTP map $\calN: A \rightarrow B$, a $(n,\Theta,\epsilon)$ quantum transmission code consists of a pair of encoding and decoding maps, $\calE:  C\rightarrow A^{\tensor n}$ and $\calD:B^{\tensor n} \rightarrow C$, respectively, such that 
   $|C|=\Theta$, and  \[
    \sup_{\ket\Phi \in C\tensor {C_R}} P\left(\ketbra\Phi, (\id_{C_R}\tensor \calD\circ \calN^{\tensor n}\circ \calE)\ketbra\Phi\right) \leq \epsilon.
    \]
\end{mydefinition}


\begin{mydefinition}[Weak converse quantum capacity]
    For a given channel $\calN$, we define $N_E(n,\epsilon|\cN)$ as the maximum dimension $\Theta$ such that there exists a $(n,\Theta,\epsilon)$ quantum transmission code. The quantum capacity is now defined as 
    \[
    Q(\calN) \deq \inf_{\epsilon > 0} \liminf_{n\rightarrow \infty} \frac{1}{n} \log N_E(n,\epsilon|\calN).
    \]
\end{mydefinition}

\begin{mydefinition}[Strong converse quantum capacity]
    For a given channel $\calN$, we define
    $R$ to be admissible if all sequences of
    $(n,\Theta_n,\epsilon_n)$ quantum transmission codes
    with 
    \[
    \liminf_{n \rightarrow \infty} \frac{1}{n} \log \Theta_n >R, 
    \]
    satisfy
    \[
    \lim_{n \rightarrow \infty} \epsilon_n =1.
    \]
    The strong converse capacity is defined as
\[
\widehat{Q}(\calN) \deq \inf\{R: R \mbox{ is  admissible}\}.
\]
\end{mydefinition}

The above direct part and converse for quantum soft covering yield the following
bounds on the asymptotic resolvability rates of a channel.

\begin{mytheorem}
  \label{thm:resolvability}
  For any channel $\cN$, any $\epsilon>0$, and any input state $\sigma^{A^n} \in \calD(A^{\tensor n})$, we have 
  \[
   \inf_{\omega^{A^n} \in \calD(A^{\tensor n}) \text{ s.t. }\atop \|\calN^{\tensor n}( \sigma^{A^n}) - \calN^{\tensor n}(\omega^{A^n})\|_1 \leq \epsilon} \sq{-\hmin(A_R^n|B^n)_{\rho^{B^n A_R^n}} }^+  
    \leq  
   \mathcal{R}^0_{\epsilon}\left(\cN^{\ox n},\sigma^{A^n}\right) 
    \leq \widehat{Q}(\cN),
  \]
  where $\rho^{B^n A_R^n} \deq (\calN^{\tensor n}\tensor \id_{A^n_R})(\Phi_\omega^{A^n A_R^n})$, $\Phi_\omega^{A^n A_R^n}$ is a purification of $\omega^{A^n}$, and $\widehat{Q}(\cN)$ is the 
  strong converse quantum capacity of $\cN$. 
\end{mytheorem}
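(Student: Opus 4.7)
The theorem asserts two inequalities, which I would prove independently.

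For the lower bound, I first observe that an $(n,\Theta,\epsilon)$-resolvability code for the input $\sigma^{A^n}$ is, by simply matching the two definitions, exactly an $(n,\Theta,\epsilon)$-quantum covering code for the pair $(\calN^{\tensor n}(\sigma^{A^n}),\calN^{\tensor n})$, with $\calN^{\tensor n}$ treated as a single channel from $A^n$ to $B^n$. Consequently, the converse part of Theorem~\ref{thm:CoveringOneShotSmooth}, Eq.~\eqref{eq:soft-covering-converse}, applied to this ``one-shot'' setup, directly gives
\[
  \log\Theta \geq \inf_{\omega^{A^n}\in\calS_\epsilon(\calN^{\tensor n}(\sigma^{A^n}),\calN^{\tensor n})}\left[-\hmin(A_R^n|B^n)_{\rho^{B^n A_R^n}}\right]^+,
\]
and taking the infimum over achievable rates yields the lower bound.

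For the upper bound, I would start from the achievability part of Theorem~\ref{thm:CoveringOneShotSmooth}, Eq.~\eqref{eq:soft-covering-direct}, applied to the same covering setup with the trivially valid choice $\omega^{A^n}=\sigma^{A^n}$. This yields a covering (equivalently, resolvability) code with $\log\Theta\leq[-\hmin^\delta(A_R^n|B^n)_\rho - 2\log\eta]^+$ at error $8(\delta+\eta)$, for any $\delta,\eta\in(0,1)$. Selecting $\delta,\eta$ with $8(\delta+\eta)\leq\epsilon$ and dividing by $n$, I would then chain
\[
  \tfrac{1}{n}\log\Theta
  \;\leq\; \tfrac{1}{n}I_c(\sigma^{A^n},\calN^{\tensor n}) + o(1)
  \;\leq\; Q(\calN)
  \;\leq\; \widehat{Q}(\calN),
\]
where the first step invokes the AEP relating smoothed min-entropy to the von Neumann conditional entropy of the purified bipartite state $\rho^{B^n A_R^n}$; the second step uses superadditivity of coherent information combined with Fekete's lemma (equivalently, the LSD coding theorem) to bound $\tfrac{1}{n}I_c(\sigma^{A^n},\calN^{\tensor n})\leq Q(\calN)$ for any (not necessarily product) input $\sigma^{A^n}$; and the third is the definitional inequality $Q(\calN)\leq\widehat{Q}(\calN)$ between weak- and strong-converse capacities.

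\textbf{Main obstacle.} The lower bound is essentially an immediate corollary of Theorem~\ref{thm:CoveringOneShotSmooth} and requires no further input. The upper bound is more delicate: the AEP-style argument above introduces $o(1)$ corrections which vanish only asymptotically in $n$, so the clean statement ``$\mathcal{R}^0_\epsilon\leq\widehat{Q}(\calN)$'' is most naturally read in the asymptotic regime. A genuinely tight one-shot upper bound valid for every finite $n$ would demand a sharper relation between $-\hmin^\delta(A_R^n|B^n)_\rho$ and a one-shot analogue of the strong converse capacity (e.g.\ via the hypothesis-testing divergence), which I expect to be the main technical difficulty in any finite-$n$ refinement.
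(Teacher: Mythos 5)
The lower bound is indeed the one-liner the paper uses: apply the converse of Theorem~\ref{thm:CoveringOneShotSmooth} to the ``one-shot'' covering setup $(\calN^{\tensor n}(\sigma^{A^n}),\calN^{\tensor n})$, and normalise. That part of your proposal matches the paper.

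Your upper bound, however, has a genuine gap and takes a route that cannot work. You invoke the AEP to bound $-\hmin^{\delta}(A_R^n|B^n)_\rho$ by $-S(A_R^n|B^n)_\rho + o(n) = I_c(\sigma^{A^n},\calN^{\tensor n}) + o(n)$, but the AEP is a statement about \emph{tensor-power} states, and here $\sigma^{A^n}$ (hence $\rho^{B^nA_R^n}$) is completely arbitrary; there is no i.i.d.\ structure to exploit, so that step is simply unavailable. The only unconditional inequalities pointing the right way, e.g.\ $\hmin^\delta \geq \hmin$, or $\hmin\leq S$, do not combine to give $-\hmin^\delta\leq -S + o(n)$. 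Moreover, even if one restricted to product inputs, your chain would conclude $\mathcal{R}(\cN)\leq Q(\cN)$, which together with Conjecture~\ref{conj:resolvability} would pin down $\mathcal{R}(\cN)=Q(\cN)$ --- a result the paper explicitly regards as open; this is a strong signal that the step is too good to be true. The paper sidesteps all of this by staying entirely in the one-shot regime: it takes the bound $-\hmin^{\epsilon/5}(A_R|B)_\omega + 2\log(20/\epsilon)$ from Theorem~\ref{thm:CoveringOneShotSmooth}, applies Dupuis' Lemma~\ref{lemma:H-max-min} to convert $-\hmin^{\epsilon/5}$ into $-\hmax^{\sqrt{1-\epsilon^4/625}}$ (a max-entropy smoothed with parameter \emph{close to $1$}), and then invokes the one-shot quantum capacity sandwich in Eq.~\eqref{eq:one-shot-Q}, $\sup_\Phi -H_{\max}^{\lambda}(A_R|B)_\omega - 4\log(1/\mu) \leq \log N_E(\lambda+\mu|\cN)$, with $\lambda+\mu$ near $1$. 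Transmission codes with error near $1$ are exactly the strong-converse regime, and this is what yields $\widehat{Q}(\cN)$ (rather than $Q(\cN)$) after applying the argument to $\cN^{\ox n}$ and sending $n\to\infty$. Your ``main obstacle'' paragraph correctly senses that a one-shot link to a strong-converse-type capacity is needed; Lemma~\ref{lemma:H-max-min} together with Eq.~\eqref{eq:one-shot-Q} is precisely that link, and it is the piece your argument is missing.
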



\begin{proof}

We begin with the direct part (upper bound). Consider a $\sigma^A \in \calD(A^{\tensor n})$. From Theorem \ref{thm:CoveringOneShotSmooth}, and
 with 
$\omega^{A_R B}=(\id\ox\calN)\Phi_\sigma^{A_R A}$ in the following optimisation, we have
\[\begin{split}
  \mathcal{R}^0_\epsilon(\cN,\sigma^A) 
    &\leq -H_{\min}^{\epsilon/5}(A_R|B)_\omega + 2\log\frac{20}{\epsilon} \\
    &\overset{(a)}{\leq}  -H_{\max}^{\sqrt{1-\epsilon^4/625}}(A_R|B)_\omega + 2\log\frac{20}{\epsilon} \\
    &\leq -H_{\max}^{1-\epsilon^4/1250}(A_R|B)_\omega + 2\log\frac{20}{\epsilon},
\end{split}\]
where $(a)$ follows from Lemma \ref{lemma:H-max-min}.
(Notice in particular that the right hand side in the first line is always non-negative.)
Now we know from \cite[Prop.~20]{Q-prettystrong} (cf.~Refs.~\citen{BuscemiDatta:one-shot} 
and \citen{DattaHsieh:one-shot}) that the number of qubits transmittable 
via $\cN$ is upper and lower bounded tightly in terms of max-entropies:
\begin{equation}
  \label{eq:one-shot-Q}
  \sup_{\Phi^{A_RA}} -H_{\max}^{\lambda}(A_R|B)_\omega - 4\log\frac{1}{\mu}
    \leq \log N_E(\lambda+\mu|\cN) 
    \leq \sup_{\Phi^{A_RA}} -H_{\max}^{\lambda+\mu}(A_R|B)_\omega,
\end{equation}
with $\omega^{A_RB}=(\id\ox\calN)\Phi^{A_RA}$ as before. Letting here 
$\lambda=1-\epsilon^4/1250$ and $\mu=\epsilon^4/2500$, we thus obtain
\[
  \mathcal{R}^0_\epsilon(\cN,\sigma^A) 
    \leq \log N_E(1-\epsilon^4/2500|\cN) + 2\log\frac{20}{\epsilon} + 4\log\frac{2500}{\epsilon^4}. 
\]
Applying this to $\cN^{\ox n}$ and letting $n\rightarrow\infty$ 
in the regime of small $\epsilon$ shows the upper bound. 


The converse (lower bound) follows from the one-shot soft-covering result obtained in Theorem \ref{thm:CoveringOneShotSmooth}.
\end{proof}

\begin{mycorollary}
  For any channel $\cN$, we have 
  \(
   \mathcal{R}(\cN)
    \leq \widehat{Q}(\cN),
  \)
  where $\widehat{Q}(\cN)$ is the 
  strong converse quantum capacity of $\cN$. 
\end{mycorollary}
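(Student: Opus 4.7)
The plan is to derive $\mathcal{R}(\cN) \leq \widehat{Q}(\cN)$ as an immediate consequence of the upper bound already established in Theorem \ref{thm:resolvability}. That theorem furnishes, for every $n \geq 1$, every $\epsilon > 0$, and every input state $\sigma^{A^n} \in \calD(A^{\otimes n})$, the uniform inequality $\mathcal{R}^0_\epsilon(\cN^{\otimes n}, \sigma^{A^n}) \leq \widehat{Q}(\cN)$, whose right-hand side depends neither on $n$ nor on the chosen input state.

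Concretely, I would proceed in three steps, which mirror the three operations used in the definition of $\mathcal{R}(\cN)$. First, since the bound holds for every $\sigma^{A^n}$, taking the supremum over input states yields $\sup_{\sigma^{A^n} \in \calD(A^{\otimes n})} \mathcal{R}^0_\epsilon(\cN^{\otimes n}, \sigma^{A^n}) \leq \widehat{Q}(\cN)$ for each fixed $n$ and each fixed $\epsilon$. Second, because the right-hand side is independent of $n$, the inequality is preserved under $\limsup_{n \to \infty}$. Third, taking $\sup_{\epsilon > 0}$ of both sides and invoking the definition
\[
\mathcal{R}(\cN) = \sup_{\epsilon > 0} \limsup_{n \to \infty} \sup_{\sigma^{A^n}} \mathcal{R}^0_\epsilon(\cN^{\otimes n}, \sigma^{A^n})
\]
delivers the claim $\mathcal{R}(\cN) \leq \widehat{Q}(\cN)$.

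There is no genuine obstacle here, as all the substantive work has already been performed: the one-shot soft-covering bound of Theorem \ref{thm:CoveringOneShotSmooth}, the conversion between smooth min- and max-entropies via Lemma \ref{lemma:H-max-min}, and the tight one-shot quantum capacity characterization used in the proof of Theorem \ref{thm:resolvability} together yield the $n$-independent upper bound, and the passage to the asymptotic regime in that proof is precisely what absorbs the $O(\log(1/\epsilon))$ correction terms into the strong converse capacity. The corollary is therefore a direct notational distillation of the theorem into the asymptotic resolvability rate.
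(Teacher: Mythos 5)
Your proposal is correct and coincides with the paper's intended derivation: the corollary is stated without a separate proof precisely because it is obtained by applying $\sup_{\sigma^{A^n}}$, then $\limsup_{n\to\infty}$, then $\sup_{\epsilon>0}$ to the upper bound of Theorem \ref{thm:resolvability}, whose right-hand side $\widehat{Q}(\cN)$ is independent of all three. Your closing paragraph also correctly flags the only subtlety worth recording: the proof of Theorem \ref{thm:resolvability} actually yields $\mathcal{R}^0_\epsilon(\cN^{\ox n},\sigma^{A^n}) \leq \frac1n\log N_E\bigl(n,1-\epsilon^4/2500\,\big|\,\cN\bigr) + O\!\left(\frac{\log(1/\epsilon)}{n}\right)$, so the $n$-uniform reading of that theorem is strictly speaking asymptotic, but the $\limsup_{n\to\infty}$ step absorbs both the vanishing correction and the passage from finite-blocklength $N_E$ to $\widehat{Q}(\cN)$.
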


\begin{myconjecture}
      \label{conj:resolvability}
  For any channel $\cN$,
  \(
   Q(\calN)   \leq  \mathcal{R}(\cN),
  \)
  where ${Q}(\cN)$ is the 
  weak converse quantum capacity of $\cN$. 
\end{myconjecture}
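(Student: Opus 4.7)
The plan is to exhibit, for any rate $R < Q(\cN)$, an input state $\sigma^{A^n}$ for which every resolvability cover requires rank at least $2^{nR-o(n)}$; the natural candidate is the encoded maximally mixed state of a near-capacity-achieving quantum code. This would give $\cR(\cN)\geq R$ for every such $R$, and hence $\cR(\cN)\geq Q(\cN)$.

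Concretely, fix a sequence of $(n,\Theta_n,\epsilon_n)$ quantum transmission codes with $\tfrac{1}{n}\log\Theta_n\to R$ and $\epsilon_n\to 0$, and set $\sigma^{A^n}:=\cE(\pi_C)$, where $\pi_C$ is maximally mixed on the code space $C$ of dimension $\Theta_n$. By construction, $(\cE\otimes\id_{C_R})\Phi^{CC_R}$ is a purification of $\sigma^{A^n}$ whose $C_R$-marginal is exactly $\pi_{C_R}$ of rank $\Theta_n$, and the induced channel output $\omega^{B^nC_R}$ is mapped by $\cD\otimes\id_{C_R}$ to a state $O(\epsilon_n)$-close to the maximally entangled state $\Phi^{CC_R}$ of Schmidt rank $\Theta_n$.

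Given a cover $\hat\sigma^{A^n}$ with rank $k$ and $\|\cN^{\otimes n}(\hat\sigma)-\cN^{\otimes n}(\sigma)\|_1\leq\delta$---we may assume $k\leq\Theta_n$ or else there is nothing to prove---I purify it as $\ket{\hat\Phi}^{A^nK}$ with $|K|=k$, apply the channel, and seek an isometry $V:K\hookrightarrow C_R$ so that the embedded state $(\id_{B^n}\otimes V)\hat\omega^{B^nK}(\id\otimes V^\dagger)$ on $B^n\otimes C_R$ is close in trace distance to $\omega^{B^nC_R}$. Applying $\cD$ then yields a state on $C\otimes V(K)$ that is simultaneously (i) close to $\Phi^{CC_R}$, by the triangle inequality combining the code's recovery with data processing; and (ii) supported on $C\otimes V(K)$, a subspace whose overlap with the rank-$\Theta_n$ maximally entangled state is at most $k/\Theta_n$ by a direct Schmidt-rank calculation. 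Conditions (i) and (ii) would force $k\geq\Theta_n(1-o(1))$, yielding $\cR^0_\delta(\cN^{\otimes n},\sigma^{A^n})\geq R-o(1)$, and consequently $\cR(\cN)\geq R$.

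The principal obstacle is that Uhlmann's theorem, applied to the close $B^n$-marginals $\cN^{\otimes n}(\sigma)$ and $\cN^{\otimes n}(\hat\sigma)$ through the Stinespring dilation, only guarantees an isometry $W:E^n\otimes K\to E^n\otimes C_R$ relating the two purifications, not one of the factorized form $\id_{E^n}\otimes V$. After tracing out $E^n$, the $C_R$-marginal of the reconstructed state can have rank as large as $\mathrm{rank}(\cN^{\otimes n}(\hat\sigma))\leq|B|^n$, destroying the Schmidt-rank bound used in (ii). Restricting instead to isometries $V:K\to C_R$ preserves rank but sacrifices the closeness required in (i). Bridging this gap appears to require either a refined Uhlmann-type argument that approximately factorizes the isometry on the Stinespring register, or a strengthening of the one-shot converse in Theorem~\ref{thm:CoveringOneShotSmooth}---for instance, by showing that whenever $\sigma$ arises from a near-capacity quantum code, the infimum $\inf_{\sigma'\in\cS_\delta(\cN^{\otimes n}(\sigma),\cN^{\otimes n})}[-\hmin(A^n_R|B^n)_{(\cN^{\otimes n}\otimes\id)\Phi_{\sigma'}}]^+$ remains close to $\log\Theta_n$. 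This ``output-robustness'' of the coherent information near a good quantum code is, I believe, the true crux of the conjecture.
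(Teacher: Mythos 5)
This statement is stated in the paper as Conjecture~\ref{conj:resolvability}; the authors offer no proof and explicitly leave it open, both in Section~\ref{QSC:sec:mainResults_resol} (``We conjecture that, in the asymptotic setting, this lower bound is equal to the quantum transmission capacity of the channel'') and in the conclusions. So there is no ``paper's own proof'' to compare against, and your write-up should be read as an attempted attack rather than a verification.

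Your choice of witness input state, $\sigma^{A^n}=\cE(\pi_C)$ for a near-capacity transmission code, is the natural one and is precisely what the lower-bound half of Theorem~\ref{thm:resolvability} invites: one wants to show that for this $\sigma$, the quantity $\inf_{\omega\,:\,\|\cN^{\otimes n}(\sigma)-\cN^{\otimes n}(\omega)\|_1\le\epsilon}\bigl[-H_{\min}(A^n_R|B^n)\bigr]^+$ scales as $nQ(\cN)-o(n)$. Your Schmidt-rank bound $\Tr\bigl[\Phi^{CC_R}(\1_C\otimes P_{V(K)})\bigr]\le k/\Theta_n$ in step (ii) is correct. And you have put your finger on exactly where the argument breaks: the two close $B^n$-marginals $\cN^{\otimes n}(\sigma)$ and $\cN^{\otimes n}(\hat\sigma)$ have Stinespring purifications on $B^n\otimes E^n\otimes C_R$ and $B^n\otimes E^n\otimes K$ respectively, and Uhlmann only supplies an isometry $W\colon E^n\otimes K\to E^n\otimes C_R$. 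Once $E^n$ is traced out, the reconstructed state's rank on $C_R$ is no longer controlled by $k$ (indeed it must be close to full rank $\Theta_n$, since $\omega^{C_R}\approx\pi_{C_R}$), so (i) and (ii) cannot be made to hold simultaneously by this route. Restricting to product isometries $\1_{E^n}\otimes V$ preserves (ii) but sacrifices (i), as you say.

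Your closing reformulation is also the right diagnosis of what is missing: the conjecture would follow if one could show an ``output-robustness'' of the min-entropy for encoded-code states, i.e.~that every $\sigma'$ with $\|\cN^{\otimes n}(\sigma')-\cN^{\otimes n}(\sigma)\|_1\le\epsilon$ still has $-H_{\min}(A^n_R|B^n)_{(\cN^{\otimes n}\otimes\id)\Phi_{\sigma'}}\ge n Q(\cN)-o(n)$. That is exactly the content of the infimum in Eq.~\eqref{eq:soft-covering-converse} and Theorem~\ref{thm:resolvability}, and no such robustness statement is known. So your ``proof'' is, correctly, an identification of the open problem rather than a resolution of it; you have not overlooked a step the authors took, since the authors took none.
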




\begin{myremark}
    Note that finding an exact characterization of the asymptotic resolvability of a classical channel is still an open problem (see \cite[Chapter 6.3]{han_book}).
\end{myremark}

\begin{myremark}
The strong converse property for the quantum capacity, i.e. $Q(\cN)=\widehat{Q}(\cN)$, 
is known to hold for PPT entanglement-binding channels and for generalised 
dephasing channels \cite{TWW:q-strong}. For general degradable channels it 
is conjectured, but only a ``pretty strong'' converse is known, meaning that 
the asymptotic rate of quantum codes is bounded by $Q(\cN)$ for sufficiently 
small error \cite{Q-prettystrong}. 

For all other channels, the issue of the strong converse is wide open, 
and no Shannon-style single-letter or regularised formula is known for 
$\widehat{Q}(\cN)$. However, \cite[Thm.~8 and Cor.~7]{TWW:q-strong} establishes 
the so-called Rains information of the channel as an upper bound to $\widehat{Q}(\cN)$. 
\end{myremark}

\section{Identification via quantum channels}
\label{QSC:sec:mainResults_identification}
%
The problem of identification has a rich history, going back to the groundbreaking work of Ahlswede and Dueck \cite{AhlswedeDueck:ID}. They observed that Shannon's celebrated theory of transmission imposes a very stringent constraint on the communication task, in that the receiver has to decode the correct message among all possible ones, whose relaxation to mere ``identification'' of the correct message  leads to doubly exponentially large codes in the block length $n$. 
Specifically, given a transmitted message $m$ and an arbitrary message $m'$, the receiver's sole interest lies in determining whether ``$m = m'$'' or ``$m \neq m'$''. Han and Verd\'u \cite{HanVerdu,han1993approximation}  uncovered a fundamental connection between the channel resolvability and the classical identification  problem, where the channel resolvability rate forms an upper bound on the identification rate of the channel (cf. \cite[Lemma 6.4.1]{han_book}).


In the quantum realm, L\"ober \cite{Loeber:PhD} began the study of identification through quantum channels, uncovering a fundamental distinction inherent to the quantum setting. Unlike the classical scenario, quantum measurements face inherent incompatibility, thereby necessitating a choice regarding the receiver's capability to answer all ``if $m$ equals $m'$'' questions or only one of them. 
This leads to the introduction of simultaneous identification capacity, wherein a single measurement enables the identification of all messages simultaneously. 
Another notable generalization is the concept of quantum identification capacity of a quantum channel \cite{Winter:ID1,Winter:ID2,HaydenWinter:ID}. 
Ref.~\citen{Winter:ID-review} reviews the different notions 
of identification codes for quantum channels and the state of the art at the time of writing, which is still largely up to date.

\begin{mydefinition}[{L\"ober \cite{Loeber:PhD}}]
  \label{defi:c-ID-code}
  A {classical identification code for the channel}
  $\cN:A\rightarrow B$ with {error probability $\lambda_1$
  of first, and $\lambda_2$ of second kind} is a set
  $\{(\rho_i,D_i):i=1,\ldots,N\}$ of states $\rho_i$ on $A$ and
  operators $D_i$ on $B$ with $0\leq D_i\leq \1_B$, i.e.~the pair
  $(D_i,\1_B-D_i)$ forms a measurement, such that
  \begin{align*}
    \forall i       &\quad \Tr\bigl( \cN(\rho_i) D_i \bigr) \geq 1-\lambda_1, \\
    \forall i\neq j &\quad \Tr\bigl( \cN(\rho_i) D_j \bigr) \leq \lambda_2.
  \end{align*}
  For the special case of memoryless channels $\cN^{\otimes n}$, 
  we speak of an \emph{$(n,\lambda_1,\lambda_2)$-ID code}, and denote
  the largest size $N$ of such a code $N(n,\lambda_1,\lambda_2)$.

  An identification code as above is called \emph{simultaneous} if
  all the $D_i$ are coexistent: this means that there exists a positive
  operator valued measure (POVM) $(E_t)_{t=1}^T$ and subsets
  ${\cal D}_i\subset\{1,\ldots,T\}$ such that $D_i=\sum_{t\in{\cal D}_i} E_t$.
  The largest size of a simultaneous $(n,\lambda_1,\lambda_2)$-ID code
  is denoted $N_{\rm sim}(n,\lambda_1,\lambda_2)$.
\end{mydefinition}
  
  Note that $N_{\rm sim}(n,\lambda_1,\lambda_2) = N(n,\lambda_1,\lambda_2) = \infty$
  if $\lambda_1+\lambda_2 \geq 1$, hence to avoid this triviality one
  has to assume $\lambda_1+\lambda_2 < 1$.

\begin{mydefinition}
  \label{defi:ID-capacities}
  The \emph{(simultaneous) classical ID-capacity} of a quantum channel $\cN$
  is given by
  \begin{align*}
    C_{\rm ID}(\cN) &= \inf_{\lambda > 0} 
                       \liminf_{n\rightarrow \infty} \frac{1}{n} \log\log N(n,\lambda,\lambda), \\
    C_{\rm ID}^{\rm sim}(\cN) &= \inf_{\lambda > 0} 
                       \liminf_{n\rightarrow \infty} \frac{1}{n} \log\log N_{\rm sim}(n,\lambda,\lambda),
  \end{align*}
  respectively. 
  We say that the \emph{strong converse} holds for the identification capacity
  if for all $\lambda_1+\lambda_2 < 1$,
  \[
    \lim_{n\rightarrow \infty} \frac{1}{n} \log\log N(n,\lambda_1,\lambda_2) = C_{\rm ID}(\cN),
  \]
  and similarly for $ C_{\rm ID}^{\rm sim}$.
\end{mydefinition}

The next theorem resolves one of the main open questions around identification 
capacities of quantum channels, dating back to Ref.~\citen{ahlswede2002strong}, and explicitly conjectured in Ref.~\citen{Winter:ID1}, 
by separating the simultaneous from the unrestricted version. 

\begin{mytheorem}
  \label{thm:C-ID-sim-of-qubit}
  The simultaneous identification capacity of the noiseless qubit 
  is $C_{\text{ID}}^{\text{sim}}(\id_2) = 1$, whereas the unrestricted 
  identification capacity is $C_{\text{ID}}(\id_2) = 2$.
  
  More generally, for a quantum channel $\cN:A\rightarrow B$, 
  $C_{\text{ID}}^{\text{sim}}(\cN) \leq \log\min\{|A|,|B|\}$, and this is a 
  strong converse bound: for all $\lambda_1,\lambda_2>0$ with 
  $\lambda_1+\lambda_2 < 1$,
  \[
     \limsup_{n\rightarrow\infty} \frac1n \log\log N_{{sim}}(n,\lambda_1,\lambda_2) \leq \log\min\{|A|,|B|\}.
  \]
\end{mytheorem}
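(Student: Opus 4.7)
The strategy is to exploit the distinguishing feature of simultaneous identification---that the receiver is committed to a single POVM---in order to reduce the task to classical identification on the induced classical channel. I would then invoke the Han-Verd\'u link between classical identification and classical resolvability, and lift the resulting resolvability bound to the quantum resolvability via Theorem \ref{thm:resolvability}. The bounds on the noiseless qubit then follow by specialization, together with an appeal to existing work for the unrestricted capacity $C_{\text{ID}}(\id_2)=2$.

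Given a simultaneous $(n,\lambda_1,\lambda_2)$-ID code $\{(\rho_i,D_i)\}_{i=1}^N$ with POVM $(E_t)_{t=1}^T$ and subsets $\mathcal{D}_i\subseteq\{1,\ldots,T\}$ such that $D_i=\sum_{t\in\mathcal{D}_i}E_t$, the induced classical distributions $P_i(t)\deq\Tr\bigl(\cN^{\otimes n}(\rho_i)\,E_t\bigr)$ directly inherit the identification conditions $P_i(\mathcal{D}_i)\geq 1-\lambda_1$ and $P_i(\mathcal{D}_j)\leq\lambda_2$ for $i\neq j$. Hence $N_{\text{sim}}(n,\lambda_1,\lambda_2)$ is bounded by the largest classical ID code size for the channel $\rho\mapsto P(\cdot\,|\,\rho)$, maximized over all POVMs on $B^{\otimes n}$. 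By Han-Verd\'u \cite{HanVerdu,han1993approximation}, the classical identification capacity of this channel is bounded (with strong converse) by its classical resolvability rate. Since any classical output distribution is the POVM marginal of $\cN^{\otimes n}(\rho)$, any trace-norm approximation of quantum outputs translates to a total-variation approximation of the corresponding classical distributions, so the classical resolvability of the induced channel is at most the quantum resolvability $\mathcal{R}(\cN)$. Combining with the corollary of Theorem \ref{thm:resolvability} yields
\[
C_{\text{ID}}^{\text{sim}}(\cN) \;\leq\; \mathcal{R}(\cN) \;\leq\; \widehat{Q}(\cN) \;\leq\; \log\min\{|A|,|B|\},
\]
the last being the trivial dimensional bound on the strong converse quantum capacity. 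Since each step preserves the strong converse property, this yields the stated bound uniformly for all $\lambda_1+\lambda_2<1$.

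Specializing to $\id_2$ gives $C_{\text{ID}}^{\text{sim}}(\id_2)\leq 1$. The matching lower bound is obtained by building a simultaneous ID code from a classical one on the noiseless classical bit channel obtained by measuring each qubit in the computational basis, whose classical identification capacity (Ahlswede-Dueck) equals its Shannon capacity of $1$ bit per use. The separation $C_{\text{ID}}(\id_2) = 2$ is imported from Refs.~\citen{Winter:ID1,Winter:ID2}, where the unrestricted identification capacity of $\id_d$ is shown to equal $2\log d$ via codes exploiting the $d^2$-dimensional real manifold structure of $\cD(\CC^d)$.

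The main obstacle I anticipate is preserving the strong converse throughout the chain, rather than only recovering a weak converse. This is precisely enabled by Theorem \ref{thm:resolvability}, whose one-shot smooth-min-entropy formulation gives a strong-converse bound on quantum resolvability, by the classical Han-Verd\'u strong converse for identification, and by the $\lambda$-independent nature of the dimensional bound on $\widehat{Q}(\cN)$. Thus the quantum soft-covering lemma developed earlier in the paper is exactly what allows the full strong converse statement to go through.
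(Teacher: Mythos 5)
The central claim in your chain, $C_{\text{ID}}^{\text{sim}}(\cN) \leq \mathcal{R}(\cN)$, is incorrect, and this is the crux of the gap. A quick sanity check: for an entanglement-breaking (or more generally PPT entanglement-binding) channel one has $\widehat{Q}(\cN)=0$, so your chain would force $C_{\text{ID}}^{\text{sim}}(\cN)=0$; but such a channel with positive classical capacity certainly has positive simultaneous ID capacity, so the inequality cannot hold in general.

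The source of the error is the Han--Verd\'u step. That argument bounds $\log N$ by the logarithm of the number of distinct approximating codebooks, i.e.~(roughly) $M\cdot\log|\cX|$ where $M$ is the codebook size and $\cX$ the input alphabet. In the classical i.i.d.~setting $|\cX|$ is a constant, so only $\log\log|\cX|$ survives in the double-logarithm and the ID rate is bounded by the resolvability rate alone. In the quantum case the ``input alphabet'' is the set of (pure) states on $A^{\ox n}$, whose covering nets have size doubly exponential in $n$: a $\delta$-net of pure states on a $d$-dimensional space has size $(5/\delta)^{2d}$, and here $d \sim |A|^n\cdot|R|$. Taking $\log\log$ therefore produces an unavoidable additive term $n\log|A|$, not merely $\log\log|A|$. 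Carried through correctly, your reduction proves $C_{\text{ID}}^{\text{sim}}(\cN) \leq \log|A| + \mathcal{R}(\cN) \leq \log|A| + \widehat{Q}(\cN)$ — which is exactly Theorem \ref{thm:C-id-upperbound} for the \emph{unrestricted} capacity, and in particular gives the useless $\leq 2$ for $\id_2$ rather than the desired $\leq 1$.

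What your approach misses is precisely the place where \emph{simultaneity} enters. The paper applies the quantum covering lemma (Theorem \ref{thm:CoveringOneShotSmooth}) not to $\cN^{\ox n}$ itself but to the qc-channel $\cM$ defined by the single POVM $(M_y)_y$ that coarse-grains to all the tests $D_m$. For that channel the post-measurement state $\omega^{A_R^nY}$ is a cq-state, hence separable, hence $H_{\min}(A_R^n|Y)_\omega \geq 0$. This forces the covering rank to be a \emph{constant} $r \leq 2/\eta^2$ rather than exponential in $n$, so the auxiliary purifying space $R$ has dimension $O(1)$ and the net lives on a space of dimension only $O(|A|^n)$, yielding $\log\log N \leq n\log|A| + O(1)$. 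Without the separability observation your method cannot recover the stated bound; with it, the reduction to classical resolvability and the detour through $\widehat{Q}(\cN)$ become unnecessary.

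The remaining items of your plan are fine: the lower bound $C_{\text{ID}}^{\text{sim}}(\id_2)\geq 1$ via the Ahlswede--Dueck construction, the simulation argument reducing $\cN$ to $\id_A$ (or $\id_B$), and importing $C_{\text{ID}}(\id_2)=2$ from Ref.~\citen{Winter:ID1} all match the paper.
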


\begin{proof}
The statements about the unrestricted identification capacity are 
from Ref.~\citen{Winter:ID1}, so only the simultaneous identification capacity 
has to be addressed. 
The lower bound $C_{\text{ID}}^{\text{sim}}(\id_2) \geq 1$ follows 
from Ahlswede and Dueck's construction \cite{AhlswedeDueck:ID} showing 
that every transmission capacity can be converted into the same amount 
of double-exponential identification capacity (cf.~Ref.~\citen{Loeber:PhD}). 

As a channel $\cN:A\rightarrow B$ can be simulated by $\id_A$ followed 
by post-processing at the receiver, and also by pre-processing at the 
sender followed by $\id_B$, every simultaneous identification code for 
$\cN$ gives rise to a simultaneous identification code for an ideal 
channel. Thus, we have to prove the upper bound only for $\id_A$. 
For that, consider a simultaneous identification code for $\id_A^{\ox n} = \id_{A^n}$
with $N$ messages and error probabilities $\lambda_1,\lambda_2>0$ such 
that $\lambda_1+\lambda_2 < 1$. To message $m$ is associated the state 
$\rho_m$ on $A^n$, and the binary decision POVM $(D_m,\1-D_m)$ on 
$A^n$.
The simultaneity condition requires that all these POVMs are 
compatible, i.e.~there is a single measurement $(M_y)_{y\in\cY}$ with 
an arbitrary number of outcomes such that each $(D_m,\1-D_m)$ is obtained 
as a coarse-graining of it. Define the qc-channel $\cM:A^n\rightarrow Y$ 
corresponding to the measurement, $\cM(\rho) = \sum_y (\Tr\rho M_y)\proj{y}$. 
Since the given code is ipso facto an identification code for $\cM$, 
we are motivated to apply Theorem \ref{thm:CoveringOneShotSmooth} to 
this channel, with $\delta=\eta=\frac{1}{24}(1-\lambda_1-\lambda_2)$. 
This gives us for each $m$ a state $\sigma_m$ of rank $r$ on $A^n$ with 
\[
  \frac{1}{2}\left\| \cM(\sigma_m)-\cM(\rho_m) \right\|_1 \leq \frac13(1-\lambda_1-\lambda_2) =: \Delta, 
\]
and $\log r \leq \left[-H_{\min}^\epsilon(A^{n}_R|Y)_\omega - 2\log\eta\right]^+$, where
\[
  \omega^{A^{n}_RY} = \sum_y \Tr_{A^n}\round{ (\1\ox M_y)\Phi_m^{A_R^n A^n}}\ox \proj{y}^Y,
\]
and $\Phi_m^{A_R^n A^n}$ is a purification of $\rho_m$.
The crucial insight now is that 
\[
  H_{\min}^\epsilon(A^{n}_R|Y)_\omega 
    \geq H_{\min}(A^{n}_R|Y)_\omega
    \geq 0, 
\]
since 
$\omega^{A^{n}_RY}$ 
is a cq-state, in particular separable. 
Thus, $r \leq \frac{2}{\eta^2}$, and we can choose purifications 
$\ket{\zeta_m} \in A^n\ox R$ of $\sigma_m$ with $|R|=\frac{2}{\eta^2}$.
At the same time, the collection of states $\sigma_m$, with the same 
measurements $(D_m,\1-D_m)$, is an identification code with error probabilities 
$\lambda_i' = \lambda_i + \Delta$. 

Next, we choose an $\delta$-net $\cZ$ of pure states on $A^n\ox R$, which is 
known to exist with cardinality $|\cZ| \leq \left(\frac{5}{\delta}\right)^{2|A^n||R|}$ \cite[Lemma III.6]{hayden2006aspects}. 
For each $m$, we can find an element $\xi_m\in\cZ$ with 
$\frac12\|\zeta_m-\xi_m\|_1 \leq \delta$, and hence the collection of states 
$\tau_m^{A^n} = \Tr_R \xi_m$ is an identification code with error probabilities 
$\lambda_i'' = \lambda_i + \Delta + \delta$. 
As $\lambda_1''+\lambda_2'' = \lambda_1+\lambda_2 + 2\Delta + 2\delta 
= 1-\frac14(1-\lambda_1-\lambda_2) < 1$, 
the states $\xi_m$ must be pairwise distinct, and that gives us 
\[
  N \leq |\cZ| \leq \left(\frac{5}{\delta}\right)^{2|A^n||R|},
\]
i.e.
\[
  \log\log N \leq n\log|A| + \log\frac{2304}{(1-\lambda_1-\lambda_2)^2} + \log\log\frac{120}{1-\lambda_1-\lambda_2},
\]
which is the desired bound. 
\end{proof}

\medskip
\begin{mycorollary}
  \label{cor:EB-bound}
  If the channel $\cN:A\rightarrow B$ is entanglement-breaking, or more generally 
  PPT-entanglement-binding, then 
  $C_{\text{ID}}(\cN) \leq \log|A|$, and this is a strong converse 
  bound for all $\lambda_1,\lambda_2>0$ with $\lambda_1+\lambda_2<1$. 
\end{mycorollary}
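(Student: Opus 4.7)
My strategy is to run the proof of Theorem \ref{thm:C-ID-sim-of-qubit} essentially verbatim, replacing the role played there by the simultaneity hypothesis with the entanglement-breaking (resp.\ PPT-binding) structure of $\cN$. In that proof, simultaneity was used solely to coarse-grain all decoding operators from a single POVM $(M_y)$, producing a measurement channel $\cM:A^n\to Y$ whose output-reference state $\omega^{A_R^n Y}$ was cq, hence separable, hence satisfied $H_{\min}(A_R^n|Y)_{\omega}\geq 0$. Here, the same non-negativity is available intrinsically: for any purification $\Phi_m^{A^n A_R^n}$ of $\rho_m$, the bipartite state $\omega_m^{B^n A_R^n}=(\cN^{\otimes n}\otimes\id)\Phi_m^{A^n A_R^n}$ is separable across the $B^n{:}A_R^n$ cut whenever $\cN$ is entanglement-breaking, and therefore $H_{\min}(A_R^n|B^n)_{\omega_m}\geq 0$.

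Given an $(n,\lambda_1,\lambda_2)$-ID code $\{(\rho_m,D_m)\}_{m=1}^N$ with $\lambda_1+\lambda_2<1$, I apply Theorem \ref{thm:CoveringOneShotSmooth} to the setup $(\cN^{\otimes n}(\rho_m),\cN^{\otimes n})$ with $\delta=\eta=\tfrac{1}{24}(1-\lambda_1-\lambda_2)$; since $H_{\min}^\delta\geq H_{\min}\geq 0$, this produces, for each $m$, a state $\sigma_m$ on $A^{\otimes n}$ of rank $r\leq 2/\eta^2$, independent of $n$ and of the code, with $\tfrac{1}{2}\|\cN^{\otimes n}(\sigma_m)-\cN^{\otimes n}(\rho_m)\|_1\leq \Delta:=\tfrac{1}{3}(1-\lambda_1-\lambda_2)$. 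The replacement $\rho_m\mapsto\sigma_m$ yields an identification code with error probabilities bounded by $\lambda_i+\Delta$, whose sum is strictly less than $1$. Purifying each $\sigma_m$ in $A^n\otimes R$ with $|R|=r$, and using a standard $\delta$-net $\cZ$ on pure states of $A^n\otimes R$ of cardinality $|\cZ|\leq (5/\delta)^{2|A^n||R|}$ from \cite[Lemma III.6]{hayden2006aspects}, I replace each purification by a nearby net element, keeping the total error bounded by $1-\tfrac{1}{4}(1-\lambda_1-\lambda_2)<1$. This forces the chosen net elements to be pairwise distinct, so $N\leq |\cZ|$, whence $\log\log N\leq n\log|A|+O(1)$ and the strong-converse bound $C_{\text{ID}}(\cN)\leq \log|A|$ follows for entanglement-breaking $\cN$.

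\textbf{Main obstacle.} The hard part is the PPT-entanglement-binding case: there $\omega_m$ is only PPT, not separable, so the inequality $H_{\min}(A_R^n|B^n)_{\omega_m}\geq 0$ is no longer automatic, and a state $\sigma_m$ of channel-independent rank is not a direct consequence of the vanilla covering lemma. My plan is to invoke a PPT-refined version of the covering lemma---obtained by restricting the decoupling step in the proof of Theorem \ref{thm:CoveringOneShotSmooth} to reference states in the PPT cone, in the spirit of the Rains-information bounds of \cite{TWW:q-strong}---so that the rank-$O(1/\eta^2)$ conclusion remains valid whenever $\omega_m$ is PPT. With that refinement substituted in, the net-counting step of the previous paragraph applies unchanged and delivers the same $\log|A|$ strong-converse bound, completing the corollary.
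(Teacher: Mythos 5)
Your entanglement-breaking argument is sound, and in fact it takes a slightly different route from the paper: you observe directly that for EB $\cN$ the state $\omega_m^{B^n A_R^n}=(\cN^{\otimes n}\otimes\id)\Phi_m$ is separable, hence $H_{\min}(A_R^n|B^n)_{\omega_m}\geq 0$, and then rerun the net-counting argument from Theorem~\ref{thm:C-ID-sim-of-qubit}. The paper instead decomposes $\cN = \cN'\circ\cM$ into a destructive measurement followed by a cq-channel, uses monotonicity of $C_{\rm ID}$ to reduce to $\cM$, and notes that identification via a measurement channel is automatically simultaneous, whence Theorem~\ref{thm:C-ID-sim-of-qubit} applies. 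Both are correct; your route is marginally more self-contained, while the paper's reduction makes clearer that EB is a special case of the simultaneity bound.

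The genuine gap is the PPT-entanglement-binding case, precisely the one you flagged as the ``main obstacle.'' Your proposed fix---a speculative ``PPT-refined covering lemma'' obtained by restricting the decoupling reference state to the PPT cone, in the style of Rains-information converses---is unnecessary and unproven; it is not at all clear such a lemma holds, nor that it would reproduce a rank bound of $O(1/\eta^2)$. The key observation you are missing is much simpler: a PPT state $\omega^{AB}$ satisfies the \emph{reduction criterion} $\omega^{AB} \leq \1_A\otimes\omega^B$ (Horodecki--Horodecki, \cite{HH:reduction}), and this operator inequality is exactly the statement that $H_{\min}(A|B)_\omega \geq 0$ (taking $\sigma^B = \omega^B$ and $\lambda = 0$ in the definition). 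Since $\cN^{\otimes n}$ is PPT-binding, the purified output state $\omega^{A_R^n B^n}$ is PPT across $A_R^n{:}B^n$, the reduction criterion applies, and the vanilla Theorem~\ref{thm:CoveringOneShotSmooth} already gives $\sigma_m$ of rank $\leq 2/\eta^2$ with no modification whatsoever. The rest of the net-counting argument then goes through verbatim. So no new covering lemma is needed---only the link between PPT and nonnegative conditional min-entropy via the reduction criterion, which your draft does not supply.
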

\begin{proof}
For entanglement-breaking channels, this follows from Theorem \ref{thm:C-ID-sim-of-qubit} 
because the channel is a composition of a destructive measurement $\cM$ with a 
cq-channel $\cN'$, i.e.~$\cN = \cN' \circ \cM$. Thus, 
\[
  C_{\text{ID}}(\cN) \leq C_{\text{ID}}(\cM) = C_{\text{ID}}^{\text{sim}}(\cM) \leq\log|A|,
\]
the middle equality because $\cM$ already outputs the result of a measurement, 
and the same for the maximum code size as a function of error probabilities 
and block length. 

The more general PPT-entanglement-binding case is even simpler: the 
states $\omega^{A_R^nB^n} = (\id_{A_R^n}\ox\cN^{\ox n})\Phi_\sigma^{ \bAn A^n}$ from the proof 
of Theorem \ref{thm:CoveringOneShotSmooth} (cf.~also Theorem \ref{thm:C-ID-sim-of-qubit})
are all PPT, in particular undistillable. 
Thus, the reduction criterion applies, 
$\omega^{A^n_RB^n} \leq \1_{A^n_R} \ox \omega^{B^n}$ \cite{HH:reduction}, hence
\[
  H_{\min}^\epsilon(A_R^{ n}|B^n)_\omega 
    \geq H_{\min}(A_R^{n}|B^n)_\omega
    \geq 0,
\]
and the rest of the argument is as in the proof of Theorem \ref{thm:C-ID-sim-of-qubit}.
\end{proof}

\medskip
\begin{mytheorem}
  \label{thm:C-id-upperbound}
  The classical identification capacity of a general quantum channel 
  $\cN:A\rightarrow B$ is bounded from above as  
  $C_{\text{ID}}(\cN) \leq \log|A| + \widehat{Q}(\cN)$, where 
  $\widehat{Q}(\cN)$ is the strong converse quantum capacity of $\cN$.
\end{mytheorem}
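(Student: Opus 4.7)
The plan is to adapt the strategy of Theorem \ref{thm:C-ID-sim-of-qubit} to the unrestricted (non-simultaneous) setting. In the simultaneous case, the cq-structure of the effective state $\omega^{A_R^n Y}$ forced $H_{\min}\geq 0$, giving a constant bound on $\log r$; in the unrestricted case this structure is absent, and the rank of the covering states must instead be controlled through the one-shot quantum transmission capacity, which is what will produce the additive $\widehat{Q}(\cN)$ term.

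Starting from any $(n,\lambda_1,\lambda_2)$ identification code $\{(\rho_m,D_m):m=1,\ldots,N\}$ for $\cN^{\otimes n}$ with $\lambda_1+\lambda_2<1$, I would apply Theorem \ref{thm:CoveringOneShotSmooth} to each $\rho_m$ under the channel $\cN^{\otimes n}$, with small smoothing parameters $\delta,\eta$, obtaining an approximation $\sigma_m$ of rank at most $r$ on $A^n$ with $\frac12\|\cN^{\otimes n}(\sigma_m)-\cN^{\otimes n}(\rho_m)\|_1 \leq 8(\delta+\eta)=:\Delta$ and
\[
  \log r \leq -H_{\min}^{\delta}(A_R^n|B^n)_{\omega_m} - 2\log\eta,
\]
where $\omega_m^{A_R^nB^n}=(\id\otimes\cN^{\otimes n})\Phi_{\rho_m}^{A_R^nA^n}$. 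The replaced collection $\{(\sigma_m,D_m)\}$ is still an identification code, with both error probabilities degraded by at most $\Delta$. I would then purify each $\sigma_m$ to $|\zeta_m\rangle\in A^n\otimes R$ with $|R|=r$, and fix a $\delta'$-net $\cZ$ of pure states on $A^n\otimes R$ of size $|\cZ|\leq (5/\delta')^{2|A^n|r}$ via \cite[Lemma III.6]{hayden2006aspects}. Replacing each $\zeta_m$ by its closest net element $\xi_m$ costs a further $\delta'$ in both kinds of error, and by choosing $\delta,\eta,\delta'$ so that $\lambda_1+\lambda_2+2\Delta+2\delta'<1$ the $\xi_m$ must be pairwise distinct, yielding $N \leq (5/\delta')^{2|A^n|r}$ and hence $\log\log N \leq n\log|A| + \log r + O(1)$.

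To convert $\log r$ into a bound in terms of $\widehat{Q}(\cN)$, I would first use Lemma \ref{lemma:H-max-min} to pass from the min- to the max-entropy,
\[
  -H_{\min}^{\delta}(A_R^n|B^n)_{\omega_m} \leq -H_{\max}^{\sqrt{1-\delta^4}}(A_R^n|B^n)_{\omega_m},
\]
and then invoke the one-shot transmission bound \eqref{eq:one-shot-Q} to obtain
\[
  -H_{\max}^{\sqrt{1-\delta^4}}(A_R^n|B^n)_{\omega_m} \leq \log N_E\bigl(\sqrt{1-\delta^4}+\mu \,\big|\, \cN^{\otimes n}\bigr) + 4\log(1/\mu)
\]
for any $\mu>0$. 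Choosing $\delta,\mu$ so that $\sqrt{1-\delta^4}+\mu$ is a fixed constant strictly less than $1$, the definition of the strong converse quantum capacity gives $\frac{1}{n}\log N_E(\,\cdot\,|\cN^{\otimes n})\leq \widehat{Q}(\cN)+o(1)$, whence $\frac{1}{n}\log r\leq \widehat{Q}(\cN)+o(1)$. Combining with the net-counting bound yields $\frac{1}{n}\log\log N\leq \log|A|+\widehat{Q}(\cN)+o(1)$, and taking $n\to\infty$ gives the claimed upper bound on $C_{\text{ID}}(\cN)$.

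The main obstacle will be the careful bookkeeping of the various smoothing parameters ($\delta,\eta,\delta',\mu$) so that the accumulated identification error stays strictly below $1$, thereby forcing distinctness of the net representatives, while the asymptotic rank bound still tightens to the single-letter quantity $\widehat{Q}(\cN)$ in the limit. This is the same style of parameter juggling as in Theorem \ref{thm:C-ID-sim-of-qubit}, but now with an added dependence on the strong-converse-capacity asymptotics, which is the genuine quantum ingredient distinguishing the unrestricted case from the simultaneous and entanglement-binding ones.
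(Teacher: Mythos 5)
Your proposal is correct and follows essentially the same route as the paper's (very terse) proof: the paper simply invokes the resolvability bound $\mathcal{R}(\cN)\leq\widehat{Q}(\cN)$ from Theorem~\ref{thm:resolvability} and then rerurns the net argument of Theorem~\ref{thm:C-ID-sim-of-qubit} with $|R|=2^{n\widehat{Q}(\cN)+o(n)}$. What you have done is inline the relevant portion of the proof of Theorem~\ref{thm:resolvability} --- namely the passage $H_{\min}\to H_{\max}$ via Lemma~\ref{lemma:H-max-min} followed by the one-shot transmission bound~\eqref{eq:one-shot-Q} --- which is exactly what the paper's cross-reference abbreviates.
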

\begin{proof}
This comes from the fact that the resolvability capacity of a quantum 
channel is upper bounded by the strong converse quantum capacity, 
$R(\cN) \leq \widehat{Q}(\cN)$. Then we use the net argument on the 
input as before in the proof of Theorem \ref{thm:C-ID-sim-of-qubit}, but with an auxiliary space of dimension $|R|=2^{n\widehat{Q}(\cN)+o(n)}$. This gives us the desired result.
\end{proof}


\section{Proofs of quantum source coding results}
\label{sec:lossy_proofs}
\subsection{Proof of Proposition \ref{prop:petz_reference}}
\label{app:petz_reference}
Consider a state $\rho^B \in \calD(B)$ and a channel $\calN_V: B \rightarrow A$. Let $V:B \rightarrow A\tensor E$ be an isometric extension of $\calN_V$ with $\dim(E) > \dim(B)$. Let the spectral decomposition of $\rho^A \deq \calN_V(\rho^B)$ and $\rho^B$ be given by  $\rho^A = \sum_{a\in \calA}\lambda_a \ketbra{a}$, and $\rho^B = \sum_{b\in \calB}\mu_b \ketbra{b}$, for some finite sets $\calA$ and $\calB$.
From Eq.~\eqref{eq:WV_postrefmap}, we have
\begin{align*}
    W\ket{a}^{\bA} = \sum_{b} \sqrt{\frac{\mu_b}{\lambda_a}} \ket{b}^{\bB} \tensor \langle a|V|b\rangle,
\end{align*}
where $\ket{a}^{\bA} \deq (\1_{\bA}\tensor \bra{a}^A)\ket{\Gamma}_{\bA A}$ and $\ket{b}^{\bB} \deq (\1_{\bB}\tensor \bra{b}^B)\ket{\Gamma}_{\bB B}$ for fixed maximally entangled states $\ket{\Gamma}_{\bA A}$ and $\ket{\Gamma}_{\bB B}$ on the Hilbert spaces $\bA\tensor A$ and $\bB\tensor B$, respectively. 

Consider an arbitrary density operator $\sigma^{\bA} \in \calD({\bA}),$ and let $\sigma^{\bA} = \sum_{a,a'} \theta_{aa'}\ket{a}\bra{a'}^{\bA}$ be its matrix representation, and then  $\sigma^{A} = (\sigma^{\bA})^T =\sum_{a,a'} \theta_{aa'}\ket{a'}\bra{a}^{A}.$  Then,
\begin{align}
    \calN_W(\sigma^{\bA}) & = \sum_{b,b'}\sum_{a,a'} \sqrt{\frac{\mu_b\mu_{b'}}{\lambda_a\lambda_{a'}}} \theta_{aa'} \ket{b}\bra{b'}^{\bB}  \sq{_B\langle b'|V^\dagger(|a'\rangle\langle a| \tensor \1_E)V|b\rangle_B} \nonumber \\
    & = \sum_{b,b'}\sum_{a,a'} \sqrt{{\mu_b\mu_{b'}}} \ket{b}\bra{b'}^{\bB} \sq{_B\langle b'|V^\dagger\round{(\rho^A)^{-1/2} \sigma^A(\rho^A)^{-1/2} \tensor \1_E}V|b\rangle_B} \nonumber \\
    & = \sum_{b,b'}\sum_{a,a'} \sqrt{{\mu_b\mu_{b'}}} \ket{b}\bra{b'}^{\bB}  \sq{_B\langle b'|\calN_V^\dagger\round{(\rho^A)^{-1/2} \sigma^A(\rho^A)^{-1/2}}|b\rangle_B} \nonumber \\
    &\stackrel{a}{=} \sum_{b,b'}\sum_{a,a'} \sqrt{{\mu_b\mu_{b'}}}  \ket{b}\bra{b'}^{\bB}  \sq{_{\bB}\langle b|\round{\calN_V^\dagger\round{(\rho^A)^{-1/2} \sigma^A(\rho^A)^{-1/2}}}^T|b'\rangle_{\bB}} \nonumber \\
    &= (\rho^{\bB})^{1/2} \round{\calN_V^\dagger\round{(\rho^A)^{-1/2} \sigma^A(\rho^A)^{-1/2}}}^T(\rho^{\bB})^{1/2}, \nonumber
\end{align}
where $(a)$ follows from the definition of transpose.
Therefore the action of a posterior reference map is equal to suitably defined Petz recovery channel acting on the reference Hilbert space.

\subsection{Proof of Lemma \ref{lem:coveringAndcompression}}\label{proof:coveringAndcompression}
The first part of the lemma follows using the monotonicity of trace norm by partial tracing the states in Eq.~\eqref{def:protocolError} over the subsystem ${A}^{\tensor n},$ observing that $ \Tr_{A^n}{\omega^{\bBn A^n}} = \round{\rho^{B_R}}^{\tensor n}$, and noting that the restriction of decoding being an isometry maintains the rank of $\omega^{A_R^n}$ to be at most $\Theta$.

For the converse direction, we perform the following analysis. Given a pair $(\rho^{\bB}, \calN_W)$, and an $(n,\Theta,\epsilon)$ covering code $\sigma^{A_R^n}$, let 
$\Psi^{A_R^n A^n}_{\sigma}$ be the canonical purification of 
$\sigma^{A_R^n}$. By Definition \ref{def:qu covering code} of a covering code, 
\begin{align}
  \frac12 \left\|\round{\rho^{B_R}}^{\tensor n} - \calN_W^{\tensor n}(\sigma^{\bAn})\right\|_1 \leq \epsilon.
  \label{eq:qsc_error}
\end{align}
Note that $|\sigma^{A_R^n}|=\Theta$. 
Define $\calH_{\sigma} \subset {A^n}$
as the support of $\Tr_{A_R^n} [\Psi^{A_R^n A^n}_{\sigma}]$, while noting that $\dim(\calH_{\sigma})=\Theta$. 
Let $W: {A_R} \rightarrow {B_R} \otimes E$ be a Stinespring extension of $\calN_W$ with $\mbox{dim}(E) \geq \mbox{dim}({B_R})$. 
Consider the quantum state 
\[
\omega^{B_R^n E^n A^n} \deq (W^{\tensor n} \otimes \1_{A^n}) \Psi^{A_R^n A^n}_{\sigma} (W^{\tensor n} \otimes \1_{A^n})^{\dagger},
\]
defined on ${B_R^n} \otimes {E^n} \otimes \calH_{\sigma}$. 
Let $\tilde{V}: {B^n} \rightarrow \calH_{\sigma} \otimes {E^n}$ be the posterior reference isometry of 
$W^{\tensor n}$ with respect to $\omega^{B_R^n}$. 
Note that $\tilde{V}$ is a non-product $n$-letter isometry. Using the property of the reference map, 
we have 
\[
\omega^{B_R^n E^n A^n}= (W^{\tensor n} \otimes \1_{A^n}) \Psi^{A_R^n A^n}_{\sigma} (W^{\tensor n} \otimes \1_{A^n})^\dagger
= (\1_{B_R^n} \otimes \tilde{V}) \Psi^{B_R^n B^n}_\omega (\1_{B_R^n} \otimes \tilde{V})^\dagger,
\]
where $\Psi^{B_R^n B^n}_\omega$ is the canonical purification of $\omega^{\bBn} = \calN_W^{\tensor n} (\sigma^{A_R^n})$. 
We use the CPTP map $\calN_{\tilde{V}}^{(n)}$ given by $\Tr_{E^n}[\tilde{V} \cdot \tilde{V}^{\dagger}]$ as the encoder-decoder pair
${\calD}^{(n)} \circ {\calE}^{(n)}$
for the lossy source coding protocol resulting in the quantum state 
\[
\tau^{B_R^n A^n} \deq 
(\id_{\bBn}\tensor \calN_{\tilde{V}}^{(n)})(\Psi_\rho^{B_R B})^{\tensor n} , 
\]
where $\Psi^{B_R B}_{\rho}$ is the canonical purification of $\rho^{B_R}$. 
Furthermore, define the quantum state 
\[
\lambda^{B_R^n A^n}
\deq  (\calN_W^{\tensor n}  \otimes \id_{A^n}) \Psi^{A_R^n A^n}_\tau, 
\]
where $\Psi^{A_R^n A^n}_\tau$ is the canonical purification of $\tau^{A^n}=\Tr_{B_R^n}[\tau^{B_R^n A^n}]$. 
Consider the trace distance between the states $\tau$ and $\lambda$ as follows:
\begin{align*}
\left\| \tau^{B_R^n A^n} - \lambda^{B_R^n A^n}\right\|_1 &\leq 
\left\|  \omega^{B_R^n A^n} - \tau^{B_R^n A^n} \right\|_1 +  
\left\| \omega^{B_R^n A^n} -  \lambda^{B_R^n A^n} \right\|_1 \\
& = \left\| (\id_{\bBn}\tensor \calN_{\tilde{V}}^{(n)})\round{\Psi_\omega^{B_R^n B^n} -(\Psi^{B_R B}_\rho)^{\tensor n}} \right\| \\
&\phantom{====}
 + \left\| (\calN_W^{\tensor n}\tensor \id_{A^n})\round{\Psi_\sigma^{A_R^n A^n} -  \Psi_\tau^{A_R^n A^n}} \right\|_1  \\
&\overset{(a)}{\leq} \| \Psi_\omega^{B_R^n B^n} -(\Psi_\rho^{B_R B})^{\tensor n} \|_1
+\|\Psi_\sigma^{A_R^n A^n} -  \Psi_\tau^{A_R^n A^n}\|_1 \\
&\overset{(b)}{\leq} 2\sqrt{\| \omega^{\bBn} -\rho^{B_R^{\tensor n}} \|_1}+2 \sqrt{\| \sigma^{A^n} -  \tau^{A^n}\|_1} \\
&\overset{(c)}{\leq} 2\sqrt{\epsilon} + 2 \sqrt{\| \calN_{\tilde{V}}^{(n)} (\omega^{B^n})-
\calN_{\tilde{V}}^{(n)} (\rho^{B^{\tensor n}})
\|_1}\\
&\overset{(d)}{\leq} 2\sqrt{\epsilon} + 2 \sqrt{\|  \omega^{B^n}-
\rho^{B^{\tensor n}} \|_1} \\
&\overset{(e)}{=} 2\sqrt{\epsilon} + 2 \sqrt{\|  \calN_W^{\tensor n}(\sigma^{\bAn})-
\rho^{B_R^{\tensor n}} \|_1} 
 \overset{(f)}{\leq} 4\sqrt{\epsilon},
\end{align*}
where (a) follows from the monotonicity of trace distance,  (b) follows from the Lemmas \ref{lem:relationshipTraceFidelity} and \ref{lem:closenessofPurification}, and appropriate identification of canonical purifications, (c) uses the covering bound Eq.~\eqref{eq:qsc_error} and defines $\omega^{B^n}\deq \Tr_{\bBn}{\Psi^{\bBn B^n}_\omega}$, and (d) again uses monotonicity of trace distance, (e) follows from the invariance of trace norm under transposition, and (f) again uses the covering property, Eq.~\eqref{eq:qsc_error}. Since $\calN_{\tilde{V}}^{(n)}$ is a CPTP map from ${\bBn}$ to $\calH_\sigma$, and $|\calH_\sigma| = \Theta$, the lossy source code also satisfies the rate constraint. This completes the proof.

\subsection{Proof of the converse in Theorem \ref{thm:CoveringOneShotLossy}}
\label{sec:oneshot_converse_lossy}

Given a pair $(\rho^B, \calN_W)$, and any $\epsilon \in (0,1)$,  consider an $(1,\Theta,\epsilon)$ lossy compression protocol with an encoding CPTP map $\calE$ and a decoding CPTP map $\calD$ that satisfies 
\begin{align}
\label{eq:cons_conv1}
\frac12 \left\|\omega^{\bB A} - \upsilon^{\bB A} \right\|_1 \leq \epsilon,
\end{align}
where $\omega^{\bB A} \deq (\id \tensor \calD\circ \calE) (\Psi_{\rho}^{\bB B})$, 
\[
\upsilon^{\bB A} =\Tr_{E} \upsilon^{{B}_R A E} \deq \Tr_{E}(\bV \otimes \1) \Psi_{\omega}^{\bA A}(\bV \otimes \1)^{\dagger},
\] and $\Psi_{\rho}^{\bB B}$ and $\Psi_{\omega}^{A \bA}$ are the canonical purifications of $\rho^B$ and $\omega^{A}$, respectively, and $\bV$ is the Stinespring's dilation of the CPTP map $\calN_W$. 
Let $\omega^{A_R}\deq \Tr_{A} \Psi_\omega^{A A_R}$.

Let $M$ denote the quantum state at the output of the encoder.
Let $V_\calE: {B} \rightarrow 
M \tensor {\tilde{E}_1}$ and 
$V_\calD:{M} \rightarrow {A} \tensor {\tilde{E}_2}$ be Stinespring dilations of encoding and decoding maps ${\calE}$ and 
${\calD}$, respectively, such that $\dim {\Tilde{E}_1} \geq \dim M$ and $\dim {\Tilde{E}_2} \geq \dim A$,  as shown in Figure \ref{fig:q_converse}(a).
Let
\begin{align}
    {\omega_1}^{\bB M \tilde{E}_1} &\deq (\1_{\bB}\tensor V_\calE)(\Psi_{\rho}^{\bB B})(\1_{\bB}\tensor V_\calE)^\dagger,\nonumber
     \\
     \quad {\omega}^{\bB \tilde{E}_1 \tilde{E}_2 A} &\deq (\1_{\bB\Tilde{E}_1}\tensor V_\calD)({\omega_1}^{\bB \tilde{E}_1 M})(\1_{\bB\Tilde{E}_1}\tensor V_\calD)^\dagger. \nonumber
\end{align}
\begin{figure}[ht]
    \centering
    \includegraphics[width=\textwidth]{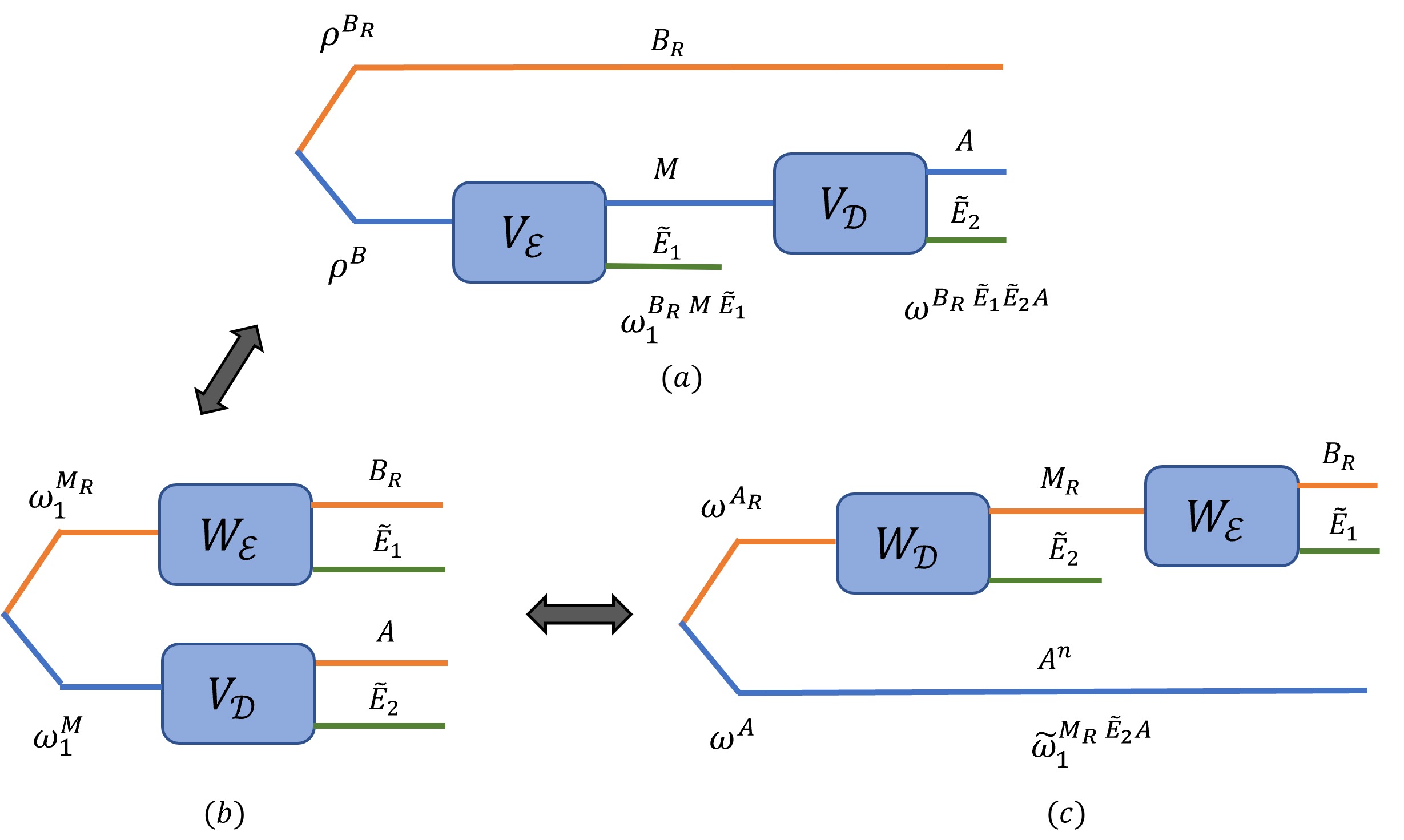}
    \vspace{-0.25in}
    \caption{A One-shot lossy quantum source coding protocol and the associated CPTP maps and their Stinespring dilations.}
    \label{fig:q_converse}
    \vspace{-0.2in}
\end{figure}

\noindent
Now let $\Psi_{\omega_1}^{M_R M}$ denote the canonical purification of the quantum state $\omega_1^M$, and let
$W_\calE^{(n)}: {M_R} \rightarrow {\bB}\tensor {\Tilde{E}_1}$ denote the posterior reference isometry (see Definition \ref{def:VBar}) of
$V_\calE$ with respect to $\omega_1^M$, as shown in Figure \ref{fig:q_converse}(b). Moreover, let 
$W_\calD: {\bA} \rightarrow {M_R} \tensor {\Tilde{E}_2}$ denote the posterior reference isometry of
$V_\calD$ with respect to $w^{A}$, as shown in Figure \ref{fig:q_converse}(c). Let $\calN_{W_{\calE}}(\cdot)=\Tr_{\tilde{E_1}}(W_{\calE} \cdot (W_{\calE})^{\dagger})$
and $\calN_{W_{\calD}}(\cdot)=\Tr_{\tilde{E_2}}(W_{\calD} \cdot (W_{\calD})^{\dagger})$
be the induced CPTP maps. 
Let \[
\tilde{\omega}_1^{M_R \tilde{E}_2 A} \deq (W_\calD\tensor \1_{A})(\Psi_{\omega}^{\bAn A})(W_\calD\tensor \1_{A})^\dagger.
\]
By the definition of the posterior maps, we have
\begin{align}
    {\omega}^{B_R \tilde{E}_1\tilde{E}_2A} = (W_\calE \tensor \1_{\Tilde{E}_2 A}) \tilde{\omega}_1^{M_R \tilde{E}_2 A}  (W_\calE \tensor \1_{\Tilde{E}_2 A}) ^\dagger. \nonumber
\end{align}
We have the following chain of inequalities:
\begin{align}
    \log\Theta & {\geq} \log \min \{\mbox{rank}\round{\tilde\omega_1^A},\mbox{rank}\round{\tilde\omega_1^{M_R}}\}\nonumber \\
    & \overset{(a)}{\geq} -\hmin(A|M_R)_{\tilde\omega_1} \nonumber \\
    & \overset{(b)}{\geq}  -\hmin(A|B_R)_{\omega}
    \nonumber \\
    & \overset{(c)}{\geq} -\hmin^{\sqrt{\epsilon}}(A|B_R)_{\upsilon} 
       \nonumber \\
    & \overset{(d)}{=} -\hmin^{\sqrt{\epsilon}}(A|B_R)_{\upsilon(\omega^{A_R})}
       \nonumber \\
    & \overset{(e)}{\geq} -\sup_{\sigma^{A_R}:\|\rho^{B_R} - \calN_W(\sigma^{A_R})\|_1 \leq \epsilon}\hmin^{\sqrt{\epsilon}}(A|B_R)_{\upsilon(\sigma^{A_R})}, \nonumber
\end{align}
where (a) follows from \cite[Lemma 5.11]{tomamichel2015quantum},
(b) follows from the quantum data processing inequality for min-entropy \cite[Corollary 3.5]{tomamichel2015quantum},
(c) follows from condition \eqref{eq:cons_conv1}, which implies that the corresponding purified distance is no greater than $\sqrt{\epsilon}$,
(d) follows by defining 
$ \upsilon(\omega^{A_R}) \deq (\calN_W \tensor \1_A)\Psi_{\omega}^{A_R A}$, and (e) again follows from condition (\ref{eq:cons_conv1}), the monotonicity of the trace distance, and by noting that  $\frac12 \left\| \Tr_A\round{\upsilon(\omega^{A_R})} -\rho^{\bB} \right\|_1 \leq \epsilon$.

\subsection{Proof of the converse in Theorem \ref{thm:mainResult_lossy}}
\label{sec:converse_asymp_lossy}
For the converse, we proceed as follows. Let $R$ be achievable. Using the result from Theorem \ref{thm:CoveringOneShotLossy}, 
for any $\epsilon>0$, we get
\begin{align*}
    nR &\geq \log \Theta -n \epsilon \geq \inf_{\sigma^{A^n_R}:\|(\rho^{B_R})^{\tensor n} - \calN_W^{\tensor n}(\sigma^{A^n_R})\|_1 \leq \epsilon}-\hmin^{\sqrt{\epsilon}}(A^n|B^n_R)_{\rho^{B_R^n A^n} }-n \epsilon, 
\end{align*}
where $\rho^{B_R^n A^n} \deq ( \id \tensor\calN_W^{\tensor n} )\Phi_{\sigma}^{A^n_R A^n}$ with $\Phi_{\sigma}^{A^n_R A^n}$ denoting a purification of $\sigma^{A^n}$.

Using \cite[Lemma 6.12]{tomamichel2015quantum}, we note that there exists an embedding from $A^n$ to $A'$ and a normalized state $\tau^{A'B^n}$ in $\mathscr{B}^{\sqrt{\epsilon}}(A'B_R^n; \rho^{B_R^nA'}))$ such that $\hmin^{\sqrt{\epsilon}}(A^n_R|B^n)_{\rho^{B_R^n,A'}} = \hmin(A'_R|B^n)_{\tau}$, where $\rho^{B_R^nA'}$ is the corresponding embedding of $\rho^{B_R^nA^n}$, with $\dim(A') = 2(\dim A)^n$.

We next use the fact that $\hmin(\cdot|\cdot) \leq S(\cdot|\cdot)$, and  Lemma \ref{lem:relationshipTraceFidelity} and the Alicki-Fannes-Winter continuity inequality for the conditional entropy \cite[Thm.~11.10.3]{wilde_arxivBook}
to infer that 
\[
 -\hmin^{\sqrt{\epsilon}}(A^n|B^n_R)_{\rho^{B_R^n A^n} } = \hmin(A'_R|B^n)_{\tau} \geq - S(A'|B^n_R)_{\tau} \geq -S(A^n|B^n_R)_{\rho} -n \tilde{\epsilon}_1, 
\]
where $\tilde{\epsilon}_1 \deq 2\sqrt{\epsilon}\log(2|A|) + \frac{1+\sqrt{\epsilon}}{n}h_2\left(\frac{\sqrt{\epsilon}}{1+\sqrt{\epsilon}}\right)$, and we use the invariance of conditional entropy under embeddings.
Thus we have 
\[
nR \geq \inf_{\sigma^{A^n_R}:\|(\rho^{B_R})^{\tensor n} - \calN_W^{\tensor n}(\sigma^{A^n_R})\|_1 \leq \epsilon}-S(A^n|B^n_R)_{\rho} -n \tilde{\epsilon}_1 -n \epsilon.
\]
Observe that the above equation is now similar to \eqref{eq:covering_converse} obtained in the converse of Theorem \ref{thm:asympCovering}.
Using similar arguments as in the converse of Theorem \ref{thm:asympCovering}
we obtain the desired result.



\section{Conclusions and future work} \label{sec:conclusion}
This work formulates the problem of quantum soft-covering in its most natural setting, and presents a one-shot characterization of the problem in terms of smoothed one-shot quantum entropic quantities. By leveraging the one-shot result, we provide a single-letter characterization of the optimal rate of the quantum soft-covering problem in terms of the minimal coherent information. We also contribute to the study of second-order asymptotics and provide achievability bounds for the same. As a part of future work, we aim to study optimal second order performance limits for the formulated quantum soft-covering problem. 
Furthermore, then we explore three applications of the quantum soft-covering lemma, including the characterization of a one-shot lossy quantum source coding problem, the formulation of a quantum channel resolvability problem, and separations between simultaneous and unrestricted capacities for identification via quantum channels. 
For the channel resolvability, we provide a lower bound, and conjecture that it is equal to the  quantum transmission capacity of the given quantum channel.

\section*{Acknowledgements}
The authors thank David Neuhoff and Mohammad Aamir Sohail for their contributions and discussions on the subject matter of the present paper, and Fr\'ed\'eric Dupuis for discussions and the permission to include his proof of Lemma \ref{lemma:H-max-min} in the present paper. AW furthermore acknowledges earlier conversations with Paul Cuff and Zahra Khanian regarding a different approach to quantum soft covering. 

TAA and SSP are supported in part by National Science Foundation grant CCF-2007878.
AW is supported by the European Commission QuantERA grant ExTRaQT 
(Spanish MICINN project PCI2022-132965), by the Spanish MICINN 
(project PID2019-107609GB-I00) with the support of FEDER funds, 
by the Spanish MICINN with funding from European Union NextGenerationEU 
(PRTR-C17.I1) and the Generalitat de Catalunya, by the Alexander 
von Humboldt Foundation, and the Institute for Advanced Study 
of the Technical University Munich.

\appendix
\section{Proof of Lemma \ref{lemma:H-max-min} (courtesy of Fr\'ed\'eric Dupuis)}
\label{sec:proof-Dupuis}
We start with the observation that Lemma \ref{lemma:H-max-min} is a consequence of the following more general statement. 

\begin{mylemma}
Let $\rho^{AB}$ be a state on $A\otimes B$, and $\sigma^B$ a state on $B$ with full support. Then, 
\[
  H_{\max}^{\sqrt{1-\epsilon^4}}(A|B)_{\rho|\sigma}
   \leq H_{\min}^\epsilon(A|B)_{\rho|\sigma}.
\]
\end{mylemma}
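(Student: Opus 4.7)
Write $\tau=\1_A\otimes\sigma^B$ and set $h=\hmin^\epsilon(A|B)_{\rho|\sigma}$. By the definition of the smoothed conditional min-entropy relative to the fixed reference $\sigma$, there exists a sub-normalized state $\tilde\rho\in\mathscr{B}^\epsilon(\rho)$ satisfying the operator inequality $\tilde\rho\leq 2^{-h}\tau$. The task thus reduces to exhibiting a sub-normalized $\tilde\rho'\in\mathscr{B}^{\sqrt{1-\epsilon^4}}(\rho)$ with $F(\tilde\rho',\tau)\leq 2^{h}$, as this immediately gives $\hmax(A|B)_{\tilde\rho'|\sigma}=\log F(\tilde\rho',\tau)\leq h$, and therefore the claimed bound $\hmax^{\sqrt{1-\epsilon^4}}(A|B)_{\rho|\sigma}\leq h$.

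\textbf{Construction.} My natural candidate is the ``$\tau$-squared'' state
\[
    \tilde\rho' \;=\; 2^{h}\,\tilde\rho\,\tau^{-1}\tilde\rho \;=\; 2^{h}\,\tau^{1/2}M^{2}\tau^{1/2},\qquad M=\tau^{-1/2}\tilde\rho\,\tau^{-1/2},
\]
which is well-defined on the support of $\tau$ since $\sigma$ has full support. The operator inequality $M\leq 2^{-h}\1$ yields $M^{2}\leq 2^{-h}M$ (by conjugating both sides with the PSD operator $M^{1/2}$), hence $\tilde\rho'\leq\tilde\rho$, so that $\tilde\rho'$ is automatically sub-normalized. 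When $\tilde\rho$ commutes with $\tau$, a direct spectral computation yields the clean identity $F(\tilde\rho',\tau)=2^{h}(\Tr\tilde\rho)^{2}\leq 2^{h}$, delivering the desired fidelity bound in the commuting (``classical'') case.

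\textbf{Main obstacles.} The technical heart of the proof lies in extending both estimates to the non-commuting setting. For the fidelity, using $\sqrt{\tilde\rho'}\sqrt\tau = 2^{h/2}|M\tau^{1/2}|\,\tau^{1/2}$ and polar decomposition, the problem reduces to bounding $\|M\tau\|_{1}$ in terms of $\Tr(M\tau)=\Tr\tilde\rho$. Since in general $\|M\tau\|_{1}\geq\Tr(M\tau)$ by Weyl's majorization of singular values over eigenvalues, the bare construction may need refinement; a natural remedy is to first pinch $\tilde\rho$ against the spectral decomposition of $\tau$, which preserves $\tilde\rho\leq 2^{-h}\tau$ and forces commutativity, at the price of controlling the extra purified distance introduced by the pinching. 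For the distance bound, $P(\rho,\tilde\rho')\leq\sqrt{1-\epsilon^4}$ is equivalent to $F_\star(\rho,\tilde\rho')\geq\epsilon^{4}$, a comparatively weak target given that $P(\rho,\tilde\rho)\leq\epsilon$ already yields $F_\star(\rho,\tilde\rho)\geq 1-\epsilon^{2}$; the specific form $\sqrt{1-\epsilon^4}$ of the smoothing parameter reflects the squaring inherent in $\tilde\rho' = 2^{h}\tau^{1/2}M^{2}\tau^{1/2}$, and the quantitative bound is expected to follow from an Uhlmann-type argument on the canonical purifications of $\rho$ and $\tilde\rho$, combined with the Fuchs--van~de~Graaf relation linking $F_\star$ to the trace norm.
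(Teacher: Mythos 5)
Your construction has a fatal structural flaw that no pinching or Uhlmann argument can repair. You build $\tilde\rho' = 2^{h}\tau^{1/2}M^{2}\tau^{1/2}$ from the smoothed optimizer $\tilde\rho$ and observe, correctly, that $\tilde\rho'\leq\tilde\rho$. But fidelity against a fixed normalized state is operator monotone in the second argument: since $0\leq\tilde\rho'\leq\tilde\rho$, the operator monotonicity of $X\mapsto\sqrt{X}$ gives $\sqrt{\sqrt\rho\,\tilde\rho'\sqrt\rho}\leq\sqrt{\sqrt\rho\,\tilde\rho\,\sqrt\rho}$ and hence $F(\rho,\tilde\rho')\leq F(\rho,\tilde\rho)$. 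At the optimum of the smoothed min-entropy SDP the ball constraint is tight, $P(\rho,\tilde\rho)=\epsilon$; since $\rho$ is normalized this means $F(\rho,\tilde\rho)=F_\star(\rho,\tilde\rho)=1-\epsilon^2$. You therefore have $F_\star(\rho,\tilde\rho')\leq 1-\epsilon^2$, whereas the lemma requires $F_\star(\rho,\tilde\rho')\geq\epsilon^4$. These are incompatible whenever $1-\epsilon^2<\epsilon^4$, i.e.\ for all $\epsilon^2>(\sqrt5-1)/2\approx 0.618$. So in the hard regime of $\epsilon$ close to $1$ — precisely where the constant $\sqrt{1-\epsilon^4}$ matters — any state dominated by $\tilde\rho$ is automatically too far from $\rho$. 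The remedy you propose (pinching $\tilde\rho$ against $\tau$) only makes the candidate smaller and cannot escape this obstruction.

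The paper's proof avoids exactly this trap. It does not build the witness for $\hmax$ from the smoothed state $\tilde\rho$ at all: it extracts a projector $P$ from the \emph{dual} SDP solution (the support of the dual variable $E_{AB}$) and sets $\rho'=P\rho P$, a truncation of the original $\rho$. Because $\rho'$ is a pinching of $\rho$ rather than something dominated by $\tilde\rho$, the overlap $F(\rho,\rho')=(\Tr\rho P)^2$ is controlled not by $F(\rho,\tilde\rho)$ but by $\Tr\rho P\geq\epsilon^2$, which follows from sandwiching the dual feasibility constraint with $P^\perp$ and invoking $t=(1-\epsilon^2)\lambda-\mu\geq 0$. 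The max-entropy bound for $\rho'$ then comes from complementary slackness, $\widetilde\rho E = t(\1\otimes\sigma)E$, which aligns $P\widetilde\rho P$ with $tP(\1\otimes\sigma)P$. In short: the key idea you are missing is that the right witness comes from the dual optimal solution, not from any algebraic transform of the primal one, and that the witness should be a compression of $\rho$, not of $\tilde\rho$.

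A secondary gap: even granting the distance bound, your fidelity estimate $F(\tilde\rho',\tau)\leq 2^h$ is only established in the commuting case. Your own Weyl-majorization observation $\norm{M\tau}_1\geq\Tr(M\tau)$ shows the naive argument goes the wrong way in general, and the pinching fix is left unquantified. So even apart from the structural issue above, the proposal is not a proof.
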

\noindent
Here, recall the definitions of the conditional min- and max-entropies relative to a fixed state \cite{tomamichel2012framework}:
\[\begin{split}
  H_{\min}^\epsilon(A|B)_{\rho|\sigma} 
   &= -\log \min t \text{ s.t. } \widetilde{\rho} \leq t\1_A\otimes\sigma^B,\ P(\rho,\widetilde{\rho}) \leq \epsilon,\ \Tr\widetilde{\rho}\leq 1,\\
  H_{\max}^\delta(A|B)_{\rho|\sigma} 
   &= \phantom{-} 
     \log \min F(\widetilde{\rho},\1_A\otimes\sigma^B) \text{ s.t. } P(\rho,\widetilde{\rho}) \leq \delta, \ \Tr\widetilde{\rho}\leq 1.
\end{split}\]
The familiar min- and max-entropies we used in the article result from further optimisation of the state $\sigma^B$. 

\begin{proof}
The above definition of $H_{\min}^\epsilon(A|B)_{\rho|\sigma}$ is a semidefinite program (SDP) \cite{Boyd-SDP}, as we can see when expanding it using a purification $\psi^{ABC}$ of $\rho^{AB}$: 
\[\begin{split}
  2^{-H_{\min}^\epsilon(A|B)_{\rho|\sigma}}
   &= \min t \text{ s.t. } \widetilde{\rho}^{AB} \leq t\1_A\otimes\sigma^B,\ \Tr\widetilde{\rho}^{ABC}\psi^{ABC} \geq 1-\epsilon^2,\\
   &\phantom{\min t \text{ s.t. }}
   \widetilde{\rho}^{ABC}\geq 0,\ \Tr\widetilde{\rho}\leq 1. 
\end{split}\]
Its dual SDP reads 
\[\begin{split}
  \max\, (1-\epsilon^2)\lambda-\mu \text{ s.t. }
   \Tr\sigma^B E_B \leq 1,\ 
   \lambda\psi^{ABC} \leq E_{AB}\otimes\1_C + \mu\1_{ABC},\ E_{AB}\geq 0.
\end{split}\]
Because strong duality holds for this pair of mutually dual SDPs, they have the same values. From now on fix primal and dual optimal solutions, $t$, $\widetilde{\rho}^{ABC}$ and $\lambda$, $\mu$, $E_{AB}$, respectively. Furthermore, let $P_{AB}$ be the projector onto the support of $E_{AB}$ and denote $P^\perp = \1_{AB}-P$. 

To start, from $\lambda\psi^{ABC} \leq E_{AB}\otimes\1_C + \mu\1_{ABC}$ we get, by sandwiching left and right hand side with $P^\perp\otimes\1_C$: 
\[
  \lambda(P^\perp\otimes\1_C)\psi^{ABC}(P^\perp\otimes\1_C)
   \leq \mu(P^\perp\otimes\1_C),
\]
and hence, observing that the left hand side here has rank one, $\lambda\Tr\rho^{AB}P^\perp \leq \mu$. Now, since by optimality of the solutions $t = (1-\epsilon^2)\lambda-\mu \geq 0$, it follows that 
$\lambda\Tr\rho^{AB}P^\perp \leq \mu \leq (1-\epsilon^2)\lambda$. That is,
\begin{equation}
  \label{eq:rho-P-trace}
  \Tr\rho^{AB}P^\perp \leq 1-\epsilon^2,
  \quad
  \Tr\rho^{AB}P \geq \epsilon^2. 
\end{equation}

Consider now the subnormalised state $\rho'=P\rho P$ on $AB$ with purification $\psi'=(P\otimes\1_C)\psi(P\otimes\1_C)$. We will show that its max-entropy relative to $\sigma^B$ is bounded by $H_{\min}^\epsilon(A|B)_{\rho|\sigma}$ and that it is within $\sqrt{1-\epsilon^4}$ of $\rho$ in purified distance. Beginning with the former, we have
\[\begin{split}
  2^{H_{\max}(A|B)_{\rho'|\sigma}} 
   &=  F\left(P\rho P, \1_A\otimes\sigma^B\right) \\
   &\leq F\left(P\rho P, P(\1_A\otimes\sigma^B)P+P^\perp(\1_A\otimes\sigma^B)P^\perp\right) \\
   &= F\left(P\rho P, P(\1_A\otimes\sigma^B)P\right) \\
   &= \frac{1}{t} F\left(P\rho P, P\widetilde{\rho}P\right) 
    \leq \frac{1}{t} 
    = 2^{H_{\min}^\epsilon(A|B)_{\rho|\sigma}}.
\end{split}\]
Here, the second line follows from the monotonicity of the fidelity under the CPTP pinching map of $P$ and $P^\perp$; the fourth line follows from complementary slackness \cite{Boyd-SDP}, $\widetilde{\rho}^{AB} E_{AB} = t(\1_A\otimes\sigma^B)E_{AB}$, which carries over from $E$ to its support projector $P$. 

On the other hand, invoking Eq.~\eqref{eq:rho-P-trace},
\[\begin{split}
  F(\rho,\rho') 
   &\geq F(\psi,\psi')
    =    \Tr\psi(P\otimes\1_C)\psi(P\otimes\1_C) \\
   &=    \left|\bra{\psi}(P\otimes\1_C)\ket{\psi}\right|^2 \\
   &=    \left(\Tr\rho^{AB}P\right)^2
    \geq \epsilon^4,
\end{split}\]
and hence $P(\rho,\rho') \leq \sqrt{1-\epsilon^4}$ as claimed.
\end{proof}






\bibliographystyle{unsrturl}
\bibliography{references}

\end{document}